\newcommand{\E}[1]{\mathbb{E}\left(#1\right)}
\newcommand{\Var}[1]{\operatorname{Var}\bigl(#1\bigr)}
\newcommand{\V}[1]{\mathbb{V}\left(#1\right)}
\newcommand{\Vhat}[1]{\widehat{\mathbb{V}}\left(#1\right)}
\newcommand{\Vhatsub}[2]{\widehat{\mathbb{V}}_{#1}\left(#2\right)}
\newcommand{\Vbarsub}[2]{\bar{\mathbb{V}}_{#1}\left(#2\right)}
\newcommand{\DeltaVhatsub}[2]{\widehat{\Delta \mathbb{V}}_{#1}\left(#2\right)}
\newcommand{\Cov}[2]{\operatorname{Cov}\left(#1, #2\right)}
\newcommand{\Covs}[2]{\mathbb{C}_{M_{\ell}}\left(#1, #2\right)}
\newcommand{\Covsup}[3]{\operatorname{Cov}^{#1}\left(#2, #3\right)}
\newcommand{\Bias}[1]{\operatorname{Bias}\left(#1\right)}
\newcommand{\Msup}[2]{\mathbb{M}^{#1}\left(#2\right)}
\newcommand{\Esup}[2]{\left[\mathbb{E}\left(#2\right)\right]^{#1}}
\newcommand{\Varsup}[2]{\left[\operatorname{Var}\left(#2\right)\right]^{#1}}
\newcommand{\converge}[1]{\xrightarrow[]{~~#1~~}}
\newcommand{\bfone}[1]{\mathbf{1}\left\{ #1 \right\}}
\newcommand{\zhat}[1][\theta]{\widehat{Z}\left( #1 \right)}
\newcommand{\zhatsub}[3][\theta]{\widehat{Z}_{#2}\left( #1, M_{#3,\epsilon}, \varpi_{#3} \right)}
\newcommand{\deltaZhatsub}[3][\theta]{\widehat{\Delta{Z}}_{#2}\left( #1, M_{#3,\epsilon}, \varpi_{#3} \right)}
\newcommand{\deltaZhatsubshort}[4][\theta]{\widehat{\Delta{Z}}^{#2}_{#4}\left( #1, \varpi_{#3} \right)}
\DeclarePairedDelimiter{\ceil}{\lceil}{\rceil}
\DeclareMathOperator*{\argmax}{arg\,max}
\def\cA{\mathcal{A}}
\def\cL{\mathcal{L}}
\def\cN{\mathcal{N}}
\def\cO{\mathcal{O}}
\def\cP{\mathcal{P}}
\def\cT{\mathcal{T}}
\def\cV{\mathcal{V}}
\def\cY{\mathcal{Y}}
\def\cZ{\mathcal{Z}} 
\def\b1{{\mathbf 1}}
\newcommand{\algmultiline}[1]{%
    \begin{tabular}[t]{@{}p{\dimexpr\linewidth-\ALG@thistlm}@{}}
        #1
    \end{tabular}
}
\title{Multilevel Monte Carlo Metamodeling for Variance Function Estimation\thanks{Submitted to the editors on September 1, 2024.\funding{This work was  supported by the National Science Foundation CAREER grant [CMMI-1846663].}}}
\author{Jingtao Zhang\thanks{Grado Department of Industrial and Systems Engineering, Virginia Tech, USA (\email{jingtaozhang@vt.edu}, \email{xchen6@vt.edu}).} \and  Xi Chen\footnotemark[2]}
\begin{document}  

\maketitle

\begin{abstract}
This work introduces a novel multilevel Monte Carlo (MLMC) metamodeling approach for variance function estimation. Although devising an efficient experimental design for simulation metamodeling can be elusive, the MLMC-based approach addresses this challenge by dynamically adjusting the number of design points and budget allocation at each level, thereby automatically creating an efficient design. Theoretical analyses show that, under mild conditions, the proposed MLMC metamodeling approach for variance function estimation can achieve superior computational efficiency compared to standard Monte Carlo metamodeling while achieving the desired level of accuracy. Additionally, this work establishes the asymptotic normality of the MLMC metamodeling estimator under certain sufficient conditions, providing valuable insights for uncertainty quantification. Finally, two MLMC metamodeling procedures are proposed for variance function estimation: one to achieve a target accuracy level and another to efficiently utilize a fixed computational budget. Numerical evaluations support the theoretical results and demonstrate the potential of the proposed approach in facilitating global sensitivity analysis.  
\end{abstract}

\begin{keywords}
simulation metamodeling, multilevel Monte Carlo, central limit theorem, design of simulation experiment
\end{keywords}

\begin{MSCcodes}
62D05, 62K20, 62L12, 65C05, 68Q25
\end{MSCcodes}

\section{Introduction}
The need to perform data analysis under heteroscedasticity, which means nonconstant variance across the input space, frequently arises in various fields. Examples abound, ranging from engineering and economics to medical and physical sciences: robust optimization utilizing mean and variance information \cite{dellino2012robust, iwazaki2021mean,postek2018robust,yanikouglu2016robust}, mean-variance portfolio optimization \cite{lai2011mean}, heteroscedastic regression analysis of e-commerce and cross-sectional data \cite{van2010semiparametric, weltz2023experimental}, volatility modeling and analysis of financial returns \cite{wu2014gaussian}, and variance-based global sensitivity analysis \cite{marrel2012global}. The prevalence of such a problem has given rise to considerable interest in developing novel design and analysis methods for variance function estimation. In this work, we propose a multilevel Monte Carlo (MLMC) metamodeling approach for variance function estimation in the stochastic simulation context.

Simulation metamodeling is a simulation modeling and analysis technique that involves developing a simplified mathematical or statistical model (a metamodel) to approximate the performance measure of interest in a complex simulation model as a function of the input variables. The metamodel is built using data from a designed simulation experiment and serves as a valuable tool for reducing computational costs and accelerating the decision-making process \cite{ankenman2010stochastic}. 
Variance function estimation through simulation metamodeling involves two key components: the design of simulation experiments and the methods for approximating the variance function. 
Various approaches are available for estimation purposes, including parametric methods \cite{atkinson1995d, goos2001optimal, vining1996experimental}, semi-parametric methods \cite{van2010semiparametric}, and nonparametric methods. Nonparametric techniques, such as splines, kernel smoothing, and Gaussian process regression, are widely used in practical applications due to their flexibility and robustness \cite{brown2007variance, cai2019simultaneous, dellino2012robust, liu2007smoothing, wang2018adaptive}.
	
While most simulation metamodeling literature focuses on mean function estimation, simulation metamodeling for variance function estimation remains relatively underdeveloped. In particular,  successful applications of simulation metamodeling depend heavily on meticulous experimental design, which remains an area of exploration for variance function estimation. This work addresses this gap by combining MLMC with metamodeling techniques, resulting in an efficient experimental design for variance function estimation via simulation metamodeling. Notably, our proposed MLMC metamodeling approach does not impose using a specific estimation method, offering the flexibility to employ suitable techniques for variance function estimation.

MLMC is an advanced computational technique that enhances the efficiency of standard Monte Carlo (SMC) methods for estimating quantities of interest in stochastic simulation. MLMC was initially devised for parametric integration and its associated applications \cite{heinrich1998monte, heinrich2000multilevel, heinrich2001multilevel,heinrich2006monte,heinrich1999monte}. Kerbaier (2005) extended its application to path simulations by introducing a two-level MC framework \cite{kebaier2005statistical}. Giles (2008) subsequently generalized this framework to accommodate multiple levels in MC path simulations, effectively reducing the computational complexity associated with estimating expected values stemming from stochastic differential equations \cite{giles2008multilevel}. 
Haji-Ali et al.\ (2016) proposed the multi-index Monte Carlo method, which integrates the sparse grid concept to extend MLMC, enabling more efficient handling of high-dimensional integration \cite{haji2016multi}.
The introduction of MLMC has led to  several related research endeavors;  see, e.g., \cite{barth2011multi} and \cite{cliffe2011multilevel}.   
While the initial focus of MLMC research  centered on estimating expectations, the field has since broadened its horizons to encompass a variety of statistical parameters, including nested expectation \cite{giles2019multilevel}, distribution functions  and densities \cite{giles2015multilevel}, failure probabilities \cite{elfverson2016multilevel, wagner2020multilevel}, variances and covariances \cite{bierig2015convergence, mycek2019multilevel}, and higher-order central moments \cite{bierig2016estimation}. 
Although the original MLMC framework was developed for point estimation, subsequent research has extended it to function estimation. For example, Krumscheid and Nobile (2018) investigated the estimation of parametric expectations using interpolation techniques  \cite{krumscheid2018multilevel}. Additionally, Chernov and Schetzke (2023) proposed a bias-free MLMC method for approximating the covariance functions of sufficiently regular random fields in tensor product Hilbert spaces. Their approach, based on the MLMC technique, achieves nearly optimal computational complexity \cite{chernov2023simple}.   
  The pioneering work by Rosenbaum and Staum (2017) introduced the concept of MLMC metamodeling, integrating MLMC into simulation metamodeling for mean function estimation \cite{rosenbaum2017multilevel}. It showcased superior computational efficiency compared to conventional simulation metamodeling methods relying on SMC. A defining strength of MLMC metamodeling lies in its ability to create an efficient experimental design by adaptively expanding design points and allocating the simulation budget. This dynamic approach, involving integrated design and analysis based on the already obtained simulation outputs, effectively balances computational efficiency and efficacy, potentially heralding a new paradigm for simulation metamodeling.

This work proposes an efficient MLMC metamodeling approach designed explicitly for variance function estimation, inspired by the methodologies in \cite{rosenbaum2017multilevel} and \cite{mycek2019multilevel}. The approach constructs a variance function estimator as a telescoping sum of metamodeling estimators, each corresponding to a specific level of accuracy. The accuracy of each estimator is determined by the level-specific experimental design used to run the simulation model and generate outputs. At each level, the design specifies the number of design points and the number of replications at each point required to run the simulation model. As the design level increases, both the accuracy of the corresponding level-specific metamodeling estimator and the computational cost to generate simulation outputs according to the specified design also increase. By strategically combining metamodeling estimators from different design levels, the proposed approach delivers an accurate variance function estimator with significant computational savings compared to SMC metamodeling.
We emphasize the key differences between our approach and the existing literature. 
Rosenbaum and Staum (2017) investigated MLMC metamodeling for mean function estimation, demonstrating significant computational efficiency underpinned by rigorous theoretical foundations \cite{rosenbaum2017multilevel}. Their approach uses sample means as point estimators at design points, supported by a carefully structured experimental design. However, this methodology does not directly extend to high-order moment functions due to its dependence on the linearity of sample means. Since this work focuses on variance function estimation, we conduct detailed analyses based on the properties of variance estimates. Our approach also diverges from the classical MLMC methods for point estimation of variance explored by Mycek and De Lozzo (2019) \cite{mycek2019multilevel}. Their method employs an MLMC point estimator with level-specific estimators that are sample variances obtained from running simulation models of varying levels of accuracy.
In contrast, our work employs a single simulation model and develops an MLMC metamodeling estimator that includes level-specific variance function estimators with varying statistical properties; these properties are determined by an experimental design tailored to each design level. Furthermore, while Mycek and De Lozzo (2019) based their theoretical analysis on the properties of sample variances, our approach adopts a function approximation perspective.

Our contribution encompasses three fundamental aspects.
First, we extend the MLMC metamodeling methodology to encompass variance function estimation and provide an explicit, novel MLMC metamodeling estimator tailored for this purpose. We broaden the theory related to the computational complexity of MLMC metamodeling, offering theoretical insights into its efficiency for variance function estimation.
Second, we conduct a comprehensive asymptotic analysis of the different components involved in the MLMC metamodeling estimator and establish their asymptotic normality under mild technical conditions. One key finding is that a computationally efficient MLMC metamodeling estimator exhibits asymptotic normality.
Third, guided by our theoretical findings, we propose two variance function estimation procedures via MLMC metamodeling. The first procedure aims to achieve a target mean integrated squared error level, while the second is suitable for scenarios with a fixed computational budget. Both procedures have versatile applications, and our numerical experiments demonstrate their superior efficiency and efficacy compared to SMC metamodeling.

The remainder of this work is organized as follows. Section \ref{sec:mlmc} provides an overview of MLMC metamodeling for variance function estimation. Section \ref{sec:theory} presents theoretical analyses of the proposed approach. Section \ref{sec:clt} establishes the asymptotic normality of the MLMC metamodeling estimator  under mild technical conditions. Section \ref{sec: mlmc_procedure} proposes two MLMC metamodeling procedures for different implementation purposes. Section~\ref{sec:exp} demonstrates the performance of MLMC metamodeling  for  variance function estimation in comparison with SMC metamodeling through numerical studies. Finally, Section \ref{sec:conclusion} concludes the paper.

\section{Multilevel Monte Carlo Metamodeling for Variance Function Estimation}\label{sec:mlmc}
\numberwithin{equation}{section}

This section introduces the concept underpinning MLMC metamodeling for variance function estimation.  Subsection \ref{subsec:setup} provides the basic setup, while Subsection \ref{subsec:MLMCM} presents MLMC metamodeling specifically for variance function estimation.

\subsection{Overview of the Basic Setup}\label{subsec:setup}
We consider a simulation model with an output $\cY(\theta, \omega): \Theta \times \Omega \to \mathbb{R}$, which is a measurable function in the probability space $(\Omega, \mathcal{F}, \mathbb{P})$. Here, $\theta$ represents the $d$-dimensional input vector within the input space $\Theta \subseteq \mathbb{R}^{d}$, while $\omega$ denotes a realization drawn from $\Omega$ using a random number stream $\varpi$, capturing the inherent randomness of the simulation model. For simplicity, we use the shorthand notation $\cY(\theta)$ to represent $\cY(\theta, \omega)$ when there is no risk of ambiguity. Define the variance of the simulation outputs at a  point $\theta \in \Theta$ as $\Var{\cY(\theta, \omega)} \coloneqq \int \left(\cY(\theta, \omega) - \E{\cY(\theta, \omega)} \right)^2\,d \mathbb{P}(\omega)$, where the mean of the outputs at $\theta$ is given by $\E{\cY(\theta, \omega)} \coloneqq \int \cY(\theta, \omega)\,d \mathbb{P}(\omega)$. Given a sample of $M$ outputs $\{\cY(\theta, \omega_1), \cY(\theta, \omega_2), \dots, \cY(\theta, \omega_M) \}$ generated at $\theta$, where $\omega_1, \omega_2, \dots, \omega_M$ are drawn using a random number stream $\varpi$,  the sample variance can be calculated as  $\cV(\theta, M, \varpi) \coloneqq (M-1)^{-1}\sum_{m=1}^{M}(\cY(\theta, \omega_m) - \bar{\cY}(\theta))^2$, with $\bar{\cY}(\theta) \coloneqq M^{-1}\sum_{m=1}^{M}\cY(\theta, \omega_m)$ denoting the sample average output at  $\theta$.  Our primary interest lies in estimating the variance function, $\V{\cdot}$, defined as $\V{\theta} \coloneqq \Var{\cY(\theta, \omega)}$ for any $\theta \in \Theta$, with its estimator denoted by $\Vhat{\cdot}$. 

In the remainder of this work, the following notation is consistently adopted. Define $[N] \coloneqq \{0, 1, \dots, N\}$ and $[N]^{+} \coloneqq [N] \setminus \{0\}$. Let $\lceil \cdot \rceil$ denote the ceiling function, and $|\mathcal{P}|$ denote the cardinality of  set $\mathcal{P}$. The function $\bfone{\cdot}$ denotes the indicator function, and $\|\cdot\|_p$ denotes the $p$-norm for $p \geq 1$. For the $2$-norm, we use the shorthand $\|\cdot\|$ when there is no risk of confusion. Additionally,  define $ \mbox{diam}(\Theta) := \sup\{\|\theta-\theta^\prime\|:  \theta,\theta^\prime \in \Theta\}$ as the maximum distance between two input points in $ \Theta $. We use $\Longrightarrow$ to denote convergence in distribution (page 116, \cite{durrett2019probability})  and $\converge{p}$ to denote convergence in probability (page 56, \cite{durrett2019probability}).

\subsection{Variance Function Estimation via Multilevel Monte Carlo Metamodeling}\label{subsec:MLMCM}

A brief summary of classical MLMC will facilitate our discussion of MLMC metamodeling. MLMC is a powerful computational tool, especially useful for point estimation of expectations or quantities of interest arising from complex stochastic systems.
The key idea behind MLMC is to estimate the quantity of interest by using multiple levels of computational (or simulation) models, each corresponding to a different level of discretization or resolution. Each level has its own computational cost and accuracy. Typically, finer levels provide more accurate estimators but are computationally expensive, while coarser levels are cheaper but less accurate. MLMC leverages the correlation between estimators at different levels, stemming from their shared underlying randomness. By strategically aggregating estimators from these levels, MLMC can achieve an accurate estimator with significant computational savings compared to standard Monte Carlo (SMC) methods.
In the same spirit, we explore MLMC metamodeling, which uses a hierarchy of metamodel-based estimators to estimate the variance function 
 $\V{\cdot}$. This hierarchy comprises metamodels at different levels of computational cost and accuracy.

To facilitate the understanding of MLMC metamodeling and highlight its advantages over SMC metamodeling, we first provide a concise review of the latter.
SMC metamodeling typically employs an experimental design that consists of a single level of design points. To estimate the variance function $\V{\cdot}$, consider an SMC metamodeling estimator taking the form of linear smoothers: $\Vhatsub{}{\theta, M, \varpi} = \sum_{i = 1}^{N}w_i(\theta)\cV(\theta_i, M, \varpi)$ for any $\theta \in \Theta$, where $\cV(\theta_i, M, \varpi)$ denotes the sample variance obtained from running $M$ independent simulation replications at design point $\theta_i$ using the random number stream $\varpi$, and $w_i(\theta)$ denotes the weight for $\cV(\theta_i, M, \varpi)$ for any $i\in [N]^{+}$.  
Constructing an SMC metamodel that adequately approximates $\V{\cdot}$ globally typically requires a large design-point set and a high number of replications (i.e., large $N$ and $M$), resulting in significant computational costs \cite{rosenbaum2017multilevel, wang2018adaptive}. This limitation underscores the necessity for a more efficient experimental design for variance function estimation through simulation metamodeling.

Instead of relying on a single level of design points,  MLMC metamodeling considers using $L+1$ levels of design-point sets. Specifically, on the $\ell$th level, the design-point set $\mathcal{T}_{\ell} $ contains $N_{\ell}:=|\cT_\ell|$ design points, for any $\ell \in [L]$. As the design level $\ell$ increases, the design points become increasingly dense in $\Theta$, i.e., $N_{\ell} > N_{\ell-1}$ for any $\ell \in [L]$, with $N_{-1} \coloneqq 0$ for consistency.
The MLMC metamodeling estimator is constructed from a collection of level-specific estimators which serve as  variance function estimators at different levels.
Specifically, the  $\ell$th level variance function estimator is an SMC metamodeling estimator,  derived by running $ M_{\ell}$ independent replications at each design point in $\mathcal{T}_{\ell}$ using the random number stream $\varpi_{\ell}$, for any $\ell \in [L]$.   \emph{This $\ell$th level variance function estimator} is expressed as follows:
\begin{equation}
	\label{eq:metamodel}
	\Vhatsub{\ell}{\theta, M_{\ell}, \varpi_{\ell}} = \sum_{i = 1}^{N_{\ell}}w_i^{\ell}(\theta)\cV(\theta_i^{\ell}, M_{\ell}, \varpi_{\ell}) \ ,
\end{equation}
where $\cV(\theta_i^{\ell}, M_{\ell}, \varpi_{\ell})$ denotes the sample variance obtained at design point $\theta_i^{\ell} \in \cT_\ell$,  and $w_i^{\ell}(\theta)$ denotes the weight assigned for $\cV(\theta_i^{\ell}, M_{\ell}, \varpi_{\ell})$, for any $i\in [N_{\ell}]^{+}$ and $\ell \in [L]$. 
Furthermore, for each level $\ell \in [L-1]$, we construct an auxiliary  estimator, $\Vhatsub{\ell}{\theta, M_{\ell+1}, \varpi_{\ell+1}}$, by performing $M_{\ell+1}$ simulation replications at each design point in $\mathcal{T}_{\ell}$
using the random number stream $\varpi_{\ell+1}$. Notice that  the  auxiliary  estimator $\Vhatsub{\ell}{\theta, M_{\ell+1}, \varpi_{\ell+1}}$ shares the same design-point set (i.e., $\mathcal{T}_{\ell}$) as
the $\ell$th  level  estimator 	$\Vhatsub{\ell}{\theta, M_{\ell}, \varpi_{\ell}}$. However, it is constructed using the same number of replications  (i.e., $ M_{\ell+1}$) and the same random number stream (i.e., $\varpi_{\ell+1}$)    as  those  used for constructing the $(\ell+1)$th  level estimator $\Vhatsub{\ell+1}{\theta, M_{\ell+1}, \varpi_{\ell+1}}$. 
MLMC metamodeling combines the level-specific estimators $\{\Vhatsub{\ell}{\theta, M_{\ell}, \varpi_{\ell}}\}_{\ell \in [L]}$ and their corresponding auxiliary estimators into the following variance function estimator:
\begin{equation}\label{eq:mlmc_metamodeling_estimator_origin}
	\Vhat{\theta} = \Vhatsub{L}{\theta, M_{L}, \varpi_{L}} + \sum_{\ell=0}^{L-1}\left(\Vhatsub{\ell}{\theta, M_{\ell}, \varpi_{\ell}} - \Vhatsub{\ell}{\theta, M_{\ell+1}, \varpi_{\ell+1}}\right) \ .
\end{equation}

The form of the MLMC metamodeling estimator given in \eqref{eq:mlmc_metamodeling_estimator_origin} is a combination of the finest level SMC metamodeling estimator $\Vhatsub{L}{\theta, M_{L}, \varpi_{L}}$ and a control variate \cite{nelson1990control, rosenbaum2017multilevel}, represented by the second term on the right-hand side of \eqref{eq:mlmc_metamodeling_estimator_origin}. This structure highlights two key attributes. First, the accuracy of $\Vhat{\theta}$ is determined by the finest level metamodeling estimator, $\Vhatsub{L}{\theta, M_{L}, \varpi_{L}}$. The matching bias between each level-specific variance estimator $\Vhatsub{\ell}{\theta, M_{\ell}, \varpi_{\ell}}$ and the corresponding auxiliary estimator $\Vhatsub{\ell}{\theta, M_{\ell+1}, \varpi_{\ell+1}}$, for any $\ell \in [L-1]$, ensures that this control variate introduces no additional bias. Second, this control variate exhibits a hierarchical structure that leverages the correlations  between the auxiliary estimators and the level-specific estimators at each successive level, thereby facilitating variance reduction for the MLMC metamodeling estimator $\Vhat{\theta}$.
We can rewerite the MLMC metamodeling estimator in \eqref{eq:mlmc_metamodeling_estimator_origin}   succinctly as follows, which facilitates subsequent analyses:
\begin{equation}
	\label{eq:mlmc_metamodeling_estimator}
	\Vhat{\theta} = \sum_{\ell=0}^{L} \DeltaVhatsub{\ell}{\theta, M_{\ell}, \varpi_{\ell}}  \  ,
\end{equation}
where $\DeltaVhatsub{\ell}{\theta, M_{\ell}, \varpi_{\ell}} := \Vhatsub{\ell}{\theta, M_{\ell}, \varpi_{\ell}}-\Vhatsub{\ell-1}{\theta, M_{\ell}, \varpi_{\ell}}$ denotes \emph{the $\ell$th level refinement estimator} for any $\ell \in [L]$, with $\Vhatsub{-1}{\theta, M_{0}, \varpi_{0}} :=0$.

We next outline the theoretical framework that supports the analysis of the MLMC metamodeling estimator in Sections~\ref{sec:theory} and~\ref{sec:clt}  and the development of efficient implementation procedures in Section~\ref{sec: mlmc_procedure}.

\section{Theoretical Framework for MLMC Metamodeling}\label{sec:theory}
\numberwithin{equation}{section}

This section presents a theoretical analysis of the MLMC metamodeling approach for variance function estimation. We begin by outlining the theoretical framework, followed by a detailed examination of the bias and variance of the proposed MLMC metamodeling estimator in Subsections~\ref{sec:bias-analysis} and~\ref{subsec:variance_analysis}, respectively. Building on this analysis, Subsection~\ref{sec:computation complexity} delves into the computational complexity assessment for MLMC metamodeling.
  
 In classical MLMC, the estimator for a quantity of interest is constructed as a telescoping sum of estimators computed at different levels. Each level corresponds to a different level of accuracy and computational cost. By combining these estimators appropriately, the overall variance of the MLMC estimator can be reduced compared to using a single-level SMC estimator \cite{giles2015multilevel}.
The theoretical framework of MLMC involves analyzing the error and computational cost associated with each level and devising strategies for optimally allocating computational resources across levels to minimize the overall cost while meeting a prescribed accuracy requirement. This typically involves striking a trade-off between bias and variance reduction at different levels.  
 
The MLMC metamodeling estimator given in \eqref{eq:mlmc_metamodeling_estimator} can be viewed as a telescoping sum of SMC metamodeling estimators constructed at different design levels.  The theoretical framework for MLMC metamodeling aligns with that of classical MLMC,  enabling an analysis of the bias and variance of   metamodeling estimators constructed at each level,  in relation to the expansion of the design-point set and the allocation of the computational budget as the design level increases.

We follow \cite{rosenbaum2017multilevel} to use the mean integrated squared error (MISE) for assessing the performance of $\Vhat{\cdot}$ in estimating $\V{\cdot}$:
\begin{align} \label{eq:mise}
\operatorname{MISE}(\widehat{\mathbb{V}}) 
&=\E{\|\widehat{\mathbb{V}}-\mathbb{V}\|_{2}^{2}}
=\int_{\Theta} \Var{\Vhat{\theta}} \mathrm{d} \theta+\int_{\Theta}\left(\E{\Vhat{\theta}}-\V{\theta}\right)^{2} \mathrm{d} \theta \nonumber \\
&=\left\|\operatorname{Var}(\widehat{\mathbb{V}})\right\|_{1}+\left\|\operatorname{Bias}(\widehat{\mathbb{V}})\right\|_{2}^{2}  \ . 
\end{align}
Equation~\eqref{eq:mise} indicates that an effective estimator should balance the variance and bias components to minimize the MISE. To facilitate the analysis, we introduce the following technical assumptions.

\begin{assumption}\label{asm:domain}
	The input space $ \Theta \subseteq \mathbb{R}^{d} $ is a compact set.
\end{assumption}

\begin{assumption}\label{asm:finite_moment}
 There exists some $\theta^{\prime} \in \Theta$ such that $ \Msup{4}{\cY(\theta^{\prime})} \coloneqq \mathbb{E}\left[\left(\cY(\theta^{\prime})  - \mathbb{E}\left[\cY(\theta^{\prime})\right]\right)^4\right] < \infty$.
\end{assumption}

\begin{assumption}\label{asm:LP_Y}
	There exists a random variable $ \kappa_{y} $ with $ \E{\kappa_{y}^{4}} < \infty$ such that
\[ \left|\cY\left(\theta_{1}, \omega\right)-\cY\left(\theta_{2}, \omega\right)\right| \leq \kappa_{y}(\omega)\left\|\theta_{1}-\theta_{2}\right\| \text{ almost surely}, \quad  \forall  \theta_{1}, \theta_{2} \in \Theta .\]
\end{assumption}

Assumptions \ref{asm:domain} and \ref{asm:finite_moment} impose some conditions on the input space and simulation outputs to ensure proper estimation of the variance function. Assumption \ref{asm:LP_Y} is a Lipschitz continuity condition on simulation outputs. Similar conditions are commonly stipulated in contexts such as distribution and density function estimation \cite{elfverson2016multilevel,giles2015multilevelfunction}  and mean function estimation \cite{rosenbaum2017multilevel}. Assumption  \ref{asm:LP_Y} is mild and relatively easy to satisfy. For an illustrative example, consider $\cY(\theta, \omega) = \theta \Phi^{-1}(\omega)$, where $\Phi$ denotes the cumulative distribution function of a standard normal random variable, with $\theta \in \Theta \subset \mathbb{R}$  and $\omega$ being a uniform random variable on $[0,1]$. In this case, setting $\kappa_{y} = |\Phi^{-1}(\omega)|$ satisfies Assumption~\ref{asm:LP_Y}. 

We first present the following two lemmas, based on Assumptions \ref{asm:domain} through \ref{asm:LP_Y},  to facilitate  subsequent analyses. Their proofs are deferred to Appendices \ref{app:lem_finite_moment} and \ref{app:lem_LP_V}, respectively. 

\begin{lemma}\label{lem:finite_moment}
	Under Assumptions \ref{asm:domain}, \ref{asm:finite_moment},  and \ref{asm:LP_Y},  $c_{\cY} \coloneqq \sup_{\theta \in \Theta} \Msup{4}{\cY(\theta)} < \infty$.
\end{lemma}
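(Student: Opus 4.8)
The goal is to show that the fourth central moment of $\cY(\theta)$ is uniformly bounded over $\Theta$. My plan is to control $\Msup{4}{\cY(\theta)}$ for an arbitrary $\theta$ by comparing $\cY(\theta)$ with $\cY(\theta^\prime)$ at the anchor point $\theta^\prime$ from Assumption~\ref{asm:finite_moment}, using the Lipschitz bound from Assumption~\ref{asm:LP_Y} and the compactness (hence boundedness of diameter) from Assumption~\ref{asm:domain}.

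First I would write $\cY(\theta) - \E{\cY(\theta)} = \bigl(\cY(\theta^\prime) - \E{\cY(\theta^\prime)}\bigr) + R(\theta,\omega)$, where the remainder is $R(\theta,\omega) = \bigl(\cY(\theta,\omega) - \cY(\theta^\prime,\omega)\bigr) - \E{\cY(\theta,\omega) - \cY(\theta^\prime,\omega)}$. By Assumption~\ref{asm:LP_Y}, $\bigl|\cY(\theta,\omega) - \cY(\theta^\prime,\omega)\bigr| \le \kappa_y(\omega)\,\|\theta - \theta^\prime\| \le \kappa_y(\omega)\,\mbox{diam}(\Theta)$ almost surely, and by Jensen's inequality the centering term obeys the same bound in absolute value (with $\E{\kappa_y}$ in place of $\kappa_y(\omega)$, which is finite since $\E{\kappa_y^4}<\infty$ and $\Theta$ is compact). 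Hence $|R(\theta,\omega)| \le 2\,\mbox{diam}(\Theta)\,\max\{\kappa_y(\omega), \E{\kappa_y}\}$, and in particular $\E{R(\theta)^4} \le 16\,\mbox{diam}(\Theta)^4\,\E{\kappa_y^4} < \infty$ uniformly in $\theta$.

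Next I would apply the elementary inequality $(a+b)^4 \le 8a^4 + 8b^4$ (or, equivalently, Minkowski's inequality in $L^4$: $\|X+Y\|_4 \le \|X\|_4 + \|Y\|_4$, which is cleaner) to $\cY(\theta) - \E{\cY(\theta)} = \bigl(\cY(\theta^\prime) - \E{\cY(\theta^\prime)}\bigr) + R(\theta)$. This gives
\[
\Bigl(\Msup{4}{\cY(\theta)}\Bigr)^{1/4} \le \Bigl(\Msup{4}{\cY(\theta^\prime)}\Bigr)^{1/4} + \bigl(\E{R(\theta)^4}\bigr)^{1/4} \le \Bigl(\Msup{4}{\cY(\theta^\prime)}\Bigr)^{1/4} + 2\,\mbox{diam}(\Theta)\,\bigl(\E{\kappa_y^4}\bigr)^{1/4} .
\]
The right-hand side does not depend on $\theta$ and is finite by Assumptions~\ref{asm:domain}, \ref{asm:finite_moment}, and~\ref{asm:LP_Y}. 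Taking the supremum over $\theta \in \Theta$ yields $c_{\cY} < \infty$, as claimed.

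I do not anticipate a serious obstacle here; the main point requiring a little care is the handling of the centering (expectation) term in $R(\theta,\omega)$ — one must verify that $\E{\cY(\theta)}$ is finite and that $\bigl|\E{\cY(\theta) - \cY(\theta^\prime)}\bigr|$ is controlled, which follows from integrating the Lipschitz bound and noting $\E{\kappa_y} \le \bigl(\E{\kappa_y^4}\bigr)^{1/4} < \infty$. Also one should confirm $\E{\cY(\theta^\prime)^4} < \infty$ (so $\E{\cY(\theta^\prime)}$ exists) — this follows from Assumption~\ref{asm:finite_moment} together with the fact that a finite fourth central moment plus the need for $\cY(\theta^\prime)$ to be integrable; if integrability of $\cY(\theta^\prime)$ is not separately assumed, it can be extracted by noting $\cY(\theta^\prime,\omega) = \cY(\theta^{\prime\prime},\omega) + O(\kappa_y(\omega))$ for any fixed reference, but in practice one simply takes $\E{|\cY(\theta^\prime)|} < \infty$ as implicit in the statement $\Msup{4}{\cY(\theta^\prime)} < \infty$. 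Modulo this bookkeeping, the proof is a two-line $L^4$-triangle-inequality argument.
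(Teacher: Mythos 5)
Your proof is correct and follows essentially the same route as the paper's: both anchor at the point $\theta^\prime$ from Assumption~\ref{asm:finite_moment}, center, and control the discrepancy via the Lipschitz bound of Assumption~\ref{asm:LP_Y} together with $\mbox{diam}(\Theta)<\infty$. The only difference is cosmetic — the paper uses a three-term decomposition with the power mean inequality $(a+b+c)^4\leq 27(a^4+b^4+c^4)$ and a Jensen step, whereas you fold the difference and its centering into one remainder and apply Minkowski in $L^4$, which yields the same uniform bound slightly more cleanly.
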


\begin{lemma}\label{lem:LP_V}
Under Assumptions \ref{asm:domain}, \ref{asm:finite_moment} and \ref{asm:LP_Y}, there exists a constant $ \kappa_{v} > 0 $ such that
\[ \left| \V{\theta_{1}} -\V{\theta_{2}} \right| \leq \kappa_v\left\|\theta_{1}-\theta_{2}\right\|, \quad  \forall  \theta_{1}, \theta_{2} \in \Theta \ .\]
\end{lemma}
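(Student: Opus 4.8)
The plan is to establish the Lipschitz continuity of $\V{\cdot}$ by writing the difference $\V{\theta_1} - \V{\theta_2}$ in terms of the simulation outputs and then exploiting the pointwise Lipschitz bound on $\cY$ from Assumption~\ref{asm:LP_Y} together with the uniform fourth-moment bound from Lemma~\ref{lem:finite_moment}. First I would use the identity $\V{\theta} = \E{\cY(\theta,\omega)^2} - \left(\E{\cY(\theta,\omega)}\right)^2$, which expresses the variance function as a combination of the first two moment functions. It then suffices to show each of $\theta \mapsto \E{\cY(\theta,\omega)}$ and $\theta \mapsto \E{\cY(\theta,\omega)^2}$ is Lipschitz on $\Theta$; the triangle inequality and the fact that these moment functions are bounded on the compact set $\Theta$ (a consequence of Lemma~\ref{lem:finite_moment}) then combine the two pieces.

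For the first moment, $\left|\E{\cY(\theta_1,\omega)} - \E{\cY(\theta_2,\omega)}\right| \le \E{\left|\cY(\theta_1,\omega) - \cY(\theta_2,\omega)\right|} \le \E{\kappa_y(\omega)}\|\theta_1 - \theta_2\|$, and $\E{\kappa_y} < \infty$ follows from $\E{\kappa_y^4} < \infty$ by Jensen (or H\"older). For the second moment, I would factor $\cY(\theta_1,\omega)^2 - \cY(\theta_2,\omega)^2 = \left(\cY(\theta_1,\omega) - \cY(\theta_2,\omega)\right)\left(\cY(\theta_1,\omega) + \cY(\theta_2,\omega)\right)$, bound the first factor by $\kappa_y(\omega)\|\theta_1 - \theta_2\|$, and then apply the Cauchy--Schwarz inequality to $\E{\kappa_y(\omega)\left|\cY(\theta_1,\omega) + \cY(\theta_2,\omega)\right|}$, giving a bound of $\|\theta_1 - \theta_2\|\cdot \left(\E{\kappa_y^2}\right)^{1/2}\left(\E{(\cY(\theta_1,\omega)+\cY(\theta_2,\omega))^2}\right)^{1/2}$. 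The second-moment function $\E{\cY(\theta,\omega)^2}$ is uniformly bounded over $\Theta$ since $\E{(\cY(\theta,\omega) - \E{\cY(\theta,\omega)})^2} \le \left(\Msup{4}{\cY(\theta)}\right)^{1/2} \le c_{\cY}^{1/2}$ by Jensen and $\E{\cY(\theta,\omega)} = \E{\cY(\theta',\omega)} + \left(\E{\cY(\theta,\omega)} - \E{\cY(\theta',\omega)}\right)$ with the increment controlled by $\E{\kappa_y}\,\mbox{diam}(\Theta)$; hence $\sup_{\theta\in\Theta}\E{\cY(\theta,\omega)^2}$ is finite. Collecting these, $\kappa_v$ can be taken as an explicit expression in $\E{\kappa_y}$, $\left(\E{\kappa_y^2}\right)^{1/2}$, $c_{\cY}$, and $\mbox{diam}(\Theta)$.

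The main obstacle, though a mild one, is making sure all the relevant quantities are genuinely finite and uniformly controlled over $\Theta$ rather than merely pointwise finite: in particular, that $\sup_{\theta\in\Theta}\E{\cY(\theta,\omega)^2} < \infty$, which requires chaining the Lipschitz increment bound from the anchor point $\theta'$ of Assumption~\ref{asm:finite_moment} across the compact set $\Theta$ using Assumption~\ref{asm:domain}. This is exactly the kind of argument already used to prove Lemma~\ref{lem:finite_moment}, so I would either invoke that lemma directly or mirror its short chaining step. Once the uniform second-moment bound is in hand, the rest is a routine application of the triangle inequality, Cauchy--Schwarz, and Jensen's inequality, and no further technical difficulty arises.
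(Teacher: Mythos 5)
Your proposal is correct and follows essentially the same route as the paper's proof: decompose $\V{\theta}$ into $\E{\cY^2(\theta)} - \Esup{2}{\cY(\theta)}$, treat the two moment functions separately via the difference-of-squares factorization, Cauchy--Schwarz, and Assumption~\ref{asm:LP_Y}, with Lemma~\ref{lem:finite_moment} supplying the uniform moment bounds. No gaps.
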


\subsection{Bias}
\label{sec:bias-analysis}
This subsection first provides a bound on  the bias of the $\ell$th level variance function estimator $\Vhatsub{\ell}{\theta, M_{\ell}, \varpi_{\ell}}$ given in \eqref{eq:metamodel}, which is used to further bound the bias component $\left\|\Bias{\widehat{\mathbb{V}}}\right\|_{2}^{2}$ in \eqref{eq:mise}. 
Throughout this subsection, we use $\Vhatsub{\ell}{\theta}$ as  shorthand for $\Vhatsub{\ell}{\theta, M_{\ell}, \varpi_{\ell}}$ to simplify notation.
 
The bias of a metamodeling estimator is known to be influenced by both the function approximation method and the design-point set used. In particular, the fill distance of a given design-point set---$\max_{\theta \in \Theta} \min_{i \in [N]^+} \|\theta_i - \theta \|$, where $N$ denotes the number of design points---plays an essential role in determining statistical properties of many popular metamodel-based estimators \cite{rosenbaum2017multilevel, wendland:2005, wynne2021convergence}. Intuitively, if the prediction point $\theta$ is considerably distant from the design points, the available observations may not provide sufficient information for the metamodel to yield an accurate estimate. 
Therefore, 
to control the magnitude of the bias component $\left\|\operatorname{Bias}(\widehat{\mathbb{V}})\right\|_{2}^{2}$, it is crucial to manage the fill distance achieved at each design level. To this end, we introduce the following assumption.

\begin{assumption}\label{asm:decay_ratio}
 For each prediction point $ \theta \in \Theta $ and  any $\ell \in \mathbb{N}$, consider the $\ell$th level variance function estimator given in  \eqref{eq:metamodel}. Suppose that each weight $ w_{i}^{\ell}(\theta) $ associated with the sample variance $\cV(\theta_i^{\ell},M_{\ell},\varpi_{\ell})$ at design point $ \theta_{i}^{\ell} $ is non-negative. Let $ I^{\ell}(\theta ; r) \coloneqq \left\{i:\|\theta-\theta_{i}^{\ell}\| \leq r, i \in [N_{\ell}]^{+}\right \} $ represent the index set of those design points in $\cT_\ell$ that are within distance $r$ from $\theta$, where recall that $N_{\ell}= |\cT_\ell|$ is the size of the design-point set  on level $\ell$. There exist constants $ s>1 $, $ \alpha>0 $, and sequences $ \left\{p_{\ell}\right\}_{\ell \in \mathbb{N}} $ and $ \left\{r_{\ell}\right\}_{\ell \in \mathbb{N}} $ such that $ p_{\ell} $ and $ r_{\ell} $ are $ \mathcal{O}\left(s^{-\alpha \ell}\right) $, and  
\begin{equation*}
\sum_{i \notin I^{\ell}\left(\theta ; r_{\ell}\right)} w_{i}^{\ell}(\theta) \leq p_{\ell} \quad \text { and } \quad\bigg|1-\sum_{i \in I^{\ell}\left(\theta ; r_{\ell}\right)} w_{i}^{\ell}(\theta)\bigg| \leq p_{\ell}  \  .    \nonumber
\end{equation*}
\end{assumption}
Assumption \ref{asm:decay_ratio} stipulates that the $\ell$th level variance estimator $\Vhatsub{\ell}{\theta}$ given in \eqref{eq:metamodel} is a weighted sum that assigns a higher weight to design points that are closer to the given prediction point $\theta$. 
We note that Assumption \ref{asm:decay_ratio} is satisfied by some widely adopted function estimation approaches, such as piecewise linear interpolation, $k$ nearest-neighbor approximation (kNN), and kernel smoothing, when used with space-filling experimental designs. 
Taking kNN as an example, it assigns a weight of $1/k$ to each of the $k$ nearest design points relative to the prediction point $\theta$, where $k$ is predetermined. Assuming that the fill distance $\Delta^{\ell}$ for the level $\ell$ design-point set $\cT_\ell$ is $\mathcal{O}(s^{-\alpha \ell})$,  we have $r_{\ell} = 2k\Delta^{\ell}$ and $p_{\ell}=0$,  which satisfies Assumption \ref{asm:decay_ratio}. Other approaches, such as piecewise linear interpolation and kernel smoothing, can also be demonstrated to satisfy Assumption~\ref{asm:decay_ratio}. For a more detailed discussion, we refer the interested reader to  Assumption 4 and Section EC.1.1 of \cite{rosenbaum2017multilevel}.
 
We are now in a position to provide a bound on  
$\left\|\Bias{\widehat{\mathbb{V}}_{\ell}}\right\|_{2}^{2} $ for any $\ell \in \mathbb{N}$,  as detailed in Proposition \ref{prop:bias_bound} below.

\begin{proposition}\label{prop:bias_bound}
	Under Assumptions \ref{asm:domain}, \ref{asm:finite_moment}, \ref{asm:LP_Y}, and \ref{asm:decay_ratio}, the integrated squared bias $ \left\|\Bias{\widehat{\mathbb{V}}_{\ell}}\right\|_{2}^{2} $ is $ \mathcal{O}\left(s^{-2 \alpha \ell}\right)$.
\end{proposition}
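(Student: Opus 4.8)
\textbf{Proof plan for Proposition~\ref{prop:bias_bound}.}

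The plan is to decompose the bias of $\Vhatsub{\ell}{\theta}$ at each fixed prediction point $\theta$ into two pieces: a ``localization'' error coming from design points far from $\theta$ together with the failure of the weights to sum exactly to one, and a ``smoothness'' error coming from the spatial variation of $\V{\cdot}$ over the small neighborhood where the weights concentrate. Concretely, write
\begin{align*}
\E{\Vhatsub{\ell}{\theta}} - \V{\theta}
&= \sum_{i=1}^{N_\ell} w_i^\ell(\theta)\, \E{\cV(\theta_i^\ell, M_\ell, \varpi_\ell)} - \V{\theta} \\
&= \sum_{i=1}^{N_\ell} w_i^\ell(\theta)\, \V{\theta_i^\ell} - \V{\theta} ,
\end{align*}
using that the sample variance is unbiased, $\E{\cV(\theta_i^\ell, M_\ell, \varpi_\ell)} = \V{\theta_i^\ell}$. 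Then split the sum according to whether $i \in I^\ell(\theta; r_\ell)$ or not, and insert and subtract $\big(\sum_{i \in I^\ell(\theta;r_\ell)} w_i^\ell(\theta)\big)\V{\theta}$ so that the in-neighborhood term becomes $\sum_{i \in I^\ell(\theta;r_\ell)} w_i^\ell(\theta)\big(\V{\theta_i^\ell} - \V{\theta}\big)$ plus a term controlled by $\big|1 - \sum_{i \in I^\ell(\theta;r_\ell)} w_i^\ell(\theta)\big|$.

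The main ingredients are then: (i) Lemma~\ref{lem:LP_V}, which gives $|\V{\theta_i^\ell} - \V{\theta}| \leq \kappa_v \|\theta_i^\ell - \theta\| \leq \kappa_v r_\ell$ for every $i \in I^\ell(\theta; r_\ell)$, so the in-neighborhood smoothness term is bounded by $\kappa_v r_\ell \sum_{i \in I^\ell(\theta;r_\ell)} w_i^\ell(\theta)$, which is $\mathcal{O}(r_\ell)$ once one notes the in-neighborhood weights are bounded (they are at most $1 + p_\ell$ by Assumption~\ref{asm:decay_ratio}); (ii) Assumption~\ref{asm:decay_ratio}, which bounds the far-field weight sum $\sum_{i \notin I^\ell(\theta;r_\ell)} w_i^\ell(\theta) \leq p_\ell$ and the weight-normalization defect by $p_\ell$; and (iii) Lemma~\ref{lem:finite_moment} together with Assumption~\ref{asm:domain}, which give a uniform bound $\sup_{\theta \in \Theta} \V{\theta} \leq \sqrt{c_{\cY}} < \infty$ (by Jensen/Cauchy--Schwarz on the fourth moment), so that the far-field contribution $\big|\sum_{i \notin I^\ell(\theta;r_\ell)} w_i^\ell(\theta)\,\V{\theta_i^\ell}\big|$ is $\mathcal{O}(p_\ell)$ and the normalization-defect term is $\mathcal{O}(p_\ell)$ as well. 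Combining, $|\E{\Vhatsub{\ell}{\theta}} - \V{\theta}| \leq C(r_\ell + p_\ell)$ for a constant $C$ independent of $\theta$ and $\ell$; since both $r_\ell$ and $p_\ell$ are $\mathcal{O}(s^{-\alpha\ell})$, the pointwise bias is $\mathcal{O}(s^{-\alpha\ell})$ uniformly in $\theta$. Finally, integrate the square over $\Theta$: $\left\|\Bias{\widehat{\mathbb{V}}_\ell}\right\|_2^2 = \int_\Theta (\E{\Vhatsub{\ell}{\theta}} - \V{\theta})^2\,\mathrm{d}\theta \leq C^2 (r_\ell + p_\ell)^2 \,\mathrm{vol}(\Theta)$, which is finite by compactness (Assumption~\ref{asm:domain}) and is $\mathcal{O}(s^{-2\alpha\ell})$, as claimed.

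The step I expect to require the most care is establishing the \emph{uniformity in $\theta$} of the constant $C$ — in particular that the in-neighborhood weight sum $\sum_{i \in I^\ell(\theta;r_\ell)} w_i^\ell(\theta)$ is bounded above by a constant not depending on $\theta$ or $\ell$ (it equals $1$ minus a defect of size at most $p_\ell \to 0$, hence is eventually at most $2$, and is anyway finite for every $\ell$), and that the constants $\kappa_v$ from Lemma~\ref{lem:LP_V} and $\sqrt{c_{\cY}}$ from Lemma~\ref{lem:finite_moment} are genuinely global over $\Theta$ — which is exactly what those lemmas and Assumption~\ref{asm:domain} provide. Everything else is a routine triangle-inequality bookkeeping exercise, and the passage from the pointwise bound to the integrated-squared bound is immediate once $\Theta$ has finite volume.
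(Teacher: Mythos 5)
Your proposal is correct and follows essentially the same route as the paper's proof: both reduce the bias to $\sum_i w_i^\ell(\theta)\V{\theta_i^\ell}-\V{\theta}$ via unbiasedness of the sample variance, split into near/far design points plus a weight-normalization defect, and control the three pieces with Lemma~\ref{lem:LP_V}, Assumption~\ref{asm:decay_ratio}, and the uniform boundedness of $\V{\cdot}$ over the compact $\Theta$. The only (cosmetic) difference is that the paper bounds the far-field term via $\V{\theta_i^\ell}-\V{\theta}$ and $\mathrm{diam}(\Theta)$ while you bound $\V{\theta_i^\ell}$ directly by $\sup_\theta\V{\theta}$; both yield a uniform $\mathcal{O}(s^{-\alpha\ell})$ pointwise bias and hence the claimed $\mathcal{O}(s^{-2\alpha\ell})$ integrated squared bias.
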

\begin{proof}
For any given $\theta \in \Theta$, we have
	\begin{align}
		\label{eq:biasconver}
		&\left|\Bias{\widehat{\mathbb{V}}_{\ell}(\theta)} \right| = \left| \E{ \Vhatsub{\ell}{\theta}  - \V{\theta} } \right| = \left| \sum_{i=1}^{N_{\ell}} w_{i}^{\ell}(\theta)\V{\theta_{i}^{\ell}}  - \V{\theta}  \right| \nonumber \\
		=& \left| \sum_{i \in I^{\ell}(\theta; r_{\ell})} w_{i}^{\ell}(\theta) \Big(\V{\theta_{i}^{\ell}}  - \V{\theta}  \Big) 
		+  \sum_{i \notin I^{\ell}(\theta; r_{\ell})} w_{i}^{\ell}(\theta) \Big(\V{\theta_{i}^{\ell}}- \V{\theta}  \Big) 
		-   \left( 1 - \sum_{i=1}^{N_{\ell}} w_{i}^{\ell}  \right) \V{\theta} \right| \nonumber \\
		\leq & \sum_{i \in I^{\ell}(\theta; r_{\ell})} w_{i}^{\ell}(\theta) \left| \V{\theta_{i}^{\ell}}  - \V{\theta}  \right| 
		+ \sum_{i \notin I^{\ell}(\theta; r_{\ell})} w_{i}^{\ell}(\theta) \left| \V{\theta_{i}^{\ell}}  - \V{\theta} \right| 
		+ \left| 1 - \sum_{i=1}^{N_{\ell}} w_{i}^{\ell}  \right|  \V{\theta}   \nonumber \\
		\leq & (1 + p_{\ell}) k_v r_{\ell} + p_{\ell} k_v \mbox{diam}(\Theta) + 2p_{\ell} \sup_{\theta \in \Theta} \V{\theta} \ , 
	\end{align}
	where the last step follows from Lemma \ref{lem:LP_V} and Assumption \ref{asm:decay_ratio}. It follows from Assumption \ref{asm:domain} and Lemma \ref{lem:finite_moment} that $\mbox{diam}(\Theta)<\infty$ and $\sup_{\theta \in \Theta} \V{\theta} < \infty$. The proof is complete by noting that the bound in \eqref{eq:biasconver} is uniform over all $\theta \in \Theta$, and all terms involved  are $ \mathcal{O}\left(s^{-\alpha \ell}\right) $ under Assumption \ref{asm:decay_ratio}.
\end{proof}
Taking into account Proposition~\ref{prop:bias_bound} and the fact that $\left\|\operatorname{Bias}(\widehat{\mathbb{V}})\right\|_{2}^{2} =\left\|\Bias{\widehat{\mathbb{V}}_{L}}\right\|_{2}^{2}$ (recall from Subsection \ref{subsec:MLMCM}),  we can effectively control the magnitude of the bias component $\left\|\operatorname{Bias}(\widehat{\mathbb{V}})\right\|_{2}^{2}$ in \eqref{eq:mise} by appropriately determining the finest level $L$.

\subsection{Variance}\label{subsec:variance_analysis}
This subsection provides bounds on the integrated variance of the $\ell$th level estimator \sloppy $\left\|\Var{\Vhatsub{\ell}{\theta, M_{\ell},   \varpi_{\ell}}}\right\|_1$
and the integrated variance per replication in estimating the $\ell$th level refinement $\left\|(M_{\ell}-1)\Var{\DeltaVhatsub{\ell}{\theta, M_{\ell}, \varpi_{\ell}}}\right\|_1$. The results then help bound  the variance component $\left\|\operatorname{Var}(\widehat{\mathbb{V}})\right\|_{1}$ given in \eqref{eq:mise}. Hereinafter, we will use shorthand notation: $\cV(\theta, M_{\ell})$ for the sample variance $\cV(\theta, M_{\ell}, \varpi_{\ell})$ and $\Vhatsub{\ell}{\theta, M_{\ell}}$  for the single-level estimator  $\Vhatsub{\ell}{\theta, M_{\ell}, \varpi_{\ell}}$.

\begin{proposition}\label{prop:var_finite}
Suppose that Assumptions \ref{asm:domain}, \ref{asm:finite_moment}, \ref{asm:LP_Y}, and \ref{asm:decay_ratio} hold, and $ M_{\ell} \geq 2 $ replications are simulated at each design point in $\cT_\ell$. Then, there exists a constant $ \bar{v} > 0 $ such that $ \left\|\Var{\Vhatsub{\ell}{\cdot, M_{\ell}}}\right\|_{1} \leq \bar{v} / (M_{\ell} - 1) $  for any $ \ell \in \mathbb{N}$. 
\end{proposition}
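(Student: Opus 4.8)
The plan is to expand the integrated variance of the linear smoother $\Vhatsub{\ell}{\cdot, M_{\ell}}$ in terms of the variances and covariances of the sample variances $\cV(\theta_i^{\ell}, M_{\ell})$ at the design points, and then bound each such term by a constant divided by $M_{\ell}-1$, uniformly over the prediction point $\theta$ and over $\ell$. Concretely, since $\Vhatsub{\ell}{\theta, M_{\ell}} = \sum_{i=1}^{N_{\ell}} w_i^{\ell}(\theta)\,\cV(\theta_i^{\ell}, M_{\ell})$, we have
\begin{equation*}
\Var{\Vhatsub{\ell}{\theta, M_{\ell}}} = \sum_{i=1}^{N_{\ell}}\sum_{j=1}^{N_{\ell}} w_i^{\ell}(\theta) w_j^{\ell}(\theta)\,\Cov{\cV(\theta_i^{\ell}, M_{\ell})}{\cV(\theta_j^{\ell}, M_{\ell})} \ .
\end{equation*}
The first key step is a classical fact about sample variances: for a single design point $\theta_i^{\ell}$, $\Var{\cV(\theta_i^{\ell}, M_{\ell})} = (M_{\ell}-1)^{-1}\bigl(\Msup{4}{\cY(\theta_i^{\ell})} - \tfrac{M_{\ell}-3}{M_{\ell}-1}\Varsup{2}{\cY(\theta_i^{\ell})}\bigr)$, which by Lemma~\ref{lem:finite_moment} (and the resulting uniform bound on the variance function from its proof, giving $\sup_\theta \Varsup{2}{\cY(\theta)} \le \sup_\theta \Msup{4}{\cY(\theta)} = c_{\cY}$ by Jensen) is bounded by $c_{\cY}/(M_{\ell}-1)$ uniformly in $i$ and $\ell$, once $M_{\ell}\ge 2$. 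Since the design points in $\cT_\ell$ share the random number stream $\varpi_\ell$, the cross terms need not vanish, so the second key step is to bound $\left|\Cov{\cV(\theta_i^{\ell}, M_{\ell})}{\cV(\theta_j^{\ell}, M_{\ell})}\right|$ by Cauchy--Schwarz via $\sqrt{\Var{\cV(\theta_i^{\ell}, M_{\ell})}\Var{\cV(\theta_j^{\ell}, M_{\ell})}} \le c_{\cY}/(M_{\ell}-1)$.

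Combining these two bounds yields
\begin{equation*}
\Var{\Vhatsub{\ell}{\theta, M_{\ell}}} \le \frac{c_{\cY}}{M_{\ell}-1}\Bigl(\sum_{i=1}^{N_{\ell}} |w_i^{\ell}(\theta)|\Bigr)^2 = \frac{c_{\cY}}{M_{\ell}-1}\Bigl(\sum_{i=1}^{N_{\ell}} w_i^{\ell}(\theta)\Bigr)^2 \ ,
\end{equation*}
using that the weights are non-negative by Assumption~\ref{asm:decay_ratio}. The third key step is to control $\sum_{i=1}^{N_{\ell}} w_i^{\ell}(\theta)$ uniformly: splitting the sum over $I^{\ell}(\theta; r_\ell)$ and its complement and invoking Assumption~\ref{asm:decay_ratio} gives $\sum_{i=1}^{N_{\ell}} w_i^{\ell}(\theta) \le (1 + p_\ell) + p_\ell \le 1 + 2p_\ell$, and since $p_\ell = \mathcal{O}(s^{-\alpha\ell})$ is bounded (say by some $\bar p$) over all $\ell$, the factor $\bigl(\sum_i w_i^{\ell}(\theta)\bigr)^2$ is bounded by $(1+2\bar p)^2 =: C$ uniformly in $\theta$ and $\ell$. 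Hence $\Var{\Vhatsub{\ell}{\theta, M_{\ell}}} \le c_{\cY} C /(M_{\ell}-1)$ pointwise, and integrating over the compact set $\Theta$ (Assumption~\ref{asm:domain}, finite Lebesgue measure $|\Theta| < \infty$) gives $\left\|\Var{\Vhatsub{\ell}{\cdot, M_{\ell}}}\right\|_1 \le \bar v/(M_{\ell}-1)$ with $\bar v \coloneqq c_{\cY} C |\Theta|$, which is independent of $\ell$ as required.

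The main obstacle is making the first step rigorous and uniform: I need the exact (or at least a clean upper-bound) formula for the variance of a sample variance in terms of the second and fourth central moments, valid for all $M_{\ell} \ge 2$, and I must ensure the implied constant does not blow up as $M_{\ell} \to 2$ or as $\ell$ varies — this is where Lemma~\ref{lem:finite_moment}'s uniform fourth-moment bound $c_{\cY}$ is essential, and I should double-check the edge case $M_{\ell} = 2$ (where the $(M_\ell - 3)/(M_\ell - 1)$ coefficient is negative, which only helps the bound). A secondary subtlety is that the covariance bound via Cauchy--Schwarz is crude — it does not exploit any decay of correlation with distance $\|\theta_i^\ell - \theta_j^\ell\|$ — but since we only need an $\mathcal{O}(1/(M_\ell-1))$ bound with an $\ell$-independent constant, and the weights already sum to roughly one, this crude bound suffices; a sharper argument is unnecessary here.
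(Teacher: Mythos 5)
Your proposal is correct and follows essentially the same route as the paper's proof: the exact variance-of-sample-variance formula, the uniform fourth-moment bound $c_{\cY}$ from Lemma~\ref{lem:finite_moment}, Cauchy--Schwarz on the covariance terms (the design points on a level share the stream $\varpi_\ell$, so these do not vanish), the weight-sum bound $1+2p_\ell$ from Assumption~\ref{asm:decay_ratio}, and integration over the compact $\Theta$. Two small corrections to the first step: the leading factor in the sample-variance formula is $M_\ell^{-1}$, not $(M_\ell-1)^{-1}$, i.e.\ $\Var{\cV(\theta',M_\ell)} = M_\ell^{-1}\Msup{4}{\cY(\theta')} - \tfrac{M_\ell-3}{M_\ell(M_\ell-1)}\Varsup{2}{\cY(\theta')}$; and your parenthetical about $M_\ell=2$ is backwards --- there the negative $(M_\ell-3)$ makes the $\Varsup{2}{\cY}$ term \emph{add} to the variance (one gets $\tfrac12\Msup{4}{\cY}+\tfrac12\Varsup{2}{\cY}$), so the crude bound $\Msup{4}{\cY}/M_\ell$ fails and the case closes only via $\Varsup{2}{\cY(\theta)}\le\Msup{4}{\cY(\theta)}$ (Jensen, equivalently kurtosis $\ge 1$), which you already invoke elsewhere and which is exactly how the paper handles it. Neither slip breaks the argument.
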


\begin{proof}
Given any design point $\theta^\prime\in \Theta$,  the variance of the sample variance, $\Var{ \cV(\theta^\prime,M_{\ell}) }$,  can be expressed as follows (as shown in (2.28) of \cite{mycek2019multilevel}):  
	\[ \Var{ \cV(\theta^\prime,M_{\ell}) } = \frac{\Msup{4}{\cY(\theta^\prime) }}{M_{\ell}} -  \frac{(M_{\ell}-3)\Varsup{2}{\cY(\theta^\prime) }}{M_{\ell}(M_{\ell}-1)}\ , \quad \mbox{ when  $M_{\ell} \geq 2$}. 	
	\]
Hence, it follows that
\begin{equation}\label{eq:variance_proof_bound_1}
	\Var{ \cV(\theta^\prime,M_{\ell}) } \leq \frac{\Msup{4}{\cY(\theta^\prime) }}{M_{\ell}} \leq \frac{\Msup{4}{\cY(\theta^\prime) }}{M_{\ell}-1}\ , \quad \mbox{ when } M_{\ell} \geq 3 ; 
\end{equation}
moreover, since the kurtosis $\Msup{4}{\cY(\theta^\prime) }/\Varsup{2}{\cY(\theta^\prime)} \geq 1$,
	\begin{equation}\label{eq:variance_proof_bound_2}
		 \Var{ \cV(\theta^\prime,M_{\ell}) } = \frac{1}{2}\cdot\Msup{4}{\cY(\theta^\prime) } + \frac{1}{2}\cdot \Varsup{2}{\cY(\theta^\prime)} \leq  \Msup{4}{\cY(\theta^\prime) }\ , \mbox{ when } M_{\ell} =2. 
	\end{equation}
It follows from  \eqref{eq:variance_proof_bound_1} and \eqref{eq:variance_proof_bound_2} that 
\begin{equation}\label{eq:variance_proof_bound_3}
	\Var{ \cV(\theta^\prime,M_{\ell}) } \leq \frac{\Msup{4}{\cY(\theta^\prime) }}{M_{\ell}-1} \ , \quad \mbox{when }  M_{\ell} \geq 2 .
\end{equation}
For any $ \ell \in \mathbb{N}$ and  $\theta \in \Theta$, 
\begin{align}
	\label{eq:estimator_var_bound}
\Var{\Vhatsub{\ell}{\theta, M_{\ell}}} 
  \leq\left(\sum_{i=1}^{N_{\ell}} w_{i}^{\ell}(\theta)\right)^{2}  \sup _{\theta^{\prime} \in \Theta} \Var{ \cV(\theta^{\prime},M_{\ell}) }  
  \leq \left(\sum_{i=1}^{N_{\ell}} w_{i}^{\ell}(\theta)\right)^{2}  \frac{c_{\cY}}{M_{\ell}-1} \ ,   
\end{align}	
where the first inequality follows from \eqref{eq:metamodel}, the second one follows from \eqref{eq:variance_proof_bound_3}, and recall that $c_{\cY} \coloneqq \sup_{\theta \in \Theta} \Msup{4}{\cY(\theta)} < \infty$ (from Lemma \ref{lem:finite_moment}). By Assumption \ref{asm:decay_ratio}, the square of the sum of weights in \eqref{eq:estimator_var_bound} is bounded above by $ \left(1+2 p_{\ell}\right)^{2} $, and since $ p_{\ell} \longrightarrow 0$ as $\ell \rightarrow \infty$, $\left(1+2 p_{\ell}\right)^{2} $ is bounded. By noting that the bound in \eqref{eq:estimator_var_bound} is uniform over $\theta \in \Theta$, we have 
$\left\|\Var{\Vhatsub{\ell}{\cdot, M_\ell} }\right\|_{1} \leq \bar{v} / (M_\ell-1)$, 
where $\bar{v} \coloneqq  \mbox{diam}(\Theta) \sup _{\ell \in \mathbb{N}}\left(1+2 p_{\ell}\right)^{2} c_{\cY} $ is finite by Assumption \ref{asm:domain} and Lemma  \ref{lem:finite_moment}.
\end{proof}

To study the variance of the refinement estimator $\DeltaVhatsub{\ell}{\theta, M_{\ell}}$, we rely on Lemma \ref{lem:var_bound} below that upper bounds the difference in the sample variances obtained at any two  points $\theta_1$ and $\theta_2$ in $\Theta$. The proof of Lemma~\ref{lem:var_bound} is deferred to Appendix~\ref{app:lem_var_bound}.
\begin{lemma}\label{lem:var_bound}
	Under Assumptions \ref{asm:domain}, \ref{asm:finite_moment}, and \ref{asm:LP_Y}, for any $ \theta_1, \theta_2 \in \Theta$ and $M_{\ell} \geq 2$,
	   \[  \Var{\cV(\theta_1,M_{\ell})-\cV(\theta_2,M_{\ell})} \leq \frac{8 \|\theta_1 - \theta_2\|_2^2}{M_{\ell}-1}  \left(\E{ \kappa_{y}^4 } + 2\Esup{2}{ \kappa_{y}^2} + \Esup{4}{\kappa_{y}} \right)^{\frac{1}{2}} c_{\cY}^\frac{1}{2} \ , \]
	  where  recall that  $c_{\cY}\coloneqq \sup_{\theta \in \Theta} \Msup{4}{\cY(\theta)} < \infty$  (from Lemma \ref{lem:finite_moment}).
\end{lemma}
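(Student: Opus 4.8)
The plan is to expand the difference of sample variances into a polynomial in the paired differences $D_m := \cY(\theta_1,\omega_m) - \cY(\theta_2,\omega_m)$ and then bound its variance by repeatedly invoking the Cauchy--Schwarz inequality together with Lemma \ref{lem:finite_moment} and Assumption \ref{asm:LP_Y}. First I would recall the algebraic identity for the sample variance, namely $\cV(\theta, M_\ell) = \frac{1}{2M_\ell(M_\ell-1)}\sum_{m=1}^{M_\ell}\sum_{n=1}^{M_\ell}\bigl(\cY(\theta,\omega_m)-\cY(\theta,\omega_n)\bigr)^2$. Applying this to both $\theta_1$ and $\theta_2$ and subtracting, the difference $\cV(\theta_1,M_\ell)-\cV(\theta_2,M_\ell)$ becomes $\frac{1}{2M_\ell(M_\ell-1)}\sum_{m,n}\bigl[(\cY(\theta_1,\omega_m)-\cY(\theta_1,\omega_n))^2 - (\cY(\theta_2,\omega_m)-\cY(\theta_2,\omega_n))^2\bigr]$. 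Using $a^2-b^2 = (a-b)(a+b)$ with $a = \cY(\theta_1,\omega_m)-\cY(\theta_1,\omega_n)$ and $b = \cY(\theta_2,\omega_m)-\cY(\theta_2,\omega_n)$, the bracketed term factors as $(D_m - D_n)\bigl(\cY(\theta_1,\omega_m)-\cY(\theta_1,\omega_n)+\cY(\theta_2,\omega_m)-\cY(\theta_2,\omega_n)\bigr)$, where by Assumption \ref{asm:LP_Y} we have $|D_m| \le \kappa_y(\omega_m)\|\theta_1-\theta_2\|$ almost surely.

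Next I would bound the variance of this double sum. Writing $S := \cV(\theta_1,M_\ell)-\cV(\theta_2,M_\ell)$, I would use $\Var{S} \le \E{S^2}$ is too lossy in general, so instead I would exploit that $S$ has the structure of a (degenerate-ish) U-statistic: after centering, the variance of a symmetric double sum $\sum_{m\ne n} h(\omega_m,\omega_n)$ over i.i.d.\ variables is $O(M_\ell^2)$ times the variance of a single kernel term plus $O(M_\ell^3)$ times a covariance term, and dividing by the $(M_\ell(M_\ell-1))^2$ normalization yields the $1/(M_\ell-1)$ rate (the leading $1/M_\ell$ contribution cancels because the relevant single-variable conditional expectation has controlled size — this is exactly where the factor $M_\ell-1$ rather than $M_\ell$ comes from, mirroring the sample-variance variance formula used in Proposition \ref{prop:var_finite}). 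Concretely, I would bound each kernel term $h(\omega_m,\omega_n) = \frac{1}{2}(D_m-D_n)(\cY(\theta_1,\omega_m)-\cY(\theta_1,\omega_n)+\cY(\theta_2,\omega_m)-\cY(\theta_2,\omega_n))$ in $L^2$: by Cauchy--Schwarz, $\E{h^2} \le \frac{1}{4}\bigl(\E{(D_m-D_n)^4}\bigr)^{1/2}\bigl(\E{(\cY(\theta_1,\omega_m)-\cY(\theta_1,\omega_n)+\cdots)^4}\bigr)^{1/2}$. The first factor is controlled by $\|\theta_1-\theta_2\|^2$ times a constant involving $\E{\kappa_y^4}$, $\Esup{2}{\kappa_y^2}$, and $\Esup{4}{\kappa_y}$ (expanding $(D_m-D_n)^4$ and bounding the cross terms via Cauchy--Schwarz on $\kappa_y(\omega_m)$ and $\kappa_y(\omega_n)$), and the second factor is controlled by $c_{\cY}$ via Lemma \ref{lem:finite_moment} after expanding the fourth power and using that each $\cY(\theta_i,\omega_m)-\cY(\theta_i,\omega_n)$ has a bounded fourth central moment uniformly in $\theta_i$.

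I expect the main obstacle to be extracting the sharp $1/(M_\ell-1)$ dependence (as opposed to a cruder bound with no decay or with a worse constant) while keeping the constant in the clean factored form $\bigl(\E{\kappa_y^4} + 2\Esup{2}{\kappa_y^2} + \Esup{4}{\kappa_y}\bigr)^{1/2} c_{\cY}^{1/2}$ with the explicit leading factor $8$. This requires carefully tracking which terms in the variance decomposition of the double sum survive: the diagonal-type interactions contribute at order $M_\ell^3$ in the unnormalized sum and give the dominant $1/M_\ell$-scale piece, while a cancellation (analogous to the $-(M_\ell-3)/(M_\ell(M_\ell-1))$ correction in the sample-variance formula) upgrades this to $1/(M_\ell-1)$. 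An alternative, possibly cleaner route is to write $\cV(\theta_i,M_\ell)$ in terms of the centered observations and directly compute $\cV(\theta_1,M_\ell)-\cV(\theta_2,M_\ell)$ as a quadratic form, then apply the known formula for the variance of a quadratic form in i.i.d.\ variables; I would try the U-statistic route first and fall back to the quadratic-form computation if the bookkeeping becomes unwieldy. Either way, the Lipschitz bound $|D_m|\le\kappa_y(\omega_m)\|\theta_1-\theta_2\|$ supplies the $\|\theta_1-\theta_2\|_2^2$ factor, and the uniform fourth-moment bound $c_{\cY}$ from Lemma \ref{lem:finite_moment} supplies boundedness of all remaining moments.
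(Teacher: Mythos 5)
Your reduction is, in substance, the same as the paper's: the kernel $\frac{1}{2}(D_m-D_n)(\Sigma_m-\Sigma_n)$ you obtain from the difference-of-squares factorization is exactly the U-statistic form of the sample covariance of $\Delta_{\cY} := \cY(\theta_1,\cdot)-\cY(\theta_2,\cdot)$ and $\Sigma_{\cY} := \cY(\theta_1,\cdot)+\cY(\theta_2,\cdot)$, i.e.\ $\cV(\theta_1,M_{\ell})-\cV(\theta_2,M_{\ell}) = \Covs{\Delta_{\cY}}{\Sigma_{\cY}}$. Where the two arguments diverge is in how the variance of this object is computed. The paper does not run a Hoeffding-type decomposition; it invokes the exact closed-form variance of the sample covariance, $\Var{\Covs{W}{W^{\prime}}} = \Msup{4}{W, W^{\prime}}/M_{\ell} - \bigl((M_{\ell}-2)\Covsup{2}{W}{W^{\prime}}-\Var{W}\Var{W^{\prime}}\bigr)/(M_{\ell}(M_{\ell}-1))$, drops the negative part, and then bounds $\Msup{4}{W, W^{\prime}}\leq\sqrt{\Msup{4}{W}\Msup{4}{W^{\prime}}}$ by Cauchy--Schwarz and $\Var{W}\Var{W^{\prime}}\leq\sqrt{\Msup{4}{W}\Msup{4}{W^{\prime}}}$ by Jensen, landing directly on $\Var{\Covs{W}{W^{\prime}}}\leq (M_{\ell}-1)^{-1}\sqrt{\Msup{4}{W}\Msup{4}{W^{\prime}}}$. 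The constant $8$ then falls out from $\Msup{4}{\Delta_{\cY}}\leq 4\left(\E{\kappa_y^4}+2\Esup{2}{\kappa_y^2}+\Esup{4}{\kappa_y}\right)\|\theta_1-\theta_2\|^4$ and $\Msup{4}{\Sigma_{\cY}}\leq 16 c_{\cY}$, since $\sqrt{4\cdot 16}=8$. This is precisely the bookkeeping you flagged as the main obstacle, and the formula-based route avoids it entirely. Two corrections to your narrative: first, there is no cancellation that ``upgrades'' $1/M_{\ell}$ to $1/(M_{\ell}-1)$ --- the leading term is already $\Msup{4}{W, W^{\prime}}/M_{\ell}$, and $1/(M_{\ell}-1)$ is a deliberate weakening used to absorb the lower-order $\Var{W}\Var{W^{\prime}}/(M_{\ell}(M_{\ell}-1))$ term; second, the relevant fourth moments in the final bound are the central moments of $\Delta_{\cY}$ and $\Sigma_{\cY}$ themselves, not of the paired differences $D_m-D_n$, so bounding $\E{(D_m-D_n)^4}$ as you propose would yield a different (and likely larger) constant. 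Your plan is workable in principle, but to recover the stated constant you should pass to the sample-covariance formulation before computing any variances.
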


 Let $ v_{\ell}(\theta) \coloneqq (M_{\ell} - 1) \Var{\DeltaVhatsub{\ell}{\theta, M_{\ell}}} $ denote the variance per replication in estimating the $\ell$th level refinement.

\begin{proposition}\label{prop:bound_vl}
 If Assumptions \ref{asm:domain}, \ref{asm:finite_moment}, \ref{asm:LP_Y}, and \ref{asm:decay_ratio} hold,  then $\left\|v_{\ell}\right\|_{1} $ is $ \mathcal{O}\left(s^{-2 \alpha \ell}\right) $.
\end{proposition}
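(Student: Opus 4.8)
The plan is to bound $v_\ell(\theta) = (M_\ell - 1)\Var{\DeltaVhatsub{\ell}{\theta, M_\ell}}$ pointwise by an $\cO(s^{-2\alpha\ell})$ quantity that is uniform in $\theta$, and then integrate over the compact set $\Theta$ to obtain the $L_1$ bound. Recall that $\DeltaVhatsub{\ell}{\theta, M_\ell} = \Vhatsub{\ell}{\theta, M_\ell} - \Vhatsub{\ell-1}{\theta, M_\ell}$, so both terms use the \emph{same} replications and the same random number stream $\varpi_\ell$; the difference is only in the design-point sets $\cT_\ell$ and $\cT_{\ell-1}$ and the corresponding weights. Writing this out, $\DeltaVhatsub{\ell}{\theta, M_\ell} = \sum_{i=1}^{N_\ell} w_i^\ell(\theta)\cV(\theta_i^\ell, M_\ell) - \sum_{j=1}^{N_{\ell-1}} w_j^{\ell-1}(\theta)\cV(\theta_j^{\ell-1}, M_\ell)$. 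The key is that since the weights at each level sum to approximately $1$ (by Assumption~\ref{asm:decay_ratio}), the refinement is approximately a weighted average of \emph{differences} of sample variances $\cV(\theta_i^\ell, M_\ell) - \cV(\theta_j^{\ell-1}, M_\ell)$ between points that are both close to $\theta$, so Lemma~\ref{lem:var_bound} applies with a small $\|\theta_i^\ell - \theta_j^{\ell-1}\|$.

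The main steps I would carry out are as follows. First, fix $\theta \in \Theta$. Using the fill-distance structure of Assumption~\ref{asm:decay_ratio}, split each of the two sums into the part indexed by $I^\ell(\theta; r_\ell)$ (resp. $I^{\ell-1}(\theta; r_{\ell-1})$) and the complementary part, whose total weight is at most $p_\ell$ (resp. $p_{\ell-1}$). Rewrite $\DeltaVhatsub{\ell}{\theta, M_\ell}$ as a sum of three contributions: (a) a ``matched'' part formed from the near design points, which can be organized as a convex-combination-like sum of pairwise differences $\cV(\theta_i^\ell, M_\ell) - \cV(\theta_j^{\ell-1}, M_\ell)$ with $\|\theta_i^\ell - \theta_j^{\ell-1}\| \le r_\ell + r_{\ell-1} = \cO(s^{-\alpha\ell})$; (b) the ``far'' terms from level $\ell$, with total weight $\le p_\ell$; and (c) the ``far'' terms from level $\ell-1$, with total weight $\le p_{\ell-1}$, plus the residual coming from the weight sums not being exactly $1$ times $\cV$ at a single point. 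Second, apply the variance inequality $\Var{\sum_k a_k X_k} \le (\sum_k |a_k|)(\sum_k |a_k| \Var{X_k})$ — valid for the weighted sum since one can also bound cross terms by Cauchy–Schwarz — or more simply bound $\Var{A+B+C} \le 3(\Var{A}+\Var{B}+\Var{C})$. For term (a), use Lemma~\ref{lem:var_bound}, which gives $\Var{\cV(\theta_i^\ell,M_\ell)-\cV(\theta_j^{\ell-1},M_\ell)} \le C (r_\ell + r_{\ell-1})^2/(M_\ell-1)$ for a constant $C$ depending only on moments of $\kappa_y$ and $c_{\cY}$; combined with the boundedness of the weight sums this yields a contribution of order $(r_\ell+r_{\ell-1})^2/(M_\ell-1) = \cO(s^{-2\alpha\ell}/(M_\ell-1))$. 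For terms (b) and (c), use Proposition~\ref{prop:var_finite}-type reasoning: each $\Var{\cV(\theta',M_\ell)} \le c_{\cY}/(M_\ell-1)$, the weight sums of the far parts are $\le p_\ell, p_{\ell-1}$ which are $\cO(s^{-\alpha\ell})$, so after squaring these contribute $\cO(s^{-2\alpha\ell}/(M_\ell-1))$ as well (the residual term multiplied by $\sup_\theta\V{\theta}<\infty$ from Lemma~\ref{lem:finite_moment} and Assumption~\ref{asm:domain} behaves likewise). Third, multiply through by $(M_\ell-1)$ to cancel the denominator, observe the resulting bound on $v_\ell(\theta)$ is uniform over $\theta \in \Theta$, and integrate over $\Theta$, which is finite-measure by Assumption~\ref{asm:domain} (compactness), to conclude $\|v_\ell\|_1 = \cO(s^{-2\alpha\ell})$.

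The main obstacle will be step one: organizing the difference of two weighted sums over \emph{different} design-point sets into a clean combination of pairwise differences between \emph{nearby} points, while carefully tracking the leftover mass from both the far-away indices and the fact that $\sum_i w_i^\ell(\theta)$ and $\sum_j w_j^{\ell-1}(\theta)$ each differ from $1$ by at most $p_\ell$ and $p_{\ell-1}$. One clean way around this is to introduce, for each near index $i \in I^\ell(\theta;r_\ell)$, a pairing with some fixed near index $j(i) \in I^{\ell-1}(\theta; r_{\ell-1})$ (nonempty for $\ell$ large, since the fill distance at level $\ell-1$ shrinks and $r_{\ell-1}$ eventually exceeds it; for the finitely many small $\ell$ the bound holds trivially by Proposition~\ref{prop:var_finite} since $v_\ell$ is bounded), and then telescope: $\sum_i w_i^\ell \cV(\theta_i^\ell) - \sum_j w_j^{\ell-1}\cV(\theta_j^{\ell-1}) = \sum_i w_i^\ell(\cV(\theta_i^\ell) - \cV(\theta_{j(i)}^{\ell-1})) + (\sum_i w_i^\ell)\cdot\text{(something near }\theta) - \sum_j w_j^{\ell-1}\cV(\theta_j^{\ell-1})$, with the last two terms combining into a bounded-weight residual that is handled as in (b)–(c). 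Once this bookkeeping is set up, the variance estimates are routine applications of Lemma~\ref{lem:var_bound}, Proposition~\ref{prop:var_finite}, Lemma~\ref{lem:finite_moment}, and Assumption~\ref{asm:decay_ratio}.
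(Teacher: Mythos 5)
Your plan is essentially the paper's proof: split each level's weighted sum into near and far parts via Assumption~\ref{asm:decay_ratio}, control the near part through Lemma~\ref{lem:var_bound} (variance of a difference of sample variances at nearby points is $\mathcal{O}(\|\theta_1-\theta_2\|^2/(M_\ell-1))$), control the far parts and the weight deficits through $p_\ell=\mathcal{O}(s^{-\alpha\ell})$ together with the uniform bound $\Var{\cV(\theta',M_\ell)}\le c_{\cY}/(M_\ell-1)$, then observe the resulting bound on $v_\ell(\theta)$ is uniform in $\theta$ and integrate over the compact $\Theta$. The only real difference is the choice of anchor. The paper inserts the sample variance $\cV(\theta,M_\ell)$ \emph{at the prediction point itself}, computed from the same random number stream and the same replications, and writes $\DeltaVhatsub{\ell}{\theta, M_{\ell}} = \bigl(\Vhatsub{\ell}{\theta,M_\ell}-\cV(\theta,M_\ell)\bigr)-\bigl(\Vhatsub{\ell-1}{\theta,M_\ell}-\cV(\theta,M_\ell)\bigr)$; each bracket is then a single-level object decomposed into near differences $\cV(\theta_i,M_\ell)-\cV(\theta,M_\ell)$ with $\|\theta_i-\theta\|\le r_\ell$, far terms, and a residual $(1-\sum_i w_i^\ell(\theta))\cV(\theta,M_\ell)$. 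This sidesteps entirely the need to couple the two design-point sets, which is exactly the bookkeeping you flag as the main obstacle. Your anchor---a fixed design point $\theta_{j^*}^{\ell-1}$ near $\theta$---does the same job; the non-emptiness of $I^{\ell-1}(\theta;r_{\ell-1})$ for large $\ell$ follows directly from Assumption~\ref{asm:decay_ratio} (since $\sum_{i\in I^{\ell-1}}w_i^{\ell-1}(\theta)\to 1$), and the finitely many small $\ell$ are absorbed into the constant, as you say.

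The one place your write-up does not close as stated is the residual of the telescoping. After extracting $\sum_i w_i^\ell(\theta)\bigl(\cV(\theta_i^\ell,M_\ell)-\cV(\theta_{j^*}^{\ell-1},M_\ell)\bigr)$, you are left with $\bigl(\sum_i w_i^\ell(\theta)\bigr)\cV(\theta_{j^*}^{\ell-1},M_\ell)-\sum_j w_j^{\ell-1}(\theta)\cV(\theta_j^{\ell-1},M_\ell)$. This is \emph{not} a small-total-weight term of the type you describe in (b)--(c): both sums carry total weight close to $1$, and each has variance of order $1/(M_\ell-1)$ rather than $s^{-2\alpha\ell}/(M_\ell-1)$, so calling it ``a bounded-weight residual handled as in (b)--(c)'' does not suffice. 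You need one more pass of the same trick: write it as $\sum_j w_j^{\ell-1}(\theta)\bigl(\cV(\theta_{j^*}^{\ell-1},M_\ell)-\cV(\theta_j^{\ell-1},M_\ell)\bigr)$---another near-difference term for Lemma~\ref{lem:var_bound}, since both points lie within $r_{\ell-1}$ of $\theta$---plus $\bigl(\sum_i w_i^\ell(\theta)-\sum_j w_j^{\ell-1}(\theta)\bigr)\cV(\theta_{j^*}^{\ell-1},M_\ell)$, whose coefficient is at most $p_\ell+p_{\ell-1}$ plus the far-index weights and hence genuinely $\mathcal{O}(s^{-\alpha\ell})$. With that extra decomposition (or by switching to the paper's symmetric anchor at $\theta$) every estimate you list is correct and the argument is complete.
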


\begin{proof}
At any $\theta \in \Theta$ and $M_\ell \geq 2$,  we have  
\begin{align}
	\label{eq:v_l_inequality}
	v_{\ell}(\theta) = & (M_{\ell} - 1) \Var{\Vhatsub{\ell}{\theta, M_{\ell}} - \Vhatsub{\ell-1}{\theta, M_{\ell}}} \nonumber\\
	= & (M_{\ell} - 1) \Var{\Vhatsub{\ell}{\theta, M_{\ell}} - \cV(\theta,M_{\ell}) + \cV(\theta,M_{\ell}) - \Vhatsub{\ell-1}{\theta, M_{\ell}} } \nonumber\\
    \leq & 2(M_{\ell} - 1)\Var{ \Vhatsub{\ell}{\theta, M_{\ell}} - \cV(\theta,M_{\ell}) } + 2(M_{\ell} - 1)\Var{ \Vhatsub{\ell-1}{\theta, M_{\ell}} - \cV(\theta,M_{\ell}) } \ , 
\end{align}
where  $\cV(\theta,M_{\ell}) $ denotes the sample variance that would have been obtained using the same random number stream and number of replications as those used to calculate $\Vhatsub{\ell}{\theta, M_{\ell}}$ at $\theta$.   

  We next show that the first term on the right-hand side of \eqref{eq:v_l_inequality}  is $ \mathcal{O}\left(s^{-2 \alpha \ell}\right) $.  Since
	\begin{align}\label{eq:single_var_diff}
	     \Vhatsub{\ell}{\theta, M_{\ell}} - \cV(\theta,M_{\ell}) 
		= & \sum_{i \in I^{\ell}(\theta; r_{\ell})}w_{i}^{\ell}(\theta) ( \cV(\theta_i,M_{\ell})  - \cV(\theta,M_{\ell})) 
		+  \sum_{i \notin I^{\ell}(\theta; r_{\ell})}w_{i}^{\ell}(\theta) ( \cV(\theta_i,M_{\ell})  - \cV(\theta,M_{\ell}) )  \nonumber\\
		& - \bigg(1 - \sum_{i=1}^{N_{\ell}}w_{i}^{\ell}(\theta)\bigg) \cV(\theta,M_{\ell}) \ ,
	\end{align}
we have
	\begin{align*}
		& (M_{\ell} - 1)\Var{ \Vhatsub{\ell}{\theta, M_{\ell}} - \cV(\theta,M_{\ell}) }
		\leq \  \underbrace{3(M_{\ell} - 1) \Var{ \sum_{i \in I^{\ell}(\theta; r_{\ell})}w_{i}^{\ell}(\theta) \left( \cV(\theta_i,M_{\ell})  - \cV(\theta,M_{\ell}) \right) }}_{(i)} \\
		& + \underbrace{ 3(M_{\ell} - 1) \Var{ \sum_{i \notin I^{\ell}(\theta; r_{\ell})}w_{i}^{\ell}(\theta) \big( \cV(\theta_i,M_{\ell})  - \cV(\theta,M_{\ell}) \big) }}_{(ii)} 
	 + \underbrace{ 3(M_{\ell} - 1) \Var{ (1 - \sum_{i=1}^{N_{\ell}}w_{i}^{\ell}(\theta)) \cV(\theta,M_{\ell}) } }_{(iii)} \ .
	\end{align*}
	
		Hence, it suffices to show that the terms $(i)$, $(ii)$, and $(iii)$  are uniformly bounded over  $\theta\in \Theta$ and are $ \mathcal{O}\left(s^{-2 \alpha \ell}\right) $. Specifically,
	\begin{itemize}
		\item 
For the term $(i)$, we have \sloppy
$(M_\ell-1)\Var{ \sum_{i \in I^{\ell}(\theta; r_{\ell})}w_{i}^{\ell}(\theta) \left( \cV(\theta_i,M_{\ell})  - \cV(\theta,M_{\ell}) \right) } \leq (M_\ell-1)\sum_{i,j \in I^{\ell}(\theta; r_{\ell})}w_{i}^{\ell}(\theta)w_{j}^{\ell}(\theta)\Var{  \cV(\theta_j,M_{\ell})  - \cV(\theta,M_{\ell})}$, by expanding the variance of the weighted sum and further bounding the covariance terms.
The right-hand side can  further  be bounded by
		$ \left(1+p_{\ell}\right)^2 r_{\ell}^{2} \left(\E{\kappa_{y}^4} + 2\Esup{2}{\kappa_{y}^2} + \Esup{4}{ \kappa_{y} }  \right)^{1/2} \allowbreak c_{\cY}^{1/2}$,
	  which follows from Assumption \ref{asm:decay_ratio} and Lemma \ref{lem:var_bound}.
		\item  Similar to the term $(i)$,  based on Assumption \ref{asm:decay_ratio} and Lemma \ref{lem:var_bound}, the term $(ii)$ can be bounded by 
		$p_{\ell}^2 \mbox{diam}(\Theta)^4 \allowbreak \left(\E{\kappa_{y}^4} + 2\Esup{2}{\kappa_{y}^2} + \Esup{4}{ \kappa_{y} }  \right)^{1/2}  c_{\cY}^{1/2} $.
		\item For the term $(iii)$, since $(1 - \sum_{i=1}^{N_{\ell}}w_{i}^{\ell}(\theta))^2 \leq 2\bigl(1 - \sum_{i \notin I^{\ell}(\theta; r_{\ell})}w_{i}^{\ell}(\theta) \bigr)^2 + 2\bigl(\sum_{i \in I^{\ell}(\theta; r_{\ell})}w_{i}^{\ell}(\theta)\bigr)^2$, it can be bounded by $ 4 p_{\ell}^{2} \sup _{\theta \in \Theta} \mbox{Var}\big(\cV(\theta,M_{\ell})\big) $ based on Assumption  \ref{asm:decay_ratio}.
	\end{itemize}
	Given that the bounds for the terms $(i)$, $(ii)$, and $(iii)$ are uniform over  $\theta \in \Theta$ and are $ \mathcal{O}\left(s^{-2 \alpha \ell}\right) $  since $ p_{\ell} $ is $ \mathcal{O}\left(s^{-\alpha \ell}\right) $ by Assumption \ref{asm:decay_ratio}, it follows from  \eqref{eq:single_var_diff} that  the first term on the right-hand side of \eqref{eq:v_l_inequality}  has a bound that is uniform over  $\theta \in \Theta$ and is $ \mathcal{O}\left(s^{-2 \alpha \ell}\right) $.
	Similar derivations yield the same conclusion for the second term on the right-hand side of \eqref{eq:v_l_inequality}. 
The proof is complete by noting that $ v_{\ell}(\theta) $ is uniformly bounded over  $\theta \in \Theta$ and is $ \mathcal{O}\left(s^{-2 \alpha \ell}\right) $.
\end{proof}

\subsection{Computational Complexity}
\label{sec:computation complexity}
This subsection demonstrates that MLMC metamodeling is more computationally efficient than SMC metamodeling in achieving a target MISE level. 

We begin our discussion by outlining several conditions crucial for analyzing the computational complexity of MLMC and SMC metamodeling. These conditions address the decay rates of both bias and variance, as well as the growth rate of the number of design points across successive levels.
 
\begin{description}
	\item [Condition 1 \label{con:bias}] For all $\ell\in \mathbb{N}$, the integrated squared bias of the $\ell $th level variance function estimator satisfies $ \left\|\Bias{\widehat{\mathbb{V}}_{\ell}}\right\|_{2}^{2}  \leq\left(b s^{-\alpha \ell}\right)^{2} $ for some $b \geq 0$.
	\item [Condition 2\label{con:var}] For all $\ell\in \mathbb{N}$ and all $M_\ell \geq 2$, the integrated variance of the $ \ell $th level variance function estimator satisfies $ \left\|\Var{\Vhatsub{\ell}{\cdot, M_{\ell}}}\right\|_{1} $ $ \leq \bar{v} / (M_{\ell}-1) $. 
	\item [Condition 3\label{con:vl}] For all $\ell\in \mathbb{N}$, the integrated variance per replication in estimating the $\ell$th level refinement  satisfies $ \left\|v_{\ell}\right\|_{1} \leq \tau^{2} s^{-2 \alpha \ell} $ for some $\tau^2 > 0$.
	\item [Condition 4\label{con:comp}] For all $\ell\in \mathbb{N}$, the number of design points $ N_{\ell} $ in the design-point set  on level $\ell$,  $ \mathcal{T}_{\ell}$, satisfies $N_{\ell} \leq c s^{\gamma \ell} $  for some $c , \gamma > 0$.
\end{description}
It is worth noting that Propositions~\ref{prop:bias_bound}, \ref{prop:var_finite}, and \ref{prop:bound_vl}  have shown that  Conditions 1 to 3 are satisfied. Condition 4 can be met by appropriately configuring the number of design points   $N_{\ell}$ in $ \mathcal{T}_{\ell}$.  
 
Define $\phi := \gamma / (2 \alpha)$,  where $\gamma$ denotes the growth rate of the number of design points in Condition 4, and $\alpha$ characterizes the diminishing rate of the bias and variance components  in Conditions 1 and 3.
Theorems~\ref{thm:computation} and \ref{thm:computation_SMC} quantify the computational complexity of MLMC and SMC metamodeling for variance function estimation, respectively.
 \begin{theorem}\label{thm:computation}	
	 Fix $\epsilon>0$. Under Conditions 1 to 4, for MLMC metamodeling, the computational budget required, in terms of the total number of simulation replications needed to achieve $\operatorname{MISE} < \epsilon^{2} $,  satisfies
	\begin{itemize}
		\item $ \mathcal{O}\left(\epsilon^{-2}\right) $ if $ \phi<1 $,
		\item $ \mathcal{O}\left(\left(\epsilon^{-1}\left(\log \epsilon^{-1}\right)\right)^{2}\right) $ if $ \phi=1 $,
		\item $ \mathcal{O}\left(\epsilon^{-2 \phi}\right) $ if $ \phi>1 $.
	\end{itemize}
\end{theorem}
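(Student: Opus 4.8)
This is the classical MLMC complexity theorem (cf. Giles 2008, Theorem 3.1; Rosenbaum--Staum 2017) adapted to the metamodeling setting, so the plan is to follow the standard optimization-of-allocation argument. First I would fix the finest level $L$ so that the bias contribution is controlled: by Proposition~\ref{prop:bias_bound} and Condition~1, $\left\|\Bias{\widehat{\mathbb{V}}}\right\|_{2}^{2} = \left\|\Bias{\widehat{\mathbb{V}}_{L}}\right\|_{2}^{2} \leq (b s^{-\alpha L})^{2}$, so choosing $L = \lceil \alpha^{-1}\log_{s}(\sqrt{2}\,b\,\epsilon^{-1})\rceil$ forces this term below $\epsilon^{2}/2$; note $s^{-\alpha L}$ is then $\Theta(\epsilon)$ and $s^{\gamma L} = (s^{-\alpha L})^{-\gamma/\alpha}$ is $\Theta(\epsilon^{-2\phi})$. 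It remains to allocate the per-level replication counts $M_{\ell}$ so that the variance term $\left\|\operatorname{Var}(\widehat{\mathbb{V}})\right\|_{1} \leq \sum_{\ell=0}^{L}\left\|v_{\ell}\right\|_{1}/(M_{\ell}-1) \leq \sum_{\ell=0}^{L}\tau^{2}s^{-2\alpha\ell}/(M_{\ell}-1)$ stays below $\epsilon^{2}/2$ while minimizing the total cost $\sum_{\ell=0}^{L} N_{\ell} M_{\ell}$; here I use independence of the level refinements so variances add, and Condition~3 for the numerator.

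The core of the argument is the Lagrangian/Cauchy--Schwarz optimization: treating $M_{\ell}$ as continuous, minimizing $\sum_{\ell} N_{\ell} M_{\ell}$ subject to $\sum_{\ell} \tau^{2}s^{-2\alpha\ell}/M_{\ell} \le \epsilon^{2}/2$ gives the optimal choice $M_{\ell} \propto \sqrt{(\tau^{2}s^{-2\alpha\ell})/N_{\ell}}$, i.e. $M_{\ell} = \lceil 2\epsilon^{-2}\big(\sum_{k=0}^{L}\sqrt{\left\|v_k\right\|_{1} N_{k}}\big)\sqrt{\left\|v_{\ell}\right\|_{1}/N_{\ell}}\,\rceil + 2$ (the $+2$ to respect $M_{\ell}\ge 2$ and absorb ceilings), which yields total cost bounded by a constant times $\epsilon^{-2}\big(\sum_{\ell=0}^{L}\sqrt{\left\|v_{\ell}\right\|_{1} N_{\ell}}\big)^{2}$ plus the lower-order additive $\sum_{\ell} N_{\ell}$ coming from the ceilings and the $M_\ell \ge 2$ floor. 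Using Conditions~3 and~4, $\sqrt{\left\|v_{\ell}\right\|_{1}N_{\ell}} \leq \tau\sqrt{c}\, s^{(\gamma/2 - \alpha)\ell} = \tau\sqrt{c}\, s^{\alpha(\phi-1)\ell}$, so $\sum_{\ell=0}^{L} s^{\alpha(\phi-1)\ell}$ is the geometric sum that produces the trichotomy: it is $\mathcal{O}(1)$ when $\phi<1$, $\mathcal{O}(L) = \mathcal{O}(\log\epsilon^{-1})$ when $\phi=1$, and $\mathcal{O}(s^{\alpha(\phi-1)L}) = \mathcal{O}(\epsilon^{-(\phi-1)})$ when $\phi>1$. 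Squaring and multiplying by $\epsilon^{-2}$ gives $\mathcal{O}(\epsilon^{-2})$, $\mathcal{O}(\epsilon^{-2}(\log\epsilon^{-1})^{2})$, and $\mathcal{O}(\epsilon^{-2}\cdot\epsilon^{-2(\phi-1)}) = \mathcal{O}(\epsilon^{-2\phi})$ respectively; in the last case one checks the additive $\sum_{\ell}N_{\ell} = \mathcal{O}(s^{\gamma L}) = \mathcal{O}(\epsilon^{-2\phi})$ is of the same order and hence does not dominate, and similarly in the other two cases it is lower order.

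The step I expect to be most delicate is bookkeeping the rounding and the $M_\ell \ge 2$ constraint rigorously: the idealized continuous optimum may prescribe $M_{\ell}$ below $2$ at coarse levels (when $\phi<1$, the optimal $M_\ell$ decreases in $\ell$) or, more importantly, the ceiling $\lceil\cdot\rceil + 2$ inflates each $M_\ell$ by an additive $\mathcal{O}(1)$, contributing $\sum_{\ell=0}^{L} N_{\ell} \cdot \mathcal{O}(1)$ to the cost. One must verify this additive term is dominated by (or at worst of the same order as) the main term in each regime — straightforward but needs the explicit rate $\sum_{\ell=0}^L N_\ell = \mathcal{O}(s^{\gamma L})$ compared against $\epsilon^{-2}\big(\sum_\ell \sqrt{\|v_\ell\|_1 N_\ell}\big)^2$. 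I would also need to confirm that with this choice of $M_{\ell}$ the variance constraint is actually met with the stated constant (plugging the allocation back into $\sum_\ell \|v_\ell\|_1/(M_\ell - 1)$ and using $M_\ell - 1 \ge M_\ell / 2$ for $M_\ell \ge 2$), and that Condition~2 is not separately needed here beyond guaranteeing finiteness — the refinement-variance Condition~3 is what drives the telescoping variance bound. Everything else is the standard geometric-series casework.
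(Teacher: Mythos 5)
Your proposal is correct and follows essentially the same route as the paper's proof: fix $L = \lceil \alpha^{-1}\log_s(\sqrt{2}b\epsilon^{-1})\rceil$ to control the bias, solve the constrained allocation problem to get $M_\ell \propto \epsilon^{-2}\sqrt{\|v_\ell\|_1/N_\ell}\sum_{\ell'}\sqrt{\|v_{\ell'}\|_1 N_{\ell'}}$, and run the geometric-series trichotomy on $\sum_\ell s^{(\gamma-2\alpha)\ell/2}$ together with the additive $\sum_\ell N_\ell = \mathcal{O}(s^{\gamma L})$ term from the ceilings. The only cosmetic difference is that you package the cost as $\epsilon^{-2}\bigl(\sum_\ell\sqrt{\|v_\ell\|_1 N_\ell}\bigr)^2$ plus the rounding term before doing the casework, whereas the paper substitutes the per-level bound on $M_\ell$ into $\sum_\ell M_\ell N_\ell$ case by case; the bookkeeping of the ceiling/floor constraints that you flag as delicate is handled in the paper exactly as you propose.
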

 \begin{theorem}\label{thm:computation_SMC}
	 Fix $\epsilon>0$. Under Condition 2, for SMC metamodeling, the computational budget required, in terms of the total number of simulation replications needed to achieve $\operatorname{MISE} < \epsilon^{2} $, satisfies $ \mathcal{O}\left(\epsilon^{-2(1+\phi)}\right) $.
\end{theorem}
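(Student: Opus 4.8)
\textbf{Proof proposal for Theorem~\ref{thm:computation_SMC}.}

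The plan is to analyze the SMC metamodeling estimator, which uses a single level of design points, and balance its bias and variance contributions to the MISE. Recall from \eqref{eq:mise} that $\operatorname{MISE}(\widehat{\mathbb{V}}) = \left\|\operatorname{Var}(\widehat{\mathbb{V}})\right\|_{1} + \left\|\operatorname{Bias}(\widehat{\mathbb{V}})\right\|_{2}^{2}$. For an SMC metamodel built on a single design-point set of level $L$ with $M$ replications at each point, Proposition~\ref{prop:bias_bound} gives $\left\|\operatorname{Bias}(\widehat{\mathbb{V}})\right\|_{2}^{2} = \mathcal{O}(s^{-2\alpha L})$, and Condition~2 gives $\left\|\operatorname{Var}(\widehat{\mathbb{V}})\right\|_{1} \leq \bar{v}/(M-1)$. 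To ensure $\operatorname{MISE} < \epsilon^2$ it suffices to make each term $\mathcal{O}(\epsilon^2)$: the bias requirement forces $s^{-2\alpha L} = \mathcal{O}(\epsilon^2)$, i.e. $s^{\alpha L} = \Theta(\epsilon^{-1})$, hence $s^{L} = \Theta(\epsilon^{-1/\alpha})$; the variance requirement forces $M - 1 = \Theta(\epsilon^{-2})$, hence $M = \Theta(\epsilon^{-2})$.

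Next I would count the total number of simulation replications, which is $N_L \cdot M$, where $N_L$ is the number of design points at level $L$. By Condition~4, $N_L \leq c\, s^{\gamma L} = c\,(s^{L})^{\gamma} = \mathcal{O}\big((\epsilon^{-1/\alpha})^{\gamma}\big) = \mathcal{O}(\epsilon^{-\gamma/\alpha}) = \mathcal{O}(\epsilon^{-2\phi})$, using the definition $\phi = \gamma/(2\alpha)$. Multiplying by $M = \mathcal{O}(\epsilon^{-2})$ yields a total computational budget of $\mathcal{O}(\epsilon^{-2\phi}\cdot\epsilon^{-2}) = \mathcal{O}(\epsilon^{-2(1+\phi)})$, as claimed. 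I would also note that one can optimize the constant by choosing the smallest feasible $L$ (the smallest integer with $s^{-2\alpha L}$ below a fixed fraction of $\epsilon^2$), but this only affects constants, not the order.

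The statement of the theorem only invokes Condition~2, but implicitly the bias bound from Proposition~\ref{prop:bias_bound} (equivalently a Condition~1-type bound) and Condition~4 are needed to pin down $L$ and $N_L$; I would make this dependence explicit in the write-up, or simply remark that Proposition~\ref{prop:bias_bound} supplies the bias decay and that Condition~4 is assumed to hold for the level-$L$ design set. The argument is essentially a routine bias--variance balancing calculation, so there is no serious obstacle; the only point requiring a little care is being precise that the SMC estimator is exactly the finest-level building block $\Vhatsub{L}{\cdot, M, \varpi}$ of the MLMC construction (so that Propositions~\ref{prop:bias_bound} and~\ref{prop:var_finite} apply verbatim with $\ell = L$), and that the choice of $L$ as a function of $\epsilon$ is admissible, i.e. $L = \lceil \tfrac{1}{\alpha}\log_s \epsilon^{-1}\rceil + \mathcal{O}(1)$ is a nonnegative integer for all small $\epsilon$.
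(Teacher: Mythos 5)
Your proposal is correct and follows essentially the same argument as the paper: Condition~2 forces $M=\mathcal{O}(\epsilon^{-2})$, the bias decay forces the design-point count to match the finest MLMC level, $N=\mathcal{O}(s^{\gamma L})=\mathcal{O}(\epsilon^{-\gamma/\alpha})=\mathcal{O}(\epsilon^{-2\phi})$, and the product gives $\mathcal{O}(\epsilon^{-2(1+\phi)})$. Your remark that the bias bound (Condition~1 / Proposition~\ref{prop:bias_bound}) and Condition~4 are implicitly needed beyond the stated Condition~2 is accurate and consistent with what the paper's own proof actually uses.
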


The proofs of Theorems \ref{thm:computation} and \ref{thm:computation_SMC} are provided in Appendices~\ref{app:thm_proof} and \ref{app:proof_smc}, respectively.
We highlight key insights into MLMC metamodeling derived from the proof of Theorem \ref{thm:computation} as follows. 
Adding a new design level (say, level $\ell$) and conducting the corresponding simulation runs requires an additional budget of $M_{\ell}N_{\ell} = \mathcal{O}(s^{(\gamma - 2\alpha)\ell/2})$, since  the number of replications $M_{\ell}$ is $\mathcal{O}(s^{-(2\alpha + \gamma)\ell/2})$ and the number of design points $N_{\ell}$ is $\mathcal{O}(s^{\gamma \ell})$ at level $\ell$. 
We can understand why the ratio $\phi= \gamma / (2 \alpha)$ determines the computational complexity as $\epsilon^2$ becomes small in the following way. If $\phi < 1$ (i.e., $\gamma - 2\alpha < 0$), then the budget required to add an extra level is negligible as $\ell$ increases, and the total computational budget to attain $\text{MISE}<\epsilon^{2} $ is $ \mathcal{O}\left(\epsilon^{-2}\right)$, which is comparable to the typical  computational complexity of estimating the expectation of a random variable using MC methods.
If $\phi = 1$ (i.e., $\gamma - 2\alpha = 0$), the budget required to add a new level is $\mathcal{O}(1)$, leading to a total computational budget of order $ \mathcal{O}\left(\left(\epsilon^{-1}\left(\log \epsilon^{-1}\right)\right)^{2}\right) $. This is slightly worse than the case when $\phi < 1$.  
If $\phi > 1$ (i.e., $\gamma - 2\alpha > 0$), the budget required to add a new level grows exponentially with the design level $\ell$. Consequently, the total computational budget to achieve $\text{MISE}<\epsilon^{2} $ is $ \mathcal{O}\left(\epsilon^{-2 \phi}\right) $, which is the most demanding among the three cases.

We note that the parameter $\phi= \gamma / (2 \alpha)$ is determined by the characteristics of the design-point generation scheme (e.g., Sobol' sequences) and the function approximation method (e.g., kernel smoothing). The primary flexibility available to the user is in adjusting the growth rate of the number of design points, $\gamma$. 
 In practice, selecting a high value for $\gamma$ results in fewer design levels, which may reduce the effectiveness of variance reduction. Conversely, selecting an excessively low value for $\gamma$ slows bias reduction, requiring a large number of design levels. Therefore, carefully selecting $\gamma$ is essential for optimizing the efficiency of MLMC metamodeling. However, the optimal choice depends on specific examples.

\section{Asymptotic Normality of the MLMC Metamodeling Estimator}\label{sec:clt}

In this section, we begin by examining the asymptotic properties of both the single-level and refinement estimators. We then establish the asymptotic normality of the MLMC metamodeling estimator, as given in \eqref{eq:mlmc_metamodeling_estimator}, for estimating the variance function.

An important property of MLMC estimators, investigated in various studies, is their asymptotic normality under suitable conditions. Alaya and Kebaier (2015) demonstrated the applicability of the Lindeberg-Feller central limit theorem (CLT) to  the MLMC method associated with the Euler discretization scheme \cite{ben2015central}. Collier et al.\ (2015) established the asymptotic normality of their proposed MLMC estimator for the expected value of a bounded linear or Lipschitz functional of the solution to stochastic differential equations \cite{collier2015continuation}. Dereich and Li (2016) explored the asymptotic normality of the MLMC estimator in the context of stochastic differential equations driven by L\'evy processes \cite{dereich2016multilevel}. 
Giorgi et al.\ (2017) investigated the asymptotic normality of both MLMC and weighted MLMC estimators, applying their theoretical findings to discretization schemes of diffusions and nested Monte Carlo methods \cite{giorgi2017limit}. Notably, most studies have typically assumed uniform integrability. However, Hoel and Krumscheid (2019) demonstrated that this condition is not necessary for CLTs, providing near-optimal weaker conditions under which CLTs can be established \cite{hoel2019central}.
While the asymptotic normality of MLMC estimators has been extensively studied, research on the asymptotic normality of MLMC metamodeling estimators remains relatively scarce. This work addresses this gap, offering valuable insights into inference and uncertainty quantification for MLMC metamodeling estimators.

 We first introduce a few key definitions to facilitate the analysis in this section. An MLMC metamodeling estimator that meets Conditions 1 to 4 specified in Section~\ref{sec:theory} and has an MISE less than a given value of $\epsilon^2$ (where $\epsilon > 0$) is defined as an \emph{$\epsilon^2$-estimator.} Define $L_{\epsilon}$ as the number of design levels required for constructing an $\epsilon^2$-estimator  and $M_{\ell, \epsilon}$ as  the number of simulation replications expended at each design point on level $\ell$ for any $\ell \in [L_{\epsilon}]$. Their respective forms can be given as follows:  
  \[L_\epsilon \coloneqq \ceil[\Big]{\frac{\log_s\left(\sqrt{2}b \epsilon^{-1}\right)}{\alpha} } \quad \mbox{and} \quad M_{\ell, \epsilon} \coloneqq \ceil[\Big]{ 2 \epsilon^{-2} \sqrt{ V_{\ell,\epsilon} / N_{\ell}} S_{L_{\epsilon}} } \ , \]
 where $V_{\ell,\epsilon} \coloneqq \left\| v_{\ell,\epsilon} \right\|_{1}$ and $S_{L_{\epsilon}} \coloneqq \sum_{\ell=0}^{L_{\epsilon}} \sqrt{ V_{\ell,\epsilon}N_{\ell}}$,  
 with $v_{\ell,\epsilon}$  defined similarly to $v_{\ell}$ (see its definition  above Proposition \ref{prop:bound_vl} in Subsection \ref{subsec:variance_analysis}), with $M_{\ell}$ replaced by $M_{\ell, \epsilon}$. 
 We see that $L_\epsilon$ and $M_{\ell, \epsilon}$ can be regarded as functions of $\epsilon$, with $M_{\ell, \epsilon} \rightarrow \infty$ for any $\ell \in [L_{\epsilon}]$ and $L_{\epsilon} \rightarrow \infty$,  as $\epsilon \rightarrow 0$.   
  In this section, we investigate the asymptotic normality of the MLMC metamodeling estimator and its different components as $\epsilon \rightarrow 0$. The established asymptotic normality holds pointwise for each prediction point $\theta \in \Theta$.

\subsection{Asymptotic Analysis of Single-level and Refinement Estimators}
\label{sec:CLT-single-refinement}
This subsection examines the asymptotic properties of the single-level estimator $\Vhatsub{\ell}{\theta, M_{\ell, \epsilon}, \varpi_{\ell}}$ given in  \eqref{eq:metamodel} and  the refinement estimator  $\DeltaVhatsub{\ell}{\theta, M_{\ell, \epsilon}, \varpi_{\ell}} $
for $\ell \in [L_{\epsilon}]$ given in   \eqref{eq:mlmc_metamodeling_estimator} as $\epsilon\rightarrow 0$.  

For ease of exposition, in this section, we abuse the notation slightly by writing the simulation output $\cY(\theta, \omega_m)$ as $\cY_{m}(\theta, \varpi_{\ell})$, which highlights that the simulation outputs are generated using the random number stream $\varpi_{\ell}$. That is, $\cY_{m}(\theta, \varpi_{\ell})$ denotes the $m$th random output obtained at $\theta$ using the random number stream $\varpi_{\ell}$, for  $m \in [M_{\ell, \epsilon}]^{+}$.  Define $\cZ_m(\theta, \varpi_{\ell}) \coloneqq \cY_m(\theta, \varpi_{\ell}) - \E{\cY_m(\theta, \varpi_{\ell})}$ as the centralized version of $\cY_{m}(\theta, \varpi_{\ell}) $, and let $\overline{Z^2}(\theta_{i}^{\ell}, \varpi_{\ell}) \coloneqq \sum_{m=1}^{M_{\ell, \epsilon}} \cZ_{m}^2(\theta_{i}^{\ell}, \varpi_{\ell}) /  M_{\ell, \epsilon}$. We note that the quantities defined from centralized outputs will facilitate the analysis of the asymptotic normality of the single-level and refinement estimators in this subsection, as well as the metamodeling estimator in the following subsection. 
  
 We first  establish  the asymptotic normality of $\Vhatsub{\ell}{\theta, M_{\ell, \epsilon}, \varpi_{\ell}} $ as $\epsilon\rightarrow 0$, with the proof provided in Appendix \ref{proof:clt_single_level}. 

\begin{proposition}\label{prop:clt_single_level}
Suppose that $\Vhat{\theta}$ is an $\epsilon^2$-estimator.   For all $\ell \in [L_{\epsilon}]$, it holds that the corresponding $\ell$th level estimator $\Vhatsub{\ell}{\theta, M_{\ell, \epsilon}, \varpi_{\ell}}$  is asymptotically normal:
	 \[\sqrt{M_{\ell,\epsilon}}\left(\Vhatsub{\ell}{\theta, M_{\ell, \epsilon}, \varpi_{\ell}} - \E{\Vhatsub{\ell}{\theta, M_{\ell, \epsilon}, \varpi_{\ell}}}\right) \Longrightarrow \mathcal{N}\left(0, \Var{\sum_{i=1}^{N_{\ell}}w_{i}^{\ell}(\theta) \cY_1^2(\theta_i^{\ell}, \varpi_{\ell})} \right) \text{ as } \epsilon \rightarrow 0.\]
\end{proposition}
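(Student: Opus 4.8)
The plan is to reduce the asymptotic normality of $\Vhatsub{\ell}{\theta, M_{\ell,\epsilon}, \varpi_{\ell}}$ to a multivariate CLT for sample averages of i.i.d.\ random vectors, followed by an application of the delta method (or, more precisely, a direct Slutsky-type argument). First I would recall that, by \eqref{eq:metamodel}, $\Vhatsub{\ell}{\theta, M_{\ell,\epsilon}, \varpi_{\ell}} = \sum_{i=1}^{N_\ell} w_i^\ell(\theta)\,\cV(\theta_i^\ell, M_{\ell,\epsilon}, \varpi_\ell)$, where the weights and design points are fixed (nonrandom) and $N_\ell$ is fixed once $\ell$ is fixed. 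The only source of randomness growing with $\epsilon$ is the number of replications $M_{\ell,\epsilon}\to\infty$. The key algebraic step is to write each sample variance as
\[
\cV(\theta_i^\ell, M_{\ell,\epsilon}, \varpi_\ell) = \frac{M_{\ell,\epsilon}}{M_{\ell,\epsilon}-1}\left(\overline{\cZ^2_i} - \bar{\cZ_i}^2\right),
\]
where $\bar{\cZ_i} = M_{\ell,\epsilon}^{-1}\sum_m \cZ_m(\theta_i^\ell,\varpi_\ell)$ and $\overline{\cZ^2_i} = \overline{Z^2}(\theta_i^\ell,\varpi_\ell)$ are averages of the i.i.d.\ (across $m$) vectors $\big(\cZ_m(\theta_i^\ell,\varpi_\ell),\ \cZ_m^2(\theta_i^\ell,\varpi_\ell)\big)_{i=1}^{N_\ell}$. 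Here I use that, although outputs at different design points share the stream $\varpi_\ell$ and are therefore dependent across $i$, they are i.i.d.\ across the replication index $m$; Lemma~\ref{lem:finite_moment} guarantees $c_{\cY}<\infty$, so these $2N_\ell$-dimensional summands have finite second moments (we need finite variance of $\cZ_m^2$, i.e.\ a finite fourth moment of $\cY$, which is exactly Lemma~\ref{lem:finite_moment}).

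Next I would apply the multivariate Lindeberg–L\'evy CLT to the vector of averages $\big(\bar{\cZ_i},\ \overline{\cZ^2_i}\big)_{i\in[N_\ell]^+}$: as $\epsilon\to 0$ (hence $M_{\ell,\epsilon}\to\infty$),
\[
\sqrt{M_{\ell,\epsilon}}\Big(\big(\bar{\cZ_i} - 0,\ \overline{\cZ^2_i} - \V{\theta_i^\ell}\big)_i\Big) \Longrightarrow \mathcal{N}(\mathbf{0}, \Sigma_\ell)
\]
for the appropriate covariance matrix $\Sigma_\ell$. Then I combine the linear functional $\sum_i w_i^\ell(\theta)(\cdot)$ with the negligible correction terms. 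Specifically, $\sqrt{M_{\ell,\epsilon}}\,\bar{\cZ_i}^2 = \big(\sqrt{M_{\ell,\epsilon}}\,\bar{\cZ_i}\big)\cdot\bar{\cZ_i} \converge{p} 0$ since the first factor is $O_p(1)$ and $\bar{\cZ_i}\converge{p}0$; and $M_{\ell,\epsilon}/(M_{\ell,\epsilon}-1)\to 1$. Therefore
\[
\sqrt{M_{\ell,\epsilon}}\left(\Vhatsub{\ell}{\theta, M_{\ell,\epsilon}, \varpi_\ell} - \sum_{i=1}^{N_\ell} w_i^\ell(\theta)\,\overline{\cZ^2_i}\right) \converge{p} 0,
\]
and by Slutsky's theorem the left-hand side of the claimed convergence has the same limit as $\sqrt{M_{\ell,\epsilon}}\big(\sum_i w_i^\ell(\theta)\overline{\cZ^2_i} - \sum_i w_i^\ell(\theta)\V{\theta_i^\ell}\big)$, which by the CLT is normal with mean zero and variance $\Var{\sum_i w_i^\ell(\theta)\cZ_1^2(\theta_i^\ell,\varpi_\ell)}$. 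Finally I would check that this matches the stated limiting variance $\Var{\sum_i w_i^\ell(\theta)\cY_1^2(\theta_i^\ell,\varpi_\ell)}$: since $\cZ_1 = \cY_1 - \E{\cY_1}$, we have $\cZ_1^2 = \cY_1^2 - 2\E{\cY_1}\cY_1 + (\E{\cY_1})^2$, so $\sum_i w_i^\ell(\theta)\cZ_1^2(\theta_i^\ell,\varpi_\ell)$ differs from $\sum_i w_i^\ell(\theta)\cY_1^2(\theta_i^\ell,\varpi_\ell)$ by a linear combination of the $\cY_1(\theta_i^\ell,\varpi_\ell)$ plus a constant — and this linear combination does \emph{not} have vanishing variance in general, so a little care is needed here.

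The main obstacle I anticipate is precisely this last point: reconciling the centered-square representation with the stated answer written in terms of $\cY_1^2$. The resolution must be that the cross term, when one expands the full sum-of-weights structure, actually cancels against a matching contribution — most plausibly because the true convergence is of $\Vhatsub{\ell}{\theta,\cdot,\cdot} - \E{\Vhatsub{\ell}{\theta,\cdot,\cdot}}$ (centered at its own mean, as in the statement), so that one should really track $\sqrt{M_{\ell,\epsilon}}\big(\overline{\cZ^2_i} - \E{\cZ_1^2}\big)$ together with the fact that the sample variance's mean correction absorbs the linear term; alternatively the claimed variance in the statement implicitly uses that $\E{\cY_1(\theta_i^\ell,\varpi_\ell)}$-dependent terms are handled by a reduction to centered outputs where WLOG $\E{\cY_1}=0$. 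I would handle this by first reducing to the case $\E{\cY_m(\theta,\varpi_\ell)}\equiv 0$ for all $\theta$ (legitimate because sample variances are translation-invariant in the outputs, so $\cV$ is unchanged if we replace $\cY$ by $\cZ$, and the centering $\E{\Vhatsub{\ell}{\theta,\cdot,\cdot}}$ in the statement makes the whole left-hand side translation-invariant as well), after which $\cZ_1^2 = \cY_1^2$ and the two expressions for the limiting variance coincide. The remaining steps — the finite-moment check, the $O_p(1)\cdot o_p(1)$ bound on the quadratic correction, and the Slutsky assembly — are routine given Lemma~\ref{lem:finite_moment}.
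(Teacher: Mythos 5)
Your proposal is correct and follows essentially the same route as the paper: the paper decomposes $\Vhatsub{\ell}{\theta, M_{\ell,\epsilon}, \varpi_{\ell}} - \E{\Vhatsub{\ell}{\theta, M_{\ell,\epsilon}, \varpi_{\ell}}}$ into $\sum_{i} w_i^{\ell}(\theta)\,\overline{Z^2}(\theta_i^{\ell},\varpi_{\ell})$ centered at its mean plus two remainder terms (namely $M_{\ell,\epsilon}^{-1}$ times the weighted sample variances and the weighted squared sample means of the centered outputs), shows the remainders vanish in probability, and concludes by the CLT and Slutsky's theorem — exactly your decomposition, with your $O_p(1)\cdot o_p(1)$ treatment of $\bar{\cZ}_i^2$ being, if anything, more explicit about the $\sqrt{M_{\ell,\epsilon}}$ scaling than the paper is. The discrepancy you flag at the end is real but belongs to the paper, not to you: its final line simply asserts $\Var{\sum_i w_i^{\ell}(\theta)\overline{Z^2}(\theta_i^{\ell},\varpi_{\ell})} = \Var{\sum_i w_i^{\ell}(\theta)\cY_1^2(\theta_i^{\ell},\varpi_{\ell})}/M_{\ell,\epsilon}$, which holds only when the outputs are treated as centered, so the stated limiting variance is to be read with centralized outputs — precisely the WLOG reduction you propose.
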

Building upon Proposition \ref{prop:clt_single_level}, we  establish the asymptotic normality of the refinement estimator $\DeltaVhatsub{\ell}{\theta, M_{\ell, \epsilon}, \varpi_{\ell}}$ as $\epsilon\rightarrow 0$ in  Proposition \ref{prop:clt_refine} below. The proof is provided in Appendix \ref{proof:clt_refine}.

\begin{proposition}\label{prop:clt_refine}
Suppose that $\Vhat{\theta}$ is an $\epsilon^2$-estimator.
For all $\ell \in [L_{\epsilon}]$, it holds that the corresponding $\ell$th level refinement estimator $\DeltaVhatsub{\ell}{\theta, M_{\ell, \epsilon}, \varpi_{\ell}}$   is asymptotically normal:
	{	
	\begin{align*}
 	&\sqrt{M_{\ell, \epsilon}} \left( \DeltaVhatsub{\ell}{\theta, M_{\ell,\epsilon}, \varpi_{\ell}}  - \E{\DeltaVhatsub{\ell}{\theta, M_{\ell,\epsilon}, \varpi_{\ell}} } \right) \\
 	&\qquad \Longrightarrow \mathcal{N}\left(0, \Var{\sum_{i=1}^{N_{\ell}}w_{i}^{\ell}(\theta) \cY_1^2(\theta_i^{\ell}, \varpi_{\ell}) - \sum_{i=1}^{N_{\ell-1}}w_{i}^{\ell-1}(\theta) \cY_1^2(\theta_i^{\ell-1}, \varpi_{\ell}) } \right), \quad  \mbox{ as } \epsilon \rightarrow 0,   
\end{align*}
}
where $N_{-1}:= 0$. 
\end{proposition}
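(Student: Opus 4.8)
\textbf{Proof proposal for Proposition~\ref{prop:clt_refine}.}

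The plan is to reduce the asymptotic normality of the refinement estimator to that of the single-level estimators treated in Proposition~\ref{prop:clt_single_level}, while keeping careful track of the fact that both $\Vhatsub{\ell}{\theta, M_{\ell,\epsilon}, \varpi_{\ell}}$ and $\Vhatsub{\ell-1}{\theta, M_{\ell,\epsilon}, \varpi_{\ell}}$ are built from the \emph{same} random number stream $\varpi_{\ell}$ (so the two pieces are dependent, and we cannot simply add their limiting laws). First I would write $\DeltaVhatsub{\ell}{\theta, M_{\ell,\epsilon}, \varpi_{\ell}} = \sum_{i=1}^{N_{\ell}} w_i^{\ell}(\theta)\cV(\theta_i^{\ell}, M_{\ell,\epsilon}, \varpi_{\ell}) - \sum_{i=1}^{N_{\ell-1}} w_i^{\ell-1}(\theta)\cV(\theta_i^{\ell-1}, M_{\ell,\epsilon}, \varpi_{\ell})$ and, just as in the proof of Proposition~\ref{prop:clt_single_level}, replace each sample variance $\cV(\theta_i, M_{\ell,\epsilon}, \varpi_{\ell})$ by $\overline{Z^2}(\theta_i, \varpi_{\ell}) = M_{\ell,\epsilon}^{-1}\sum_{m=1}^{M_{\ell,\epsilon}} \cZ_m^2(\theta_i,\varpi_{\ell})$. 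The difference $\cV(\theta_i,M_{\ell,\epsilon},\varpi_{\ell}) - \overline{Z^2}(\theta_i,\varpi_{\ell})$ involves the squared centered sample mean $(\bar\cY(\theta_i) - \E{\cY(\theta_i)})^2$ plus an $O(1/M_{\ell,\epsilon})$ correction from the $(M-1)$ versus $M$ normalization; both, after multiplication by $\sqrt{M_{\ell,\epsilon}}$, converge to $0$ in probability (the squared sample mean is $O_p(1/M_{\ell,\epsilon})$), uniformly over the finitely many design points involved at levels $\ell$ and $\ell-1$. By Slutsky's theorem it then suffices to prove the CLT for
\[
\sqrt{M_{\ell,\epsilon}}\Bigl( \sum_{i=1}^{N_{\ell}} w_i^{\ell}(\theta)\,\overline{Z^2}(\theta_i^{\ell},\varpi_{\ell}) - \sum_{i=1}^{N_{\ell-1}} w_i^{\ell-1}(\theta)\,\overline{Z^2}(\theta_i^{\ell-1},\varpi_{\ell}) - \E{\,\cdot\,}\Bigr).
\]

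Next I would observe that this centered quantity equals $M_{\ell,\epsilon}^{-1/2}\sum_{m=1}^{M_{\ell,\epsilon}} \bigl(\xi_m - \E{\xi_m}\bigr)$, where
\[
\xi_m \coloneqq \sum_{i=1}^{N_{\ell}} w_i^{\ell}(\theta)\,\cZ_m^2(\theta_i^{\ell},\varpi_{\ell}) - \sum_{i=1}^{N_{\ell-1}} w_i^{\ell-1}(\theta)\,\cZ_m^2(\theta_i^{\ell-1},\varpi_{\ell}),
\]
and the $\xi_m$, $m = 1,\dots,M_{\ell,\epsilon}$, are i.i.d.\ (each is a fixed measurable function of the $m$th draw from $\varpi_{\ell}$, applied simultaneously at all design points of both levels — this is exactly where the common stream is used). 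Since $\{\varpi_{\ell}\}$ is fixed and $M_{\ell,\epsilon}\to\infty$ as $\epsilon\to 0$, this is a classical i.i.d.\ CLT provided $\Var{\xi_1} < \infty$. Finiteness of $\Var{\xi_1}$ follows from Lemma~\ref{lem:finite_moment}: each $\cZ_m^2(\theta_i,\varpi_{\ell})$ has finite second moment because $\E{\cZ_m^4(\theta_i,\varpi_{\ell})} = \Msup{4}{\cY(\theta_i)} \leq c_{\cY} < \infty$, and $\xi_1$ is a finite linear combination of such terms (the weights being bounded by Assumption~\ref{asm:decay_ratio}). Identifying the limiting variance as $\Var{\xi_1} = \Var{\sum_{i=1}^{N_{\ell}} w_i^{\ell}(\theta)\cY_1^2(\theta_i^{\ell},\varpi_{\ell}) - \sum_{i=1}^{N_{\ell-1}} w_i^{\ell-1}(\theta)\cY_1^2(\theta_i^{\ell-1},\varpi_{\ell})}$ is immediate, since $\Var{\cdot}$ is translation-invariant and $\cZ_m = \cY_m - \E{\cY_m}$ — i.e.\ one may replace $\cZ_1^2$ by $\cY_1^2$ inside the variance without changing its value (the centering of $\cY_1^2$ differs from that of $\cZ_1^2$ only by a deterministic shift). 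For the case $\ell = 0$, the convention $N_{-1} = 0$ makes the second sum vanish and the statement reduces verbatim to Proposition~\ref{prop:clt_single_level}.

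The main obstacle — really the only nontrivial point — is justifying the Slutsky reduction cleanly, i.e.\ showing $\sqrt{M_{\ell,\epsilon}}\bigl(\cV(\theta_i,M_{\ell,\epsilon},\varpi_{\ell}) - \overline{Z^2}(\theta_i,\varpi_{\ell})\bigr) \converge{p} 0$ with error control uniform over the (finitely many) design points at levels $\ell$ and $\ell-1$, so that the same bound applies to the weighted difference of the two sums; this parallels the argument already carried out in Appendix~\ref{proof:clt_single_level}, so I would cross-reference it rather than reprove it. A secondary bookkeeping point is that $M_{\ell,\epsilon}$, $N_{\ell}$, $N_{\ell-1}$ and the weights $w_i^{\ell}(\theta)$ all depend on $\epsilon$ only through $M_{\ell,\epsilon}$ (the design-point sets and weights at a fixed level $\ell$ are fixed once $\epsilon$ is small enough that $\ell \leq L_{\epsilon}$), so the i.i.d.\ CLT is applied along the deterministic sequence $M_{\ell,\epsilon}\to\infty$ with a fixed summand distribution — I would state this explicitly to avoid any appearance of a triangular-array subtlety.
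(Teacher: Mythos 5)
Your proposal takes essentially the same route as the paper's proof (Appendix~\ref{proof:clt_refine}): decompose the refinement estimator into the average of i.i.d.\ per-replication differences of weighted $\cZ_m^2$ sums plus the squared-sample-mean and $O(1/M_{\ell,\epsilon})$ remainders, dismiss the remainders via the single-level argument and Slutsky's theorem (you in fact track the $\sqrt{M_{\ell,\epsilon}}$ scaling of the remainders more explicitly than the paper does), and apply the classical CLT to the main term with the same identification of the limiting variance. The only caveat is your justification of that final identification: since $\cZ_1^2(\theta) = \cY_1^2(\theta) - 2\E{\cY_1(\theta)}\cY_1(\theta) + \Esup{2}{\cY_1(\theta)}$, the two integrands do not differ by a deterministic shift, so ``translation invariance'' is not literally the reason the variances coincide; the paper's own proof, however, asserts the same equality with no additional detail, so this point does not distinguish your argument from theirs.
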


While the asymptotic normality of the single-level and refinement estimators can be derived using the classical central limit theorem (CLT), the analysis of the MLMC metamodeling estimator 
$\Vhat{\theta}$ requires the Lindeberg-Feller CLT and more extensive effort, as detailed in the next subsection.

\subsection{Asymptotic Normality of the MLMC Metamodeling Estimator}
\label{sec:CLT-MLMC-metamodel}
In this subsection, we  establish the asymptotic normality of the MLMC metamodeling estimator $\Vhat{\theta}$ in \eqref{eq:mlmc_metamodeling_estimator}.

 To facilitate our analysis, we first introduce the MLMC metamodeling estimator built on the centralized observations $\cZ_m(\theta, \varpi)$'s  (recall the definition from the beginning of Subsection~\ref{sec:CLT-single-refinement}).  Specifically, for any $\ell \in [L_{\epsilon}]$, define the corresponding single-level estimators utilizing these observations   as $\zhatsub{\ell}{\ell}  \coloneqq \sum_{i=1}^{N_{\ell}}w_{i}^{\ell}(\theta) \overline{Z^2}(\theta_i^{\ell}, \varpi_{\ell})$,  and the $\ell$th level refinement estimator  as $\deltaZhatsub{\ell}{\ell} \coloneqq \zhatsub{\ell}{\ell} - \zhatsub{\ell-1}{\ell}$, where $\zhatsub{-1}{\ell} := 0$. The MLMC metamodeling estimator built on centralized observations  is then given by
 \[\zhat \coloneqq \sum_{\ell=0}^{L_{\epsilon}}\deltaZhatsub{\ell}{\ell}.\]
 
Thanks to the following result, which indicates that the asymptotic normality of $\zhat$ implies that of the original MLMC metamodeling estimator $\Vhat{\theta}$, it suffices to study $\zhat$ instead.
 \begin{proposition}\label{prop:clt_z_v}
	If $\left( \zhat - \E{\zhat} \right) / \sqrt{\Var{\zhat}}  \Longrightarrow \mathcal{N}(0, 1)$ as $\epsilon \rightarrow 0$, then
	\[ \left(\Vhat{\theta} - \E{\Vhat{\theta}}\right) / \sqrt{\Var{\zhat}}  \Longrightarrow  \mathcal{N}(0, 1)  \text{ as } \epsilon \rightarrow 0. \]
\end{proposition}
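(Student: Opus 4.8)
The plan is to show that the difference $\bigl(\Vhat{\theta} - \E{\Vhat{\theta}}\bigr) - \bigl(\zhat - \E{\zhat}\bigr)$ is asymptotically negligible relative to $\sqrt{\Var{\zhat}}$, so that Slutsky's theorem transfers the limiting normality. Writing $D_{\epsilon} \coloneqq \bigl(\Vhat{\theta} - \E{\Vhat{\theta}}\bigr) - \bigl(\zhat - \E{\zhat}\bigr)$, it suffices to prove $D_{\epsilon} / \sqrt{\Var{\zhat}} \converge{p} 0$ as $\epsilon \to 0$; then, since $\bigl(\zhat - \E{\zhat}\bigr)/\sqrt{\Var{\zhat}} \Longrightarrow \mathcal{N}(0,1)$ by hypothesis, the sum converges in distribution to $\mathcal{N}(0,1)$.

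First I would express $D_{\epsilon}$ explicitly. By linearity of both telescoping sums and the definition of the weights, $\Vhatsub{\ell}{\theta, M_{\ell,\epsilon},\varpi_{\ell}} = \sum_{i} w_i^{\ell}(\theta) \cV(\theta_i^{\ell}, M_{\ell,\epsilon},\varpi_{\ell})$ while $\zhatsub{\ell}{\ell} = \sum_{i} w_i^{\ell}(\theta) \overline{Z^2}(\theta_i^{\ell}, \varpi_{\ell})$. The per-design-point discrepancy is the familiar gap between the sample variance and the (uncentered) sample second moment of the centralized outputs: $\cV(\theta_i^{\ell}, M_{\ell,\epsilon},\varpi_{\ell}) - \overline{Z^2}(\theta_i^{\ell},\varpi_{\ell}) = \tfrac{1}{M_{\ell,\epsilon}-1}\sum_m \cZ_m^2(\theta_i^{\ell},\varpi_{\ell}) - \overline{Z^2}(\theta_i^{\ell},\varpi_{\ell}) - \bigl(\bar{\cZ}(\theta_i^{\ell},\varpi_{\ell})\bigr)^2$, which after simplification equals $\tfrac{1}{M_{\ell,\epsilon}-1}\overline{Z^2}(\theta_i^{\ell},\varpi_{\ell}) - \tfrac{M_{\ell,\epsilon}}{M_{\ell,\epsilon}-1}\bigl(\bar{\cZ}(\theta_i^{\ell},\varpi_{\ell})\bigr)^2$, where $\bar{\cZ}(\theta_i^{\ell},\varpi_{\ell}) = M_{\ell,\epsilon}^{-1}\sum_m \cZ_m(\theta_i^{\ell},\varpi_{\ell})$. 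Both pieces are $\mathcal{O}_p(1/M_{\ell,\epsilon})$: the first by Lemma~\ref{lem:finite_moment} (uniformly bounded fourth moments give bounded second moments), and the second because $\bar{\cZ}$ has mean zero and variance $\mathcal{O}(1/M_{\ell,\epsilon})$, so $\bar{\cZ}^2 = \mathcal{O}_p(1/M_{\ell,\epsilon})$. Summing over $i$ with weights bounded via Assumption~\ref{asm:decay_ratio} (the sum of weights stays bounded) and then over $\ell \in [L_{\epsilon}]$, I would bound $\E{D_{\epsilon}^2}$ — or more simply $\E{|D_{\epsilon}|}$ after centering — by a constant times $\sum_{\ell=0}^{L_{\epsilon}} 1/M_{\ell,\epsilon}$, and argue this is $o\bigl(\Var{\zhat}\bigr)$. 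The key structural fact is that $\Var{\zhat} = \sum_\ell \Var{\deltaZhatsub{\ell}{\ell}}$ (independence across levels through distinct streams $\varpi_{\ell}$), and from Proposition~\ref{prop:clt_refine}-type scaling each term is of order $v_{\ell,\epsilon}/M_{\ell,\epsilon}$; the dominant contribution to $\Var{\zhat}$ comes from the coarsest levels and is of order $1/M_{0,\epsilon}$, which by the definition of $M_{\ell,\epsilon}$ is $\Theta(\epsilon^2)$, whereas $\sum_\ell 1/M_{\ell,\epsilon}$ — note $D_\epsilon$ involves \emph{squared} centered means, giving an extra $1/M_{\ell,\epsilon}$ relative to the refinement variances — turns out to be $o(\epsilon^2)$. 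Consequently $\E{D_{\epsilon}^2}/\Var{\zhat} \to 0$, and Chebyshev's inequality yields $D_{\epsilon}/\sqrt{\Var{\zhat}} \converge{p} 0$.

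I would then invoke Slutsky's theorem: $\bigl(\Vhat{\theta}-\E{\Vhat{\theta}}\bigr)/\sqrt{\Var{\zhat}} = \bigl(\zhat-\E{\zhat}\bigr)/\sqrt{\Var{\zhat}} + D_{\epsilon}/\sqrt{\Var{\zhat}} \Longrightarrow \mathcal{N}(0,1) + 0 = \mathcal{N}(0,1)$, which is the claim. The main obstacle I anticipate is the careful bookkeeping needed to show $\E{D_{\epsilon}^2} = o\bigl(\Var{\zhat}\bigr)$: this requires (i) a clean expansion of $D_\epsilon$ as a double sum over levels $\ell$ and design points $i$ of mean-zero-ish terms that are genuinely of order $1/M_{\ell,\epsilon}$ rather than $1/\sqrt{M_{\ell,\epsilon}}$ — the squaring of the sample mean $\bar{\cZ}$ is what supplies the gain — (ii) uniform control of moments across all design points via Lemma~\ref{lem:finite_moment} and of the weight sums via Assumption~\ref{asm:decay_ratio}, and (iii) comparing the resulting geometric-type sum $\sum_{\ell} 1/M_{\ell,\epsilon}$ against $\Var{\zhat} \asymp 1/M_{0,\epsilon}$ using the explicit forms of $M_{\ell,\epsilon}$ and $L_\epsilon$ and the decay $V_{\ell,\epsilon} = \mathcal{O}(s^{-2\alpha\ell})$ from Condition~3 (equivalently Proposition~\ref{prop:bound_vl}). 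A secondary technical point is handling the cross terms in $D_\epsilon^2$ across different levels, but these vanish in expectation because the streams $\varpi_{\ell}$ are independent and each level's contribution is conditionally centered — so $\E{D_\epsilon^2}$ reduces to a sum of per-level second moments, which is exactly what the $1/M_{\ell,\epsilon}$ bound addresses.
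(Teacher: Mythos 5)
Your overall route is the same as the paper's: write the centred $\Vhat{\theta}$ as the centred $\zhat$ plus a correction, show the correction is negligible after dividing by $\sqrt{\Var{\zhat}}$, and finish with Slutsky's theorem; and your per-point identity $\cV(\theta_i^{\ell},M_{\ell,\epsilon},\varpi_{\ell})-\overline{Z^2}(\theta_i^{\ell},\varpi_{\ell})=\tfrac{1}{M_{\ell,\epsilon}-1}\overline{Z^2}(\theta_i^{\ell},\varpi_{\ell})-\tfrac{M_{\ell,\epsilon}}{M_{\ell,\epsilon}-1}\bar{\cZ}^2(\theta_i^{\ell},\varpi_{\ell})$ is exactly the split behind the terms (B.8) and (B.9) in the paper's proof (your intermediate expression drops the factor $M_{\ell,\epsilon}/(M_{\ell,\epsilon}-1)$ on $\bar{\cZ}^2$, but the simplified form you actually use is correct).

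The gap is the final quantitative comparison. You bound $\E{D_\epsilon^2}$ (or $\E{|D_\epsilon|}$) by a constant times $\sum_{\ell=0}^{L_\epsilon}M_{\ell,\epsilon}^{-1}$ and claim this is $o\bigl(\Var{\zhat}\bigr)$ because $\Var{\zhat}\asymp 1/M_{0,\epsilon}\asymp\epsilon^2$ while $\sum_\ell M_{\ell,\epsilon}^{-1}=o(\epsilon^2)$. This is self-contradictory: $\sum_\ell M_{\ell,\epsilon}^{-1}\geq 1/M_{0,\epsilon}$, and more generally $\Var{\zhat}=\sum_\ell M_{\ell,\epsilon}^{-1}\Var{\deltaZhatsubshort{(1)}{\ell}{\ell}}\lesssim\sum_\ell M_{\ell,\epsilon}^{-1}$ (Lemma~\ref{lem:Vl_compare}), so a bound of order $\sum_\ell M_{\ell,\epsilon}^{-1}$ can never be $o\bigl(\Var{\zhat}\bigr)$; with the paper's own estimates ($M_{\ell,\epsilon}\gtrsim\epsilon^{-2}$, $L_\epsilon\asymp\log\epsilon^{-1}$) the sum is only $\mathcal{O}(\epsilon^2\log\epsilon^{-1})$, and the two-sided claim $\Var{\zhat}\asymp\epsilon^2$ is not available here anyway (the paper proves the matching lower bound only under $\lim_{\epsilon\to0}S_{L_\epsilon}<\infty$, Lemma~\ref{lem:variance}, or Assumption~\ref{asm:basic_convergence}, neither of which Proposition~\ref{prop:clt_z_v} assumes). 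The repair is the extra factor you yourself gesture at: the per-level pieces of $D_\epsilon$ are mean zero, independent across levels, and built from $\tfrac{1}{M_{\ell,\epsilon}-1}\overline{Z^2}$ and $\bar{\cZ}^2$, each with \emph{second moment} $\mathcal{O}(M_{\ell,\epsilon}^{-2})$ uniformly over design points (Lemma~\ref{lem:finite_moment} and Assumption~\ref{asm:decay_ratio}); hence $\E{D_\epsilon^2}\lesssim\sum_\ell M_{\ell,\epsilon}^{-2}$, and comparing level by level with $\Var{\zhat}=\sum_\ell M_{\ell,\epsilon}^{-1}\Var{\deltaZhatsubshort{(1)}{\ell}{\ell}}$ gives $\E{D_\epsilon^2}/\Var{\zhat}\lesssim\max_\ell M_{\ell,\epsilon}^{-1}\to0$, with no global lower bound on $\Var{\zhat}$ required. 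This per-level normalization is precisely what the paper's proof achieves with Titu's lemma: it only needs $\sum_\ell M_{\ell,\epsilon}^{-1}\to0$ (for the $\tfrac{1}{M_{\ell,\epsilon}}\DeltaVhatsub{\ell}{\theta,M_{\ell,\epsilon},\varpi_{\ell}}$ piece) and $\sum_\ell M_{\ell,\epsilon}^{-1/2}\to0$ (for the $\bar{\cZ}^2$ piece), never $o(\epsilon^2)$. As written, your comparison would not close the proof.
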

The proof of Proposition \ref{prop:clt_z_v} is provided in Appendix \ref{proof:clt_z_v}.   Notice that  the $\ell$th level refinement estimator $\deltaZhatsub{\ell}{\ell}$ can be rewritten as follows: 
\[ \deltaZhatsub{\ell}{\ell} =   M_{\ell, \epsilon}^{-1} \sum_{m=1}^{M_{\ell, \epsilon}}\deltaZhatsubshort{(m)}{\ell}{\ell} , \]
where $ \deltaZhatsubshort{(m)}{\ell}{\ell} = \sum_{i=1}^{N_{\ell}}w_{i}^{\ell}(\theta) (\cZ_{m}(\theta_i^{\ell}, \varpi_{\ell}))^2 - \sum_{i=1}^{N_{\ell-1}}w_{i}^{\ell-1}(\theta) (\cZ_{m}(\theta_i^{\ell-1}, \varpi_{\ell}))^2$ are independent random variables for any $\ell \in [L_{\epsilon}]$ and $m \in [M_{\ell, \epsilon}]^{+}$. As a result, the Lindeberg-Feller CLT can be conveniently applied to   investigate the asymptotic normality of $\zhat$. Define $M_{\epsilon} := \sum_{\ell=0}^{L_{\epsilon}}M_{\ell, \epsilon}$  and 
a sequence of random variables $\{Z_{\epsilon, n}, n \in [M_{\epsilon}]^{+}\}$ as follows: 
\begin{equation}\label{eq:triangle}
	Z_{\epsilon, n} \coloneqq \left\{
	\begin{array}{l}
	\frac{\deltaZhatsubshort{(n)}{0}{0}  -\E{\deltaZhatsubshort{(n)}{0}{0}}}{M_{0,\epsilon}\sqrt{\Var{\zhat}}}, \quad 1 \leq n \leq M_{0,\epsilon} \ , \\
	\frac{\deltaZhatsubshort{(n-M_{0,\epsilon})}{1}{1}  -\E{\deltaZhatsubshort{(n-M_{0,\epsilon})}{1}{1}}}{M_{1,\epsilon}\sqrt{\Var{\zhat}}}, \quad M_{0,\epsilon} < n \leq M_{0,\epsilon} + M_{1,\epsilon}\ ,\\
	\qquad \qquad \vdots \\
	\frac{\deltaZhatsubshort{(n- \sum_{\ell=0}^{L_{\epsilon}-1}M_{\ell,\epsilon})}{L_{\epsilon}}{L_{\epsilon}}  -\E{\deltaZhatsubshort{(n- \sum_{\ell=0}^{L_{\epsilon}-1}M_{\ell,\epsilon})}{L_{\epsilon}}{L_{\epsilon}}}}{M_{L_{\epsilon},\epsilon}\sqrt{\Var{\zhat}}}, \quad  \sum_{\ell=0}^{L_{\epsilon}-1}M_{\ell,\epsilon} <  n \leq M_{\epsilon} \ .
	\end{array}\right.
\end{equation}
Based on \eqref{eq:triangle}, we can  scale and center $\zhat$ and obtain its equivalent form as follows:
 \begin{equation}\label{eq:norm_var_estimator}
	\frac{\zhat - \E{\zhatsub{L_{\epsilon}}{L_{\epsilon}}{}}}{\sqrt{\Var{\zhat}}} =	\frac{\zhat - \E{\zhat}}{\sqrt{\Var{\zhat}}} = \sum_{n=1}^{M_{\epsilon}} Z_{\epsilon, n} \ , 
	\end{equation}  
where the first equality follows from the definition of $\deltaZhatsub{\ell}{\ell}$  and  the fact that  $\E{ \widehat{Z}_{\ell}\left( \theta, M_{\ell}, \varpi_{\ell} \right) - \widehat{Z}_{\ell}\left( \theta, M_{\ell+1}, \varpi_{\ell+1} \right) }=0$ for  $\ell \in [L_{\epsilon}-1]$. 
	 It is evident from \eqref{eq:norm_var_estimator} that   $\E{\sum_{n=1}^{M_{\epsilon}} Z_{\epsilon, n}}=0$ and $\Var{\sum_{n=1}^{M_{\epsilon}} Z_{\epsilon, n}}=1$. We  are now in a position to  analyze the right-hand side of \eqref{eq:norm_var_estimator} via the Lindeberg-Feller CLT (Page 148, \cite{durrett2019probability}), with the notation adapted to our setting.
 
\begin{theorem}[Lindeberg-Feller CLT]\label{thm:linderberg_clt}
Let $M_{\epsilon} \rightarrow \infty$ as $\epsilon \rightarrow 0$ and $Z_{\epsilon, n}$ be independent random variables with $\E{Z_{\epsilon, n}} = 0$ for $n \in [M_{\epsilon}]^+$ and $\sum_{n=1}^{M_{\epsilon}} \E{Z_{\epsilon, n}^2} = 1$ as define in \eqref{eq:triangle}.    
Suppose for all $\nu > 0$,
\begin{equation}\label{eq:linder_origin}
	\lim_{\epsilon \rightarrow 0} \sum_{n=1}^{M_{\epsilon}} \E{|Z_{\epsilon, n}|^2 \bfone{|Z_{\epsilon, n}|> \nu}} = 0.
\end{equation}
Then  $\sum_{n=1}^{M_{\epsilon}} Z_{\epsilon, n} \Longrightarrow \cN(0, 1)$ as $\epsilon \rightarrow 0$.
\end{theorem}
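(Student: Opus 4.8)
The statement is, up to notation, the classical Lindeberg–Feller CLT (page 148 of \cite{durrett2019probability}), with the discrete index $n\to\infty$ of the usual triangular-array formulation replaced by the continuous parameter $\epsilon\to 0$. The plan is therefore to verify that the array $\{Z_{\epsilon,n}\}$ of \eqref{eq:triangle} genuinely fits the triangular-array framework and then to invoke the classical theorem, attending only to the reparametrization rather than to any new estimate.

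First I would record the row structure. For a fixed $\epsilon$, the row consists of the centered summands $\deltaZhatsubshort{(m)}{\ell}{\ell}-\E{\deltaZhatsubshort{(m)}{\ell}{\ell}}$ ranging over $\ell\in[L_{\epsilon}]$ and $m\in[M_{\ell,\epsilon}]^{+}$, each divided by the deterministic scalar $M_{\ell,\epsilon}\sqrt{\Var{\zhat}}$. Within a level $\ell$, the variables $\{\deltaZhatsubshort{(m)}{\ell}{\ell}\}_{m}$ are i.i.d.\ since they are built from the $m$th replication drawn using the stream $\varpi_{\ell}$; across levels, independence follows because the streams $\varpi_{0},\dots,\varpi_{L_{\epsilon}}$ are mutually independent and $\deltaZhatsub{\ell}{\ell}$ uses only $\varpi_{\ell}$. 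Hence $\{Z_{\epsilon,n}\}_{n=1}^{M_{\epsilon}}$ is a collection of independent random variables, and $M_{\epsilon}=\sum_{\ell=0}^{L_{\epsilon}}M_{\ell,\epsilon}\to\infty$ as $\epsilon\to 0$ because each $M_{\ell,\epsilon}\to\infty$ and $L_{\epsilon}\to\infty$. The normalizations $\E{Z_{\epsilon,n}}=0$ are immediate from \eqref{eq:triangle}, and $\sum_{n=1}^{M_{\epsilon}}\E{Z_{\epsilon,n}^{2}}=\Var{\sum_{n=1}^{M_{\epsilon}}Z_{\epsilon,n}}=1$ is exactly \eqref{eq:norm_var_estimator} together with independence and mean zero, which kill the cross terms in the variance expansion. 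With the Lindeberg condition \eqref{eq:linder_origin} assumed as a hypothesis, every ingredient of the classical statement is in place.

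The only genuine point of care is that the textbook Lindeberg–Feller theorem is phrased for rows indexed by $n\in\mathbb{N}$, whereas here the row index is the continuous $\epsilon\downarrow 0$. I would handle this by the standard subsequence principle: $\sum_{n}Z_{\epsilon,n}\Longrightarrow\cN(0,1)$ as $\epsilon\to 0$ if and only if for every sequence $\epsilon_{k}\downarrow 0$ one has $\sum_{n}Z_{\epsilon_{k},n}\Longrightarrow\cN(0,1)$; for each such sequence the array $\{Z_{\epsilon_{k},n}\}_{n=1}^{M_{\epsilon_{k}}}$ is a bona fide triangular array satisfying, by \eqref{eq:linder_origin}, the Lindeberg condition along $k$, so the classical result applies and yields the common Gaussian limit. (Equivalently, the characteristic-function/Lindeberg-swapping proof of the classical theorem uses the row index only through ``$\to\infty$'' and the moment and Lindeberg bounds, hence transcribes verbatim with $n$ replaced by $M_{\epsilon}$.) I expect no substantive obstacle here: the hard analytic work—actually establishing that \eqref{eq:linder_origin} holds for the array arising from $\zhat$—is not part of this statement but is deferred to the later results that invoke it.
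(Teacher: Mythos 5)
Your proposal is correct and takes essentially the same route as the paper, which does not prove this statement at all but simply restates the classical Lindeberg--Feller CLT from \cite{durrett2019probability} with notation adapted to the array in \eqref{eq:triangle}. Your verification that the array fits the triangular-array framework and the subsequence reduction from $\epsilon\to 0$ to sequences $\epsilon_k\downarrow 0$ are sound and add only routine bookkeeping beyond the paper's citation.
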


Theorem \ref{thm:linderberg_clt} indicates that $\sum_{n=1}^{M_{\epsilon}} Z_{\epsilon, n}$ is asymptotically normal if the Lindeberg's condition in \eqref{eq:linder_origin} is fulfilled. By \eqref{eq:norm_var_estimator}, the same conclusion holds for $\zhat$.  As verifying \eqref{eq:linder_origin} can be challenging, we  reformulate it from the perspective of $\zhat$, as shown in Proposition \ref{thm:clt} below, to facilitate our subsequent analysis. The corresponding proof is provided in Appendix \ref{proof:clt_corollary}.

\begin{proposition}\label{thm:clt}
	 Suppose that $\Var{\zhat} > 0$ for any MISE target level $\epsilon^2 > 0$. The Lindeberg's condition in \eqref{eq:linder_origin} holds if, for any $\nu > 0$, the following condition is satisfied:
	 	{\small
 \begin{align}\label{eq:lindeberg}
	\lim_{\epsilon \rightarrow 0}\sum_{\ell=0}^{L_{\epsilon}} \frac{V_{\ell,\epsilon}(\theta)}{\Var{\zhat}M_{\ell,\epsilon}} &\E{\frac{\left|\deltaZhatsubshort{(1)}{\ell}{\ell} - \E{\deltaZhatsubshort{(1)}{\ell}{\ell}} \right|^2}{V_{\ell,\epsilon}(\theta)} \times \right. \nonumber \\ &\left.  \bfone{ \frac{\left|\deltaZhatsubshort{(1)}{\ell}{\ell} - \E{\deltaZhatsubshort{(1)}{\ell}{\ell}} \right|^2}{V_{\ell,\epsilon}(\theta)} > \frac{\Var{\zhat}M_{\ell,\epsilon}^2}{V_{\ell,\epsilon}(\theta)} \nu } }=0 .
\end{align} 	
}
\end{proposition}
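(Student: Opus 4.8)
The plan is to prove the stated implication by directly rewriting the Lindeberg sum $S_{\epsilon}(\nu):=\sum_{n=1}^{M_{\epsilon}}\E{|Z_{\epsilon,n}|^{2}\bfone{|Z_{\epsilon,n}|>\nu}}$ appearing in \eqref{eq:linder_origin}, and matching it term by term to the left-hand side of \eqref{eq:lindeberg}. First I would split $S_{\epsilon}(\nu)$ according to the piecewise definition \eqref{eq:triangle}: the index range $[M_{\epsilon}]^{+}$ decomposes into $L_{\epsilon}+1$ consecutive blocks, the $\ell$th block containing $M_{\ell,\epsilon}$ indices, and on the $\ell$th block $Z_{\epsilon,n}$ equals $W_{\ell}^{(m)}/(M_{\ell,\epsilon}\sqrt{\Var{\zhat}})$, where $m\in[M_{\ell,\epsilon}]^{+}$ is the local index and $W_{\ell}^{(m)}:=\deltaZhatsubshort{(m)}{\ell}{\ell}-\E{\deltaZhatsubshort{(m)}{\ell}{\ell}}$. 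For fixed $\ell$ the variables $\{\deltaZhatsubshort{(m)}{\ell}{\ell}\}_{m\in[M_{\ell,\epsilon}]^{+}}$ are independent and, by construction (identical design $\cT_{\ell}$, identical functional, exchangeable replications drawn from $\varpi_{\ell}$), identically distributed; hence every summand in the $\ell$th block of $S_{\epsilon}(\nu)$ equals the $m=1$ summand, and the block contributes exactly $M_{\ell,\epsilon}$ copies of it.

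Next I would simplify the $\ell$th block contribution. Substituting the scaling and using $\E{W_{\ell}^{(1)}}=0$, the block equals $\frac{1}{M_{\ell,\epsilon}\Var{\zhat}}\,\E{(W_{\ell}^{(1)})^{2}\,\bfone{(W_{\ell}^{(1)})^{2}>\nu^{2}M_{\ell,\epsilon}^{2}\Var{\zhat}}}$, where the event $\{|Z_{\epsilon,n}|>\nu\}$ has been rewritten by squaring both sides. Multiplying and dividing by $V_{\ell,\epsilon}(\theta)$---a cosmetic step that merely exhibits the normalized ratio $(W_{\ell}^{(1)})^{2}/V_{\ell,\epsilon}(\theta)$ used in \eqref{eq:lindeberg}---puts the block into the exact form of the $\ell$th summand of the left-hand side of \eqref{eq:lindeberg}, except that the truncation threshold carries $\nu^{2}$ in place of the constant written as $\nu$ there. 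Summing over $\ell=0,\dots,L_{\epsilon}$ then yields the identity that $S_{\epsilon}(\nu)$ equals the left-hand side of \eqref{eq:lindeberg} with its constant $\nu$ set to $\nu^{2}$.

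Finally, to deduce \eqref{eq:linder_origin} I would fix an arbitrary $\nu>0$; by the identity above, $\lim_{\epsilon\to0}S_{\epsilon}(\nu)$ equals the limit in \eqref{eq:lindeberg} evaluated at the positive constant $\nu^{2}$, which vanishes by hypothesis (which is assumed for \emph{every} positive constant). Since $\nu\mapsto\nu^{2}$ is a bijection of $(0,\infty)$ onto itself, this gives \eqref{eq:linder_origin} for all $\nu>0$, completing the argument. I do not anticipate a genuine obstacle here: the whole thing is algebraic bookkeeping of the constants $M_{\ell,\epsilon}$ and $\sqrt{\Var{\zhat}}$, the threshold relabeling $\nu\leftrightarrow\nu^{2}$, and the collapse of an $M_{\epsilon}$-term sum to an $(L_{\epsilon}+1)$-term sum via identical distribution. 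The only points needing a word of care are that $\Var{\zhat}>0$ (hypothesized) legitimizes the divisions, and that any degenerate level with $V_{\ell,\epsilon}(\theta)=0$ contributes nothing to $S_{\epsilon}(\nu)$ (each such $Z_{\epsilon,n}$ being a.s.\ zero) and is to be read with the convention $0/0=0$ in \eqref{eq:lindeberg}; the substantive analytic work---showing these truncated second moments actually decay fast enough in $\epsilon$---lies in verifying \eqref{eq:lindeberg} itself and is outside the scope of this statement.
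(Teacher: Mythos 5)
Your proposal is correct and matches the paper's own proof essentially line for line: both decompose the Lindeberg sum into $L_{\epsilon}+1$ blocks via \eqref{eq:triangle}, collapse each block using the i.i.d.\ structure of $\deltaZhatsubshort{(m)}{\ell}{\ell}$ across $m$, and then multiply and divide by $V_{\ell,\epsilon}(\theta)$ to arrive at the summands of \eqref{eq:lindeberg}. Your explicit handling of the $\nu\leftrightarrow\nu^{2}$ relabeling and of degenerate levels with $V_{\ell,\epsilon}(\theta)=0$ is slightly more careful than the paper's, but it is the same argument.
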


 We also introduce a set of sufficient conditions to ensure that the condition in \eqref{eq:lindeberg} is fulfilled.
 Specifically, we consider the following two cases in terms of $S_{L_{\epsilon}}$ (recall its definition from the beginning of Section~\ref{sec:clt}): $\lim_{\epsilon \rightarrow 0} S_{L_{\epsilon}} < \infty$ and   $\lim_{\epsilon \rightarrow 0} S_{L_{\epsilon}} = \infty$. Intuitively, establishing sufficient conditions for the condition in \eqref{eq:lindeberg} relies on analyzing convergence rate of $\Var{\zhat}$. 
This analysis is facilitated by some knowledge of $S_{L_{\epsilon}}$, which is crucial for determining the magnitude of the number of replications $M_{\ell,\epsilon}$ for $\ell \in [L_{\epsilon}]$, as defined at the beginning of Section~\ref{sec:clt}.

\paragraph{\textbf{Case $\lim_{\epsilon \rightarrow 0} S_{L_{\epsilon}} <\infty$}}
The convergence rate of $\Var{\zhat}$ can be lower bounded by $\epsilon^2$ in this case. 
Hence, the condition in \eqref{eq:lindeberg} can be verified directly, as indicated by Proposition~\ref{thm:linde_finite} below.  Its proof is provided in Appendix \ref{proof:linde_finite}.

\begin{proposition}\label{thm:linde_finite}
If $\lim_{\epsilon \rightarrow 0} S_{L_{\epsilon}} < \infty$, the condition in \eqref{eq:lindeberg} holds. 
\end{proposition}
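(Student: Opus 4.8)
The plan is to show that, under the assumption $\lim_{\epsilon \to 0} S_{L_\epsilon} < \infty$, the left-hand side of \eqref{eq:lindeberg} vanishes by establishing two facts: (i) a lower bound $\Var{\zhat} \geq C\epsilon^2$ for some constant $C > 0$ and all small $\epsilon$, and (ii) a uniform (in $\ell$ and $\epsilon$) bound on the fourth-moment-type quantity $\E{\bigl|\deltaZhatsubshort{(1)}{\ell}{\ell} - \E{\deltaZhatsubshort{(1)}{\ell}{\ell}}\bigr|^2}$ together with the fact that the indicator forces the integrand to be small. First I would recall from the definition of $M_{\ell,\epsilon}$ at the beginning of Section~\ref{sec:clt} that $M_{\ell,\epsilon} = \lceil 2\epsilon^{-2}\sqrt{V_{\ell,\epsilon}/N_\ell}\, S_{L_\epsilon}\rceil$, so that $M_{\ell,\epsilon} \geq 2\epsilon^{-2}\sqrt{V_{\ell,\epsilon}/N_\ell}\, S_{L_\epsilon}$; since $\Var{\deltaZhatsub{\ell}{\ell}} = v_{\ell,\epsilon}(\theta)/(M_{\ell,\epsilon})$ pointwise (with $v_{\ell,\epsilon}(\theta) = M_{\ell,\epsilon}\Var{\deltaZhatsub{\ell}{\ell}}$ by definition), summing over levels and using independence across levels gives $\Var{\zhat} = \sum_{\ell=0}^{L_\epsilon} v_{\ell,\epsilon}(\theta)/M_{\ell,\epsilon}$. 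The boundedness of $S_{L_\epsilon}$ means each $M_{\ell,\epsilon}$ is $\mathcal{O}(\epsilon^{-2})$ with a constant not depending on $\ell$, so $\Var{\zhat} = \Omega(\epsilon^2)$ provided $v_{0,\epsilon}(\theta)$ (say) stays bounded away from zero, which holds because $\Var{\zhat} > 0$ is assumed and the level-$0$ term dominates; more carefully I would show $\Var{\zhat} \geq v_{0,\epsilon}(\theta)/M_{0,\epsilon} \geq c\,\epsilon^2$ using $M_{0,\epsilon} \leq 1 + 2\epsilon^{-2}\sqrt{V_{0,\epsilon}/N_0}\,S_{L_\epsilon}$ and that $V_{0,\epsilon}, S_{L_\epsilon}$ converge to finite limits.

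Next I would bound the fourth moment: by Lemma~\ref{lem:finite_moment} we have $c_\cY = \sup_\theta \Msup{4}{\cY(\theta)} < \infty$, and by Assumption~\ref{asm:decay_ratio} the weights $w_i^\ell(\theta)$ are nonnegative with sums bounded by $1 + 2p_\ell \leq 1 + 2p_0$; expanding $\deltaZhatsubshort{(1)}{\ell}{\ell} = \sum_i w_i^\ell(\theta)\cZ_1^2(\theta_i^\ell,\varpi_\ell) - \sum_i w_i^{\ell-1}(\theta)\cZ_1^2(\theta_i^{\ell-1},\varpi_\ell)$ and applying Jensen/Minkowski together with the uniform fourth-moment bound yields $\E{\bigl|\deltaZhatsubshort{(1)}{\ell}{\ell} - \E{\deltaZhatsubshort{(1)}{\ell}{\ell}}\bigr|^2} \leq K$ for a constant $K$ independent of $\ell$ and $\epsilon$ (indeed this is just $v_{\ell,\epsilon}(\theta)$, which is $\mathcal{O}(1)$ uniformly, and in fact $\mathcal{O}(s^{-2\alpha\ell})$ by Proposition~\ref{prop:bound_vl} applied pointwise-style, but uniform boundedness suffices here). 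Then in \eqref{eq:lindeberg} the prefactor $V_{\ell,\epsilon}(\theta)/(\Var{\zhat}M_{\ell,\epsilon})$ summed over $\ell$ equals $\Var{\zhat}/\Var{\zhat} = 1$ up to the $V_{\ell,\epsilon}(\theta)$ versus $v_{\ell,\epsilon}(\theta)$ discrepancy — here I would be careful and instead write the whole summand back as $\Var{\zhat}^{-1}\sum_\ell M_{\ell,\epsilon}^{-1}\E{|\deltaZhatsubshort{(1)}{\ell}{\ell} - \E{\cdot}|^2 \bfone{|\cdot|^2 > \Var{\zhat}M_{\ell,\epsilon}^2\nu}}$, i.e.\ a truncated version of $\Var{\zhat}^{-1}\Var{\zhat} = 1$.

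The key observation finishing the argument is that the truncation threshold $\Var{\zhat}M_{\ell,\epsilon}^2\nu \geq c\epsilon^2 \cdot (2\epsilon^{-2}\sqrt{V_{\ell,\epsilon}/N_\ell}S_{L_\epsilon})^2 \nu \to \infty$ as $\epsilon \to 0$ (it grows like $\epsilon^{-2}$), so for each fixed $\ell$ and fixed $\nu$ the indicator $\bfone{|\deltaZhatsubshort{(1)}{\ell}{\ell} - \E{\cdot}|^2 > \Var{\zhat}M_{\ell,\epsilon}^2\nu}$ is eventually zero, because $\deltaZhatsubshort{(1)}{\ell}{\ell}$ is a fixed random variable with finite second (indeed fourth) moment. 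Combined with the dominated-convergence-type control coming from the uniform-in-$\ell$ bound $M_{\ell,\epsilon}^{-1}\E{|\cdot|^2} \leq v_{\ell,\epsilon}(\theta)/M_{\ell,\epsilon}$ being summable (its sum is $\Var{\zhat} < \infty$) and from $L_\epsilon \to \infty$, I would pass to the limit term by term. I expect the main obstacle to be handling the growing number of terms: $L_\epsilon \to \infty$, so I cannot simply say "each term goes to zero." The fix is to split the sum at a finite level $L_0$: for $\ell \leq L_0$ the truncation threshold $\to\infty$ handles things; for $\ell > L_0$ one controls the tail crudely by $\Var{\zhat}^{-1}\sum_{\ell > L_0} v_{\ell,\epsilon}(\theta)/M_{\ell,\epsilon}$, which, using $v_{\ell,\epsilon}(\theta) = \mathcal{O}(s^{-2\alpha\ell})$ uniformly in $\epsilon$ (pointwise analogue of Proposition~\ref{prop:bound_vl}) and $\Var{\zhat} = \Omega(\epsilon^2)$ with $M_{\ell,\epsilon} = \Omega(\epsilon^{-2})$, is bounded by $\mathcal{O}(\sum_{\ell > L_0} s^{-2\alpha\ell})$, a tail of a convergent geometric series, hence arbitrarily small for $L_0$ large. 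Letting $\epsilon \to 0$ then $L_0 \to \infty$ completes the proof.
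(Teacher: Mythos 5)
Your proposal follows essentially the same route as the paper: a lower bound $\Var{\zhat}\gtrsim \epsilon^{2}$ exploiting the boundedness of $S_{L_{\epsilon}}$ (this is the paper's Lemma~\ref{lem:variance}), a uniform bound on $\E{|\deltaZhatsubshort{(1)}{\ell}{\ell}-\E{\deltaZhatsubshort{(1)}{\ell}{\ell}}|^{2}}$, the observation that the truncation threshold $\Var{\zhat}M_{\ell,\epsilon}^{2}\nu\rightarrow\infty$ for each fixed $\ell$, and an interchange of limit and sum over levels (the paper invokes dominated convergence; your split at a finite $L_{0}$ is the hands-on version of the same idea). Two of your supporting claims, however, do not hold as stated. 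First, and most importantly, $M_{\ell,\epsilon}=\Omega(\epsilon^{-2})$ \emph{uniformly in} $\ell$ is false: from $M_{\ell,\epsilon}=\lceil 2\epsilon^{-2}\sqrt{V_{\ell,\epsilon}/N_{\ell}}\,S_{L_{\epsilon}}\rceil$ the prefactor $\sqrt{V_{\ell,\epsilon}/N_{\ell}}$ decays geometrically in $\ell$ (at the finest level it is of order $\epsilon^{1+\gamma/(2\alpha)}$ since $s^{-\alpha L_{\epsilon}}\asymp\epsilon$), so at the highest levels $M_{\ell,\epsilon}$ is of order $\epsilon^{-1+\gamma/(2\alpha)}S_{L_{\epsilon}}\ll\epsilon^{-2}$ and can even hit the floor $M_{\ell,\epsilon}=1$. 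Consequently your tail estimate $\sum_{\ell>L_{0}} v_{\ell,\epsilon}(\theta)/(\Var{\zhat}M_{\ell,\epsilon})=\mathcal{O}\bigl(\sum_{\ell>L_{0}} s^{-2\alpha\ell}\bigr)$ does not follow from the stated ingredients. The repair is to use the correct lower bound $M_{\ell,\epsilon}\geq 2\epsilon^{-2}\sqrt{V_{\ell,\epsilon}/N_{\ell}}\,S_{L_{\epsilon}}$ together with $\Var{\zhat}\gtrsim\epsilon^{2}$ and the comparability of the pointwise and integrated per-replication variances (the content of Lemma~\ref{lem:Vl_compare}), which gives each tail term a bound of order $\sqrt{V_{\ell,\epsilon}N_{\ell}}/S_{L_{\epsilon}}$; the tail is then controlled by Condition~3 (so that $\sqrt{V_{\ell,\epsilon}N_{\ell}}\lesssim s^{(\gamma-2\alpha)\ell/2}$) together with the hypothesis $\lim_{\epsilon\rightarrow0}S_{L_{\epsilon}}<\infty$. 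This is precisely the domination under which the paper applies the dominated convergence theorem, so the fix keeps you on the paper's path rather than requiring a new idea.

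Second, your justification of the variance lower bound is loose: ``$v_{0,\epsilon}(\theta)$ stays bounded away from zero because $\Var{\zhat}>0$ is assumed and the level-$0$ term dominates'' is not an argument --- positivity of $\Var{\zhat}$ for each $\epsilon$ does not give a lower bound that is uniform as $\epsilon\rightarrow0$. What is needed is that the level-$0$ per-replication variance $M_{0,\epsilon}\Var{\deltaZhatsub{0}{0}}$ converges to the fixed positive quantity $\Var{\sum_{i=1}^{N_{0}}w_{i}^{0}(\theta)\cZ_{1}^{2}(\theta_{i}^{0},\varpi_{0})}$, i.e.\ the $\asymp 1$ statement of Lemma~\ref{lem:Vl_compare}; once that is in place, your simpler ``keep only level $0$'' lower bound $\Var{\zhat}\geq M_{0,\epsilon}^{-1}\,M_{0,\epsilon}\Var{\deltaZhatsub{0}{0}}\gtrsim\epsilon^{2}$ (using $M_{0,\epsilon}\leq 1+2\epsilon^{-2}\sqrt{V_{0,\epsilon}/N_{0}}\,S_{L_{\epsilon}}$ and boundedness of $S_{L_{\epsilon}}$) is a legitimate lightweight substitute for the paper's Lemma~\ref{lem:variance}, which establishes the same bound by a more elaborate summation over the lower half of the levels.
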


\paragraph{\textbf{Case $\lim_{\epsilon \rightarrow 0} S_{L_{\epsilon}} = \infty$}}
 Additional assumptions  are required  to verify the condition in \eqref{eq:lindeberg}.

\begin{assumption}\label{asm:basic_convergence}
$\lim_{\epsilon \rightarrow 0} \epsilon^{-2}\Var{\zhat}  > 0$.
\end{assumption}

\begin{assumption}\label{asm:Vl_Sl_order}
	$\lim_{\epsilon \rightarrow 0} S_{L_{\epsilon}} \cdot \epsilon^{\frac{\gamma}{2\alpha} - 2} > 0$. 
\end{assumption}

Under either Assumption \ref{asm:basic_convergence} or \ref{asm:Vl_Sl_order}, we can verify that the condition specified in \eqref{eq:lindeberg} is satisfied, as stated in Proposition \ref{thm:linde_infinite} below. The corresponding proof is given in Appendix \ref{proof:linde_infinite}.

\begin{proposition}\label{thm:linde_infinite}
	Suppose that $\lim_{\epsilon \rightarrow 0} S_{L_{\epsilon}} = \infty$. Under Assumption \ref{asm:basic_convergence} or \ref{asm:Vl_Sl_order}, the condition in \eqref{eq:lindeberg} holds. 
\end{proposition}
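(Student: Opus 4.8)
The plan is to verify the sufficient condition \eqref{eq:lindeberg} directly, by rewriting its left-hand side as a weighted average of per-level truncated second moments and then showing that the truncation thresholds diverge uniformly in the level while the per-level contributions stay uniformly small.

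\emph{Reduction.} Since the level-$\ell$ refinement $\deltaZhatsub{\ell}{\ell}$ is the average of $M_{\ell,\epsilon}$ i.i.d.\ copies of $\deltaZhatsubshort{(1)}{\ell}{\ell}$, and distinct levels employ independent random-number streams $\varpi_0,\dots,\varpi_{L_\epsilon}$, the refinements $\{\deltaZhatsub{\ell}{\ell}\}_{\ell}$ are mutually independent, so $\Var{\zhat}=\sum_{\ell=0}^{L_\epsilon}\Var{\deltaZhatsub{\ell}{\ell}}=\sum_{\ell=0}^{L_\epsilon} v_{\ell,\epsilon}(\theta)/(M_{\ell,\epsilon}-1)$. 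Put $D_\ell:=\deltaZhatsubshort{(1)}{\ell}{\ell}-\E{\deltaZhatsubshort{(1)}{\ell}{\ell}}$, $W_{\ell,\epsilon}:=|D_\ell|^2/v_{\ell,\epsilon}(\theta)$, and $\kappa_{\ell,\epsilon}:=\Var{\zhat}M_{\ell,\epsilon}^2/v_{\ell,\epsilon}(\theta)$. Then the left-hand side of \eqref{eq:lindeberg} equals $\sum_{\ell=0}^{L_\epsilon}\pi_{\ell,\epsilon}\,\E{W_{\ell,\epsilon}\bfone{W_{\ell,\epsilon}>\kappa_{\ell,\epsilon}\nu}}$ with weights $\pi_{\ell,\epsilon}:=v_{\ell,\epsilon}(\theta)/(\Var{\zhat}M_{\ell,\epsilon})$ satisfying $\sum_\ell\pi_{\ell,\epsilon}\le 1$; hence it is bounded above by $\sup_{0\le\ell\le L_\epsilon}\E{W_{\ell,\epsilon}\bfone{W_{\ell,\epsilon}>\kappa_{\ell,\epsilon}\nu}}$, and it suffices to show this vanishes as $\epsilon\to0$.

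\emph{Divergence of the thresholds.} The elementary bound $\Var{\zhat}\ge\Var{\deltaZhatsub{\ell}{\ell}}=v_{\ell,\epsilon}(\theta)/(M_{\ell,\epsilon}-1)$ already gives $\kappa_{\ell,\epsilon}\ge M_{\ell,\epsilon}^2/(M_{\ell,\epsilon}-1)\ge M_{\ell,\epsilon}$, which diverges for each fixed $\ell$. To make $\kappa_\epsilon:=\min_{0\le\ell\le L_\epsilon}\kappa_{\ell,\epsilon}$ diverge, note from Conditions 3 and 4 that $\kappa_{\ell,\epsilon}$ is smallest at the finest level $\ell=L_\epsilon$, where $M_{\ell,\epsilon}$ is smallest and $v_{\ell,\epsilon}(\theta)\asymp s^{-2\alpha\ell}\asymp\epsilon^2$; I would then combine the hypothesis $S_{L_\epsilon}\to\infty$ with either (i) Assumption \ref{asm:basic_convergence}, which supplies $\Var{\zhat}\gtrsim\epsilon^2$ and hence forces $\kappa_\epsilon\to\infty$ through the formula $M_{\ell,\epsilon}=\lceil 2\epsilon^{-2}\sqrt{V_{\ell,\epsilon}/N_\ell}\,S_{L_\epsilon}\rceil$ and the rate bounds, or (ii) Assumption \ref{asm:Vl_Sl_order}, which lower-bounds $S_{L_\epsilon}$ by a quantity of order $\epsilon^{\,2-\phi}$ and, together with $L_\epsilon\asymp\log_s\epsilon^{-1}$ and the same rate bounds, again yields $\kappa_\epsilon\to\infty$. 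This is precisely the role the two alternative assumptions play in the regime $S_{L_\epsilon}\to\infty$ (in the complementary regime of Proposition \ref{thm:linde_finite} the bound $\Var{\zhat}\gtrsim\epsilon^2$ is automatic).

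\emph{Uniform integrability and conclusion.} It remains to control $\E{W_{\ell,\epsilon}\bfone{W_{\ell,\epsilon}>\kappa_\epsilon\nu}}$ uniformly in $\ell$. Writing $\cZ_1(\cdot,\varpi_\ell)$ for the centered output, Assumption \ref{asm:LP_Y} makes it a.s.\ Lipschitz with modulus $\kappa_y+\E{\kappa_y}$; combining this with the non-negative, approximate-partition-of-unity weights of Assumption \ref{asm:decay_ratio}, the fill-distance rate $r_\ell\asymp s^{-\alpha\ell}$, and Lemma \ref{lem:var_bound}, I would establish an a.s.\ bound $|D_\ell|\le C\,s^{-\alpha\ell}\,(G+\E{G})$ in which $G$ is an $\ell$- and $\epsilon$-free random variable assembled from $\kappa_y^2$, $\kappa_y\,|\cZ_1(\theta,\varpi_\ell)|$ and $\cZ_1(\theta,\varpi_\ell)^2$, each integrable after squaring by Assumptions \ref{asm:finite_moment}, \ref{asm:LP_Y} and Lemma \ref{lem:finite_moment}. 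Together with the matching order $v_{\ell,\epsilon}(\theta)\asymp s^{-2\alpha\ell}$, this dominates $W_{\ell,\epsilon}$ by a fixed integrable random variable (up to the bounded factor $M_{\ell,\epsilon}/(M_{\ell,\epsilon}-1)$), giving the required uniform integrability, whence $\sup_{\ell}\E{W_{\ell,\epsilon}\bfone{W_{\ell,\epsilon}>\kappa_\epsilon\nu}}\to0$ as $\kappa_\epsilon\to\infty$. With the reduction step this establishes \eqref{eq:lindeberg}, and hence, via Proposition \ref{thm:clt} and the Lindeberg-Feller CLT (Theorem \ref{thm:linderberg_clt}), the claim. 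The main obstacle is exactly this last uniformity: because $L_\epsilon\to\infty$ a term-by-term argument fails, and with only fourth moments of the output available no uniform bound on $\E{W_{\ell,\epsilon}^{1+\delta}}$ for $\delta>0$ can be invoked, so the $\ell$-independent Lipschitz envelope for $D_\ell$ is indispensable; a secondary difficulty is pinning down the uniform-in-$\ell$ lower bound on $\kappa_{\ell,\epsilon}$ in the step above.
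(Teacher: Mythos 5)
Your proposal takes a genuinely different route from the paper's. The paper first uses Assumption \ref{asm:basic_convergence} to replace $\Var{\zhat}$ by $\epsilon^2$ and the definition of $M_{\ell,\epsilon}$ to turn the level weights into $\sqrt{V_{\ell,\epsilon}N_\ell}/S_{L_\epsilon}$ (which sum to one); it then exploits the hypothesis $S_{L_\epsilon}\to\infty$ to pick an intermediate level $\tilde L_\epsilon\to\infty$ with $S_{\tilde L_\epsilon}/S_{L_\epsilon}\to0$, absorbs the coarse levels $\ell\le\tilde L_\epsilon$ into the vanishing mass $S_{\tilde L_\epsilon}/S_{L_\epsilon}$ \emph{without} ever controlling their truncated second moments, and only needs the truncated expectations to vanish on the fine tail $\ell>\tilde L_\epsilon$. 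Assumption \ref{asm:Vl_Sl_order} is handled not as a parallel route but by showing it implies Assumption \ref{asm:basic_convergence} via $\Var{\zhat}/\epsilon^2\ge p/(1+q)$. Your reduction---weights $\pi_{\ell,\epsilon}$ summing to (a constant times) one, then bounding by $\sup_\ell\E{W_{\ell,\epsilon}\bfone{W_{\ell,\epsilon}>\kappa_{\ell,\epsilon}\nu}}$---is cleaner and does not need the $\tilde L_\epsilon$ device, but it is strictly more demanding: you must control the truncated moment at \emph{every} level simultaneously, which forces you to prove both $\min_\ell\kappa_{\ell,\epsilon}\to\infty$ and uniform integrability of the whole family $\{W_{\ell,\epsilon}\}$.

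That is where the gap lies. Your uniform-integrability step dominates $W_{\ell,\epsilon}=|D_\ell|^2/v_{\ell,\epsilon}(\theta)$ by a fixed integrable variable using the a.s.\ envelope $|D_\ell|\lesssim s^{-\alpha\ell}G$ together with ``the matching order $v_{\ell,\epsilon}(\theta)\asymp s^{-2\alpha\ell}$.'' The envelope itself is fine (it follows from Assumptions \ref{asm:LP_Y} and \ref{asm:decay_ratio} as you describe), but the matching \emph{lower} bound $v_{\ell,\epsilon}(\theta)\gtrsim s^{-2\alpha\ell}$ is nowhere available: Proposition \ref{prop:bound_vl} and Condition 3 give only upper bounds, and $v_{\ell,\epsilon}(\theta)$ at a particular $\theta$ can decay faster (or vanish), in which case $W_{\ell,\epsilon}$ is not dominated and your family need not be uniformly integrable. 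Without that lower bound the sup-over-all-levels bound does not close. A secondary soft spot, which you flag yourself, is the uniform divergence of the thresholds: your computation gives $\kappa_{L_\epsilon,\epsilon}\gtrsim(\epsilon^{\phi-1}S_{L_\epsilon})^2$ at the finest level, which diverges when $\phi\le1$ but is not guaranteed by $S_{L_\epsilon}\to\infty$ alone when $\phi>1$; and the claim that Assumption \ref{asm:Vl_Sl_order} ``again yields $\kappa_\epsilon\to\infty$'' is asserted rather than derived (the paper instead routes Assumption \ref{asm:Vl_Sl_order} through Assumption \ref{asm:basic_convergence}). To repair your argument you would either need to restrict the sup to the fine levels and handle the coarse ones by a negligible-mass argument---which is essentially the paper's $\tilde L_\epsilon$ split---or impose a nondegeneracy lower bound on $v_{\ell,\epsilon}(\theta)$ as an additional hypothesis.
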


 We have several remarks. In Theorem \ref{thm:computation}, we observe that the relationship between $\gamma$ and $2\alpha$ impacts the computational efficiency of MLMC metamodeling. Recall  that $\gamma$ represents the growth rate of the number of design points with design levels, and $\alpha$ is associated with the diminishing rate of the integrated bias and variance components. Specifically, the condition $\gamma \leq 2\alpha$ results in a more efficient MLMC metamodeling estimator than $\gamma > 2\alpha$. Moreover, if $\gamma \leq 2\alpha$, Assumption \ref{asm:Vl_Sl_order} is automatically satisfied, which ensures  the condition in \eqref{eq:lindeberg}, and consequently, the Lindeberg's condition in \eqref{eq:linder_origin} are met, regardless of the value of $S_{L_{\epsilon}}$.

Finally, it is worth noting that as stated by Theorem \ref{thm:linderberg_clt}, satisfying the Lindeberg's condition in \eqref{eq:linder_origin} ensures the asymptotic normality of the MLMC metamodeling estimator $\zhat$. According to Proposition \ref{prop:clt_z_v}, the MLMC metamodeling estimator $\Vhat{\theta}$ is also asymptotically normal  under these same sufficient assumptions. In particular, a computationally efficient MLMC metamodeling estimator with $\gamma \leq 2\alpha$ is also asymptotically normal as $\epsilon\rightarrow 0$.

The asymptotic normality of the MLMC metamodeling estimator can be valuable for uncertainty quantification. Constructing a confidence interval for $\V{\theta}$ based on the proven CLT requires knowledge of the asymptotic variance $ \Var{\widehat{Z}(\theta)}$, which can be challenging to estimate. Classical nonparametric techniques for asymptotic variance and interval estimation in the simulation output analysis literature, such as batching \cite{ChenKim:2016, SchmeiserBatchSize:1982, seila1982batching}, can be leveraged alongside MLMC metamodeling to address this challenge. To conserve space, we do not provide a detailed discussion.

\section{Procedures for Multilevel Monte Carlo Metamodeling}
\label{sec: mlmc_procedure}
This section introduces two computational procedures for variance function estimation via MLMC metamodeling. Subsection \ref{subseq:target_procedure} details the first procedure, which focuses on attaining a target MISE level, while Subsection \ref{subseq:fixed_budget} elaborates on the second, designed to utilize a fixed computational budget.

\subsection{Procedure for Achieving a Target Accuracy Level}\label{subseq:target_procedure}
This subsection presents an MLMC procedure designed to achieve a target MISE level of $\epsilon^2$ for variance function estimation. 
The basic idea is to ensure that both the integrated squared bias and the integrated variance, which are the two components of the MISE (refer to Equation \eqref{eq:mise}), are each less than $\epsilon^2 / 2$. To achieve this, the design levels should be increased to reduce the bias, and additional simulation replications should be added to decrease the variance.
Our procedure builds upon the foundational work of MLMC metamodeling for mean function estimation \cite{rosenbaum2017multilevel} and MLMC pointwise variance estimation \cite{mycek2019multilevel}. Notably, we emphasize our contribution in addressing the challenges associated with high-order moment estimation in variance function estimation.

To begin with, the parameters of the target-accuracy MLMC procedure include the following: the target MISE level  $ \epsilon^{2} $, an initial number of replications $ M^{0} $ to be applied at each design point at each added design level, a prediction-point set $\cP$, a sequence of design-point sets $ \left\{\mathcal{T}_{\ell}\right\}_{\ell \in \mathbb{N}} $ with the corresponding size $ N_{\ell} $ increasing by a factor of roughly $ s^{\gamma} $, and the parameter $ \alpha > 0 $ for which $\left\|\Bias{\widehat{\mathbb{V}}_{\ell}}\right\|_{2}^{2}$ and $\left\|v_{\ell}\right\|_{1}$ are $ \mathcal{O}\left(s^{-2 \alpha \ell}\right) $ for $\forall \ell \in \mathbb{N}$ (refer to Conditions 1 and 3 in Subsection \ref{sec:computation complexity}).

The proposed MLMC procedure progresses iteratively. In each iteration, we add one additional design level and adjust the required number of replications at each existing level based on the updated design-point set structure, ensuring that the variance component $\left\|\Var{\widehat{\mathbb{V}}}\right\|_{1} < \epsilon^2/2$ is met. As design levels are gradually added, the bias component diminishes. The procedure terminates when the bias term $\left\|\Bias{\widehat{\mathbb{V}}}\right\|_{2}^{2}$ drops below $\epsilon^2/2$, which naturally determines the number of design levels adopted. We now discuss the stopping criterion and the approach for determining the required number of replications in detail.

The stopping criterion leverages the bound on bias (Condition 1 in Subsection \ref{sec:computation complexity}) to assess whether the integrated squared bias of the MLMC metamodeling estimator  with $ \ell $ levels, i.e., $\left\|\Bias{\widehat{\mathbb{V}}_{\ell}}\right\|_{2}^{2}$, meets the target. Assuming that the bound $\left\|\Bias{\widehat{\mathbb{V}}_{\ell}}\right\|_{2}^{2} \leq b^2s^{-2\alpha \ell}$ is tight, we have
\[ \left\|\E{\widehat{\Delta \mathbb{V}}_{\ell}}\right\|_{2}^{2}=\left\|\E{\widehat{\mathbb{V}}_{\ell}-\widehat{\mathbb{V}}_{\ell-1}}\right\|_{2}^{2} \geq b^{2} s^{-2 \alpha \ell}\left(s^{2 \alpha}-1\right)=\left(s^{2 \alpha}-1\right)\left\|\Bias{\widehat{\mathbb{V}}_{\ell}}\right\|_{2}^{2}. \]
To meet the criterion $ \left\|\Bias{\widehat{\mathbb{V}}_{\ell}}\right\|_{2}^{2} < \epsilon^{2} / 2 $, a sufficient condition is $\left\|\E{\widehat{\Delta \mathbb{V}}_{\ell}}\right\|_{2}^{2}  < \left(s^{2 \alpha}-1\right) \epsilon^{2} / 2$. This can be implemented as $\left\|\widehat{\Delta \mathbb{V}}_{\ell}\right\|_{2}^{2} < \left(s^{2 \alpha}-1\right) \epsilon^{2} / 2$. However, this approach can be conservative since $ \E{\left\|\widehat{\Delta \mathbb{V}}_{\ell}\right\|_{2}^{2}} \geq \left\|\E{\widehat{\Delta \mathbb{V}}_{\ell}}\right\|_{2}^{2} $. Alternatively, the implementation can include $ \left\|\E{\widehat{\Delta \mathbb{V}}_{\ell}}\right\|_{2}^{2} \approx s^{-2 \alpha}\left\|\widehat{\Delta \mathbb{V}}_{\ell-1}\right\|_{2}^{2} $.  Hence, we propose the following stopping criterion:
\begin{equation}\label{eq:stop}
	\max \left\{\left\|\widehat{\Delta \mathbb{V}}_{\ell}\right\|_{2}^{2}, s^{-2 \alpha}\left\|\widehat{\Delta \mathbb{V}}_{\ell-1}\right\|_{2}^{2}\right\} < \left(s^{2 \alpha}-1\right)  \epsilon^{2}/2 \ .
\end{equation}
Therefore,  the finest design level $L$ is  determined by identifying the smallest value of $\ell$ that meets the stopping criterion given in \eqref{eq:stop}.

We next examine how to make the variance component less than $\epsilon^2/2$. Recall that $ v_{\ell}(\theta) = (M_{\ell} - 1) \Var{\DeltaVhatsub{\ell}{\theta, M_{\ell}}}$ denotes the variance per replication in estimating the $\ell$th level refinement for $\ell \in [L]$. The variance component can be expressed as
\begin{align*}
    \left\|\Var{\widehat{\mathbb{V}}}\right\|_{1} &= \left\|\sum_{\ell=0}^{L} \Var{ \DeltaVhatsub{\ell}{\cdot, M_{\ell}}  } \right\|_{1} = \left\|\sum_{\ell=0}^{L} \frac{v_{\ell}}{M_{\ell}-1} \right\|_{1} \leq \sum_{\ell=0}^{L} \frac{\left\| v_{\ell} \right\|_{1}}{M_{\ell}-1} \ .
\end{align*}

Hence, a sufficient condition to ensure that the variance component is below the desired level is: $\sum_{\ell=0}^{L} \left\| v_{\ell} \right\|_{1}/(M_{\ell}-1) < \epsilon^2/2$. To determine the optimal number of replications  at each level $\ell \in [L]$, denoted as $M_{\ell}^{\ast}$, with the goal of minimizing the total number of replications, we formulate the following optimization problem:
\begin{align*}
    \mbox{min} &\sum_{\ell=0}^{L}M_{\ell}N_{\ell}   \\
    \text{s.t. } &\sum_{\ell=0}^{L} \frac{\left\| v_{\ell} \right\|_{1}}{M_{\ell}-1}    < \epsilon^2/2   \\
    & M_{\ell}   \geq 1, \quad \forall \ell \in [L]   \ .
\end{align*}
The optimal solution to this problem has a closed-form expression given by
\begin{equation}\label{eq:M_star}
	M_{\ell}^{\ast} = \ceil[\Big]{ 2 \epsilon^{-2} \sqrt{ \left\| v_{\ell} \right\|_{1} / N_{\ell}} \sum_{\ell^{\prime}=0}^{L} \sqrt{ \left\| v_{\ell^{\prime}} \right\|_{1}N_{\ell^{\prime}}} +1}, \quad \forall \ell \in [L].
\end{equation} 

Controlling the bias and variance components, as discussed above, is crucial for the MLMC metamodeling procedure to achieve the target accuracy level. Based on these discussions, we propose the MLMC metamodeling procedure outlined in Algorithm \ref{alg:mlmc}. We highlight a few key steps below. Specifically, we use a nested sequence of  design-point sets  in Steps 8 and 9. This approach allows for the reuse of simulation outputs, resulting in cost savings. Additionally, reusing simulation outputs establishes correlations between the estimators $\Vhatsub{\ell}{\theta, M_{\ell}, \varpi_{\ell}}$ and $\Vhatsub{\ell-1}{\theta, M_{\ell}, \varpi_{\ell}}$ for $\ell \in [L]$, which efficiently reduces the variance of the refinement estimators.
In Step 14,  we estimate  the variance of the  0th level metamodel estimator $\Vhatsub{0}{\theta, M_{0}, \varpi_{0}}$ and that of the higher-level refinement estimators $\DeltaVhatsub{\ell}{\theta, M_{\ell}, \varpi_{\ell}}$ for $\theta \in \Theta$ and $\ell \in [L]$ via bootstrapping \cite{barton1993uniform, cheng1995bootstrap}, as detailed in Algorithm~\ref{alg:bootstrap} provided in Appendix \ref{app:bootstrap}. Notice that the bootstrap estimator is biased when the number of replications $M_{\ell}$ is small, but it becomes approximately unbiased as $M_{\ell}$ becomes large (Page 271, \cite{efron1994introduction}).
Alternative methods, such as sectioning or jackknife \cite{asmussen2007stochastic}, can also be adopted.  In \cite{mycek2019multilevel}, a heuristic method was employed, assuming $\left\|v_{\ell}\right\|_{1} = s^{-2\alpha} \left\|v_{\ell-1}\right\|_{1}$. This method only requires estimating $\left\|v_{0}\right\|_{1}$, as $\left\|v_{\ell}\right\|_{1}$ can be estimated iteratively for $\ell \geq 1$. However, we recommend bootstrapping due to its superior effectiveness and robustness in implementation compared to these alternative methods. Steps 15 and 19 are to add additional replications if needed to ensure that the integrated variance of the estimator $\left\|\Var{\widehat{\mathbb{V}}}\right\|_{1}$ is less than $\epsilon^2/2$.

\begin{algorithm}
\caption{The MLMC metamodeling procedure for achieving a target accuracy level}\label{alg:mlmc}
\begin{algorithmic}[1]

\State \textbf{Input:} Parameters $\alpha, \gamma, \epsilon, s > 0$, the initial number of replications $M^{0}$ at each level, and the prediction-point set $\mathcal{P}$
\State \textbf{Output:} The MLMC metamodeling variance function estimator $\Vhat{\cdot}$

\State $L \gets 0$; \Comment{Initialize the design level index}
\State $\mathcal{T}_{-1} \gets \emptyset$, $N_{-1} \gets 0$, $d_{-1}^{2} \gets 0$;  \Comment{Initialize the design level index at ``$-1$'' to prevent illegal operations}
\State $d_{L}^{2} \gets 0$;  \Comment{Initialize the estimation of the integrated squared bias}

\While{$L < 2$ \textbf{or} $ \max \left\{d_{L}^{2}, s^{-2 \alpha} d_{L-1}^{2}\right\}>\left(s^{2 \alpha}-1\right) \epsilon^{2} / 2 $
}
\State  $N_{L} \gets s^{\gamma L}$;  \Comment{Set the number of design points on the current finest level}
\State  Generate an additional design-point set $\cA_{L}$ of size $(N_{L} - N_{L - 1})$;
\State  Construct the design-point set at level $L$: $\mathcal{T}_{L} \gets \cA_{L} \cup \mathcal{T}_{L-1}$;
\State  $M_{L} \gets M^{0}$;   \Comment{Initialize the number of replications}
\State  \algmultiline{For $\forall \theta^{L} \in \mathcal{T}_{L}$ and $\forall \theta^{L-1} \in \mathcal{T}_{L-1}$, simulate $M_{L}$ replications using the random number stream $\varpi_L$ and get simulation outputs $\{\cY(\theta^L, \omega_m), \theta^L \in \mathcal{T}_{L},m \in [M_{L}]^{+}\}$ and $\{\cY(\theta^{L-1}, \omega_m), \theta^{L-1} \in \mathcal{T}_{L-1}, m \in [M_{L}]^{+}\}$;}
\State  Build the metamodel-based estimators $\Vhatsub{L}{\theta, M_{L}, \varpi_{L}}$ and $\Vhatsub{L-1}{\theta, M_{L}, \varpi_{L}}$ according to \eqref{eq:metamodel};
\State  Calculate $\DeltaVhatsub{L}{\theta, M_{L}, \varpi_{L}} = \Vhatsub{L}{\theta, M_{L}, \varpi_{L}} - \Vhatsub{L-1}{\theta, M_{L}, \varpi_{L}}$ for $\forall \theta \in \mathcal{P}$;
\State \algmultiline{Estimate $ \Var{\DeltaVhatsub{L}{\theta, M_{L}, \varpi_{L}}} $ via Algorithm \ref{alg:bootstrap} in Appendix \ref{app:bootstrap} and estimate $\left\|v_{L}\right\|_{1}$ by $V_{L} \coloneqq \sum_{\theta \in \mathcal{P}} (M_{L}-1)\Var{\DeltaVhatsub{L}{\theta, M_{L}, \varpi_{L}}} / |\mathcal{P}|$;}
\State  \algmultiline{For $\ell \in [L]$, calculate $ \Delta_{\ell} = \ceil[\Big]{ 2 \epsilon^{-2} \sqrt{V_{\ell}/N_{\ell}} \sum_{\ell^{\prime}=0}^{L} \sqrt{V_{\ell^{\prime}}N_{\ell^{\prime}}} +1}  - M_{\ell}$. If $\Delta_{\ell} > 0$, $M_{\ell} \gets M_{\ell} + \Delta_{\ell}$ and simulate additional $\Delta_{\ell}$ number of replications, obtain the outputs, and update the statistics;}
\State  \algmultiline{For $\forall \theta \in \mathcal{P}$ and $\ell \in [L]$, calculate $ \left( \DeltaVhatsub{\ell}{\theta, M_{\ell}, \varpi_{\ell}}\right)^{2} $ and estimate $\left\|\widehat{\Delta \mathbb{V}}_{\ell}\right\|_2^2$ by $d_{\ell}^{2} \coloneqq \sum_{\theta \in \mathcal{P}} \left( \DeltaVhatsub{\ell}{\theta, M_{\ell}, \varpi_{\ell}}\right)^{2} / |\mathcal{P}|$;}
\State  $L \gets L + 1$; \Comment{Update the index of the finest level}	
\EndWhile
\State Calculate $ \Delta_{\ell} = \ceil[\Big]{ 2 \epsilon^{-2} \sqrt{\frac{V_{\ell}}{N_{\ell}}} \sum_{\ell^{\prime}=0}^{L} \sqrt{V_{\ell^{\prime}}N_{\ell^{\prime}}} +1}  - M_{\ell}$ for $\ell \in [L]$. If $\Delta_{\ell} > 0$, $M_{\ell} \gets M_{\ell} + \Delta_{\ell}$, simulate an additional $\Delta_{\ell}$ number of replications and update corresponding statistics;
\State Obtain the MLMC  metamodeling estimator $\Vhat{\theta} = \sum_{\ell=0}^{L}\DeltaVhatsub{\ell}{\theta, M_{\ell},\varpi_{\ell}}$ for $\forall \theta \in \mathcal{P}$.
\end{algorithmic}
\end{algorithm}

\subsection{Procedure for Expending a Fixed Simulation Budget}\label{subseq:fixed_budget}
In practice, one may aim to achieve the lowest possible MISE given a fixed simulation budget. This subsection presents a procedure to accomplish this goal.
It is important to address two crucial questions when using a fixed budget: (1) Should more design levels be added, or should additional replications be allocated to each design point on existing levels?
(2) If the latter, which existing design level should receive more replications?

To address the first question, a good approach is to strike a balance between the integrated squared bias $\left\|\Bias{\widehat{\mathbb{V}}}\right\|_{2}^{2}$ and the integrated variance $\left\|\Var{\widehat{\mathbb{V}}}\right\|_{1}$ of the estimator $\widehat{\mathbb{V}}$ to efficiently decrease the MISE. Given the current finest level $L$, recall that Algorithm \ref{alg:mlmc} requires
	\[
		\left(s^{2 \alpha}-1\right)^{-1} \max \left\{\left\|\widehat{\Delta \mathbb{V}}_{L}\right\|_{2}^{2}, s^{-2 \alpha}\left\|\widehat{\Delta \mathbb{V}}_{L-1}\right\|_{2}^{2}\right\} < \epsilon^2/2 \quad \text{and} \quad \sum_{\ell=0}^{L} \frac{\left\|v_{\ell}\right\|_{1}}{M_{\ell}-1} < \epsilon^2/2 \ .
	\]
To balance $\left\|\Bias{\widehat{\mathbb{V}}}\right\|_{2}^{2}$ and $\left\|\Var{\widehat{\mathbb{V}}}\right\|_{1}$, we propose adding one more design level if the following condition is met:
\begin{align*}
	\left(s^{2 \alpha}-1\right)^{-1} \max \left\{\left\|\widehat{\Delta \mathbb{V}}_{L}\right\|_{2}^{2}, s^{-2 \alpha}\left\|\widehat{\Delta \mathbb{V}}_{L-1}\right\|_{2}^{2}\right\} \geq \sum_{\ell=0}^{L}  \left\|v_{\ell}\right\|_{1}/(M_{\ell}-1) \ ;
\end{align*}
otherwise, additional replications will be allocated to a selected existing level.

To determine which existing design level receives additional replications, we aim to achieve the maximum variance reduction from adding a given number of replications. Let $A$ be the number of replications to be added at each design point on the chosen level. The optimal design level $\ell^{\ast}$ is selected by maximizing the variance reduction as follows:
\begin{equation}\label{eq:repadd_criterion}
    \ell^{\ast} \coloneqq  \argmax_{\ell \in [L]} \left(\left\|v_{\ell}\right\|_{1}/\left(M_{\ell}-1\right) - \left\|v_{\ell}\right\|_{1}/\left(M_{\ell}-1 + A\right) \right) / \left(N_{\ell}\cdot A\right)   \  ,
\end{equation}
where $\left\|v_{\ell}\right\|_{1}/(M_{\ell}-1) - \left\|v_{\ell}\right\|_{1}/(M_{\ell} -1+ A) $ represents the total variance reduction achieved by adding $A$ replications to each design point on level $\ell$, and $N_{\ell}\cdot A$ denotes the total number of additional replications to be added on level $\ell$. Therefore, the criterion in \eqref{eq:repadd_criterion} selects the design level that achieves the maximum variance reduction per design point per additional replication.

The MLMC metamodeling procedure for expending a fixed budget is detailed in Algorithm \ref{alg:mlmc_budget}. Below, we highlight a few key steps.
Steps 9 and 11 check the budget constraint. Specifically, Step 9 assesses whether the remaining budget is sufficient for adding the minimum number of replications required, and Step 11 examines if there is enough budget to add a new design level.  Step 12 determines whether to add a new design level or allocate additional replications to existing levels. Steps 16 through 20 and 23 through 27 choose the optimal design level to add additional replications. Similar to Algorithm \ref{alg:mlmc}, we employ Algorithm \ref{alg:bootstrap} in Appendix~\ref{app:bootstrap} to compute $V_{\ell}$, which represents the estimated integrated variance for $\ell \in [L]$. The calculations of $d_{\ell}^2$, $\Vhatsub{\ell}{\theta, M_{\ell}, \varpi_{\ell}}$, and $\DeltaVhatsub{\ell}{\theta, M_{\ell}, \varpi_{\ell}}$ remain consistent with those in Algorithm \ref{alg:mlmc}.

\begin{algorithm}
\caption{The MLMC metamodeling procedure for expending a fixed budget}\label{alg:mlmc_budget}
\begin{algorithmic}[1]
\State \textbf{Input:} Parameters $\alpha, \gamma, s, A > 0$, an initial number of replications $M^{0}$, the prediction-point set $\mathcal{P}$, and a given total budget $T$
\State \textbf{Output:} The MLMC metamodeling variance function estimator $\Vhat{\cdot}$

\State $L \gets 0$; \Comment{Initialize the design level index}
\State $N_0 \gets s^{\gamma}$; \Comment{Set \# of design points on the initial level}
\State Construct $0$th level metamodeling estimator $\Vhatsub{0}{\theta, M_{0}, \varpi_{0}}$;
\State Estimate $\left\|\widehat{\Delta \mathbb{V}}_{0}\right\|_2^2$ by $d_{0}^{2}$;
\State Estimate $\|v_{0}\|_1$ by $V_{0}$;
\State The cumulated total budget spent $t \gets 0;$  \Comment{Initialize the cumulative cost}
\While{$t + N_{0}A \leq T$ }\Comment{Continue until the budget is depleted}
\State $N_{L+1} \gets N_0s^{\gamma(L+1)}$;  \Comment{Set \# of design points on the next level}
  \If{$t + N_{L+1}M^{0} \leq T$ }
  \If{$\left(s^{2 \alpha}-1\right)^{-1} \max\left\{d_{L}^2, s^{-2\alpha}d_{L-1}^2\right\} \geq \sum_{\ell=0}^{L}\frac{V_{\ell}}{M_{\ell}-1}$} \Comment{If bias component dominates, add a new level}
  	\State  \algmultiline{Add a new level: $L \gets L+1$, build the metamodel-based estimators $\Vhatsub{L}{\theta, M_{L}, \varpi_{L}}$ and $\Vhatsub{L-1}{\theta, M_{L}, \varpi_{L}}$ according to \eqref{eq:metamodel} and calculate $\DeltaVhatsub{L}{\theta, M_{L}, \varpi_{L}}$;}
    \State  	   Calculate $d_{L}^2$ and $V_{L}$;
  \Else
   \Comment{If variance component dominates, add more replications}
 \State	  $\cL \gets \left\{\ell:  t + N_{\ell}A \leq T, \ell \in [L] \right\}$;
 \State  	 Choose level $\ell^{\ast} \coloneqq \argmax_{\ell \in \cL} V_{\ell}((M_{\ell}-1)(M_{\ell} - 1 + A)N_{\ell})^{-1}$;
 \State  	 Add $A$ replications at each design point on level $\ell^{\ast}$;
 \State  	  $M_{\ell^{\ast}} \gets M_{\ell^{\ast}} + A$;
 \State  	  Obtain simulation outputs and update corresponding statistics;
   \EndIf
  \Else
  \Comment{Add more replications}
  \State $\cL \gets \left\{\ell:  t + N_{\ell}A \leq T, \ell \in [L] \right\}$;
  \State Choose level $\ell^{\ast} \coloneqq \argmax_{\ell \in [L]} V_{\ell}((M_{\ell}-1)(M_{\ell} - 1 + A)N_{\ell})^{-1}$;
  \State Add $A$ replications at each design point on level $\ell^{\ast}$;
  \State  $M_{\ell^{\ast}} \gets M_{\ell^{\ast}} + A$;
  \State Obtain simulation outputs and update corresponding statistics;
  \EndIf
\EndWhile
\State Obtain the MLMC metamodeling estimator $\Vhat{\theta} =  \sum_{\ell=0}^{L}\DeltaVhatsub{\ell}{\theta, M_{\ell},\varpi_{\ell}}$ for $\forall \theta \in \mathcal{P}$.  
\end{algorithmic}
\end{algorithm}

\section{Numerical Experiments}\label{sec:exp}
\numberwithin{equation}{section}
This section presents numerical evaluations of the two MLMC metamodeling procedures proposed in Section \ref{sec: mlmc_procedure}.  Subsection \ref{sec:example} demonstrates the effectiveness of the procedure designed to achieve a target accuracy level.  Subsection \ref{sec:sobol case} applies the fixed-budget procedure to variance function estimation in the context of global sensitivity analysis.

\subsection{Numerical Evaluations of the Procedure for Achieving a Target Accuracy Level}\label{sec:example}

This subsection focuses on evaluating MLMC metamodeling  versus SMC metamodeling in achieving a target MISE level using two examples: the initial value problem and the Griewank function.

In each example, we vary the target MISE level $\epsilon^2$ and compare the computational efficiency of the two approaches. 
To implement MLMC metamodeling for a given value of $\epsilon^2$, we follow the procedure outlined in Algorithm~\ref{alg:mlmc},  which, upon termination, determines the computational cost required by MLMC metamodeling.
Specifically, we set the initial number of replications $M^0 = 4$ and tailor the parameters $\alpha$ and $\gamma$ to each specific example. A prediction-point set comprising 256 points from  a Sobol' sequence is used for estimating the bias and variance components achieved by  MLMC metamodeling. 
The sequence of design-point sets is generated by using Sobol' sequences with a random shift \cite{Lemieux2009}.

To implement SMC metamodeling and facilitate comparisons with MLMC metamodeling, we need to provide a suitable experimental design and estimate the computational cost required to achieve a given MISE level $\epsilon^2$. Following \cite{rosenbaum2017multilevel}, we construct an SMC metamodeling estimator using the design-point set from the finest level (denoted as level $L$),  $\cT_L$, formed by MLMC metamodeling, which consists of $2^{\gamma L}$ design points. This SMC metamodeling estimator has the same bias as the MLMC metamodeling estimator, which is less than $\epsilon^2/2$. To ensure that the integrated variance of the SMC metamodeling estimator is also less than $\epsilon^2/2$, we first allocate 1,000 initial replications to each design point to estimate the integrated variance of the estimator $V_{L}$ using Algorithm \ref{alg:bootstrap}. We then set the total number of replications $M_{L}^{*} = \lceil 2 \epsilon^{-2} V_{L} + 1 \rceil$ according to \eqref{eq:M_star}. Consequently, the computational cost required by SMC metamodeling to achieve the target MISE level $\epsilon^2$ is $M_{L}^{*} \cdot 2^{\gamma L}$.

For both MLMC and SMC metamodeling, we apply kernel smoothing with a Gaussian kernel as the function approximation method, selecting the bandwidth via leave-one-out cross-validation \cite{bowman1984alternative}. Kernel smoothing satisfies Assumption \ref{asm:decay_ratio}, and we refer the interested reader to Section EC.1.1 of \cite{rosenbaum2017multilevel} for further details.
In fact, we mention, without showing details, that Assumptions \ref{asm:domain} through \ref{asm:LP_Y} and Assumption \ref{asm:decay_ratio} can be verified to hold in both examples.
	For a given value of $\epsilon^2$, we perform the comparisons using 100 independent macro-replications and evaluate the computational efficiency of MLMC and SMC metamodeling by  averaging the computational costs recorded across these macro-replications.

\subsubsection{The Initial Value Problem}
Consider the variance function estimation example based on the initial value problem from \cite{mycek2019multilevel}, which involves the ordinary differential equation and the initial condition as follows:
\begin{align}
	\left\{\begin{array}{l}\frac{\mathrm{d} u(t)}{\mathrm{~d} t}=\lambda u(t), \quad t \in(0,1] \\u(0)=u_{0},\end{array}\right. \label{eq:de}
\end{align}
where the growth coefficient and initial condition $ \lambda$, $ u_{0}   \in \mathbb{R} $. The solution to \eqref{eq:de} is given by $ u(t)=u_{0} e^{\lambda t} $. Now consider the function $ F: \mathbb{R}^{2} \times[0,1] \rightarrow \mathbb{R}$, given by $F\left(u_{0}, \lambda, t\right) = u_{0} e^{\lambda t}$, where $u_0$ and $\lambda$ are realizations of independent random variables $U_{0}$ and $\Lambda$. For any $t \in [0, 1]$, the variance of $F(U_{0}, \Lambda, t)$ as a function of $t$, denoted as $\V{t}$, is given by
\begin{align*}
	\V{t} = \Var{F\left(U_{0}, \Lambda, t\right)}=\E{U_{0}^{2}} \E{e^{2 \Lambda t}}-\Esup{2}{U_{0}} \Esup{2}{e^{\Lambda t}}, \ t \in [0, 1].
\end{align*}
Assuming that $U_0$ and $\Lambda$ are independent normal random variables, i.e., $ U_{0} \sim \mathcal{N}\left(\mu_{0}, \sigma_{0}^{2}\right) $ and $ \Lambda \sim \mathcal{N}\left(\mu, \sigma^{2}\right) $, we can obtain the closed-form expression of the variance function as
\begin{align*}
	\V{t}=e^{2 \mu t+\sigma^{2} t^{2}}\left[\sigma_{0}^{2} e^{\sigma^{2} t^{2}}+\mu_{0}^{2}\left(e^{\sigma^{2} t^{2}}-1\right)\right],\ t \in [0, 1].
\end{align*}

For numerical evaluations, we set the parameters of the distributions  as $\mu_{0}=10$, $\sigma_{0}=2$, $\mu=-1$, and $\sigma=0.25$, respectively. Figure~\ref{fig:variance_function_example}(a) illustrates the true variance function $\V{t}$ for $t \in [0, 1]$. To implement MLMC metamodeling (Algorithm \ref{alg:mlmc}) and SMC metamodeling as described at the beginning of Subsection \ref{sec:example}, we select $\alpha = 1.5$ and $\gamma = 1$ as the parameters for the target accuracy procedure. The target MISE level $\epsilon^2$ is varied in $\left\{10^{-2}, 10^{-3}, 10^{-4}, 10^{-5}\right\}$.

Figure~\ref{fig:result_slope}(a) presents a comparison of the average computational cost across 100 macro-replications for MLMC metamodeling versus SMC metamodeling, targeting various MISE levels $\epsilon^2$.
The log-log plot shows the average computational cost corresponding to each target $\epsilon$ value. The empirical slopes, approximately $-1.3$ for SMC metamodeling and $-1.04$ for MLMC metamodeling, corroborate Theorem \ref{thm:computation}'s theoretical values of $-1.33$ and $-1$, respectively,  with $\phi=\gamma / 2\alpha = 0.33$. This agreement highlights the superior computational efficiency of MLMC metamodeling  compared to SMC  metamodeling.

\begin{figure}
     \centering
     \begin{subfigure}[b]{0.45\textwidth}
         \centering
         \includegraphics[width=\textwidth]{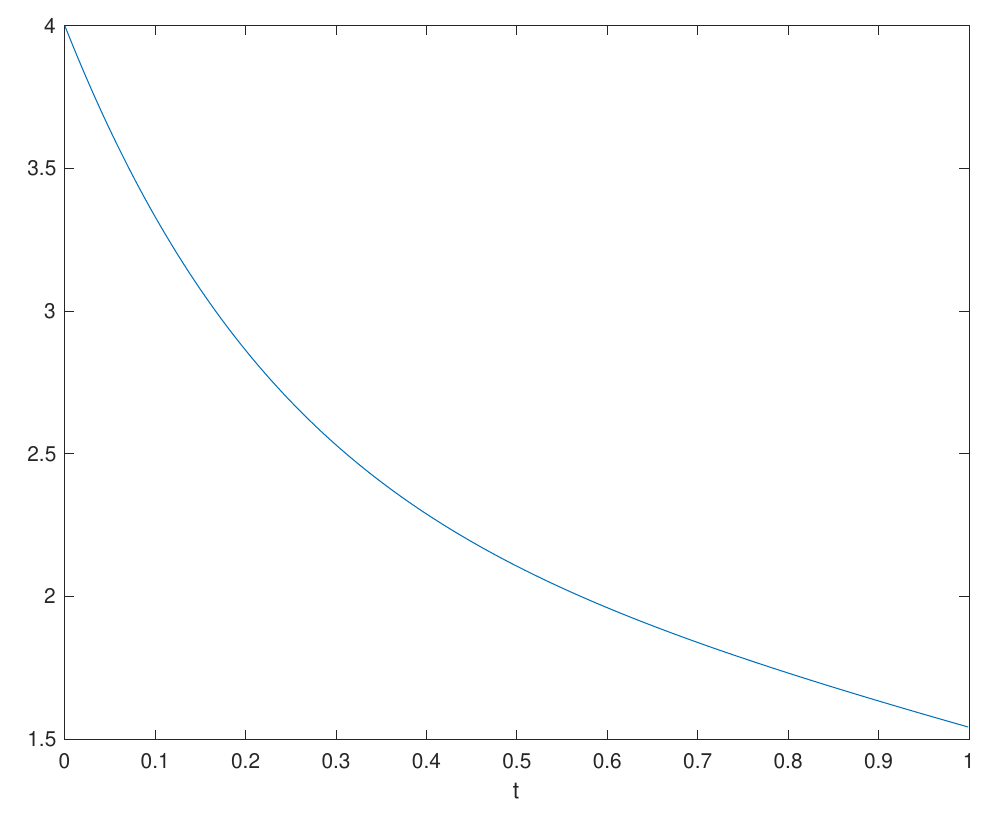}
         \caption{The initial value problem}
         \label{fig:variance_de}
     \end{subfigure}
     \hfill
     \begin{subfigure}[b]{0.48\textwidth}
         \centering
         \includegraphics[width=\textwidth]{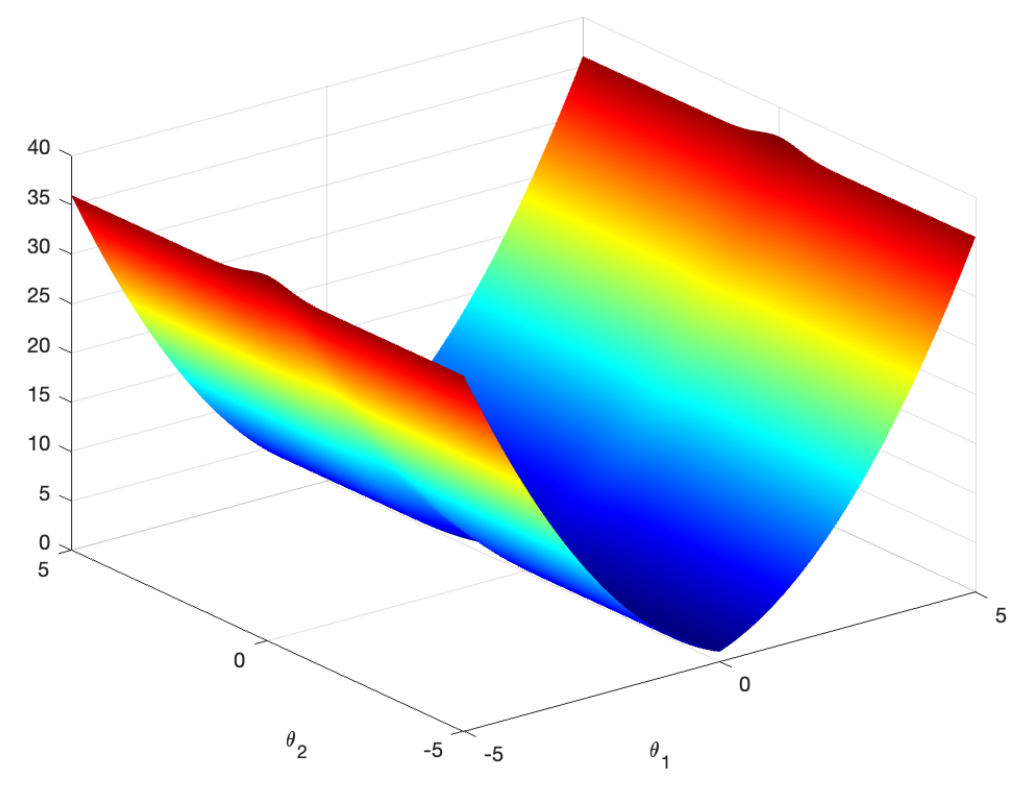}
         \caption{The Griewank function example}
         \label{fig:variance_Griewank}
     \end{subfigure}
        \caption{The true variance functions for the two examples in Subsection \ref{sec:example}.}
        \label{fig:variance_function_example}
\end{figure}

\begin{figure}
     \centering
     \begin{subfigure}[b]{0.48\textwidth}
         \centering
         \includegraphics[width=\textwidth]{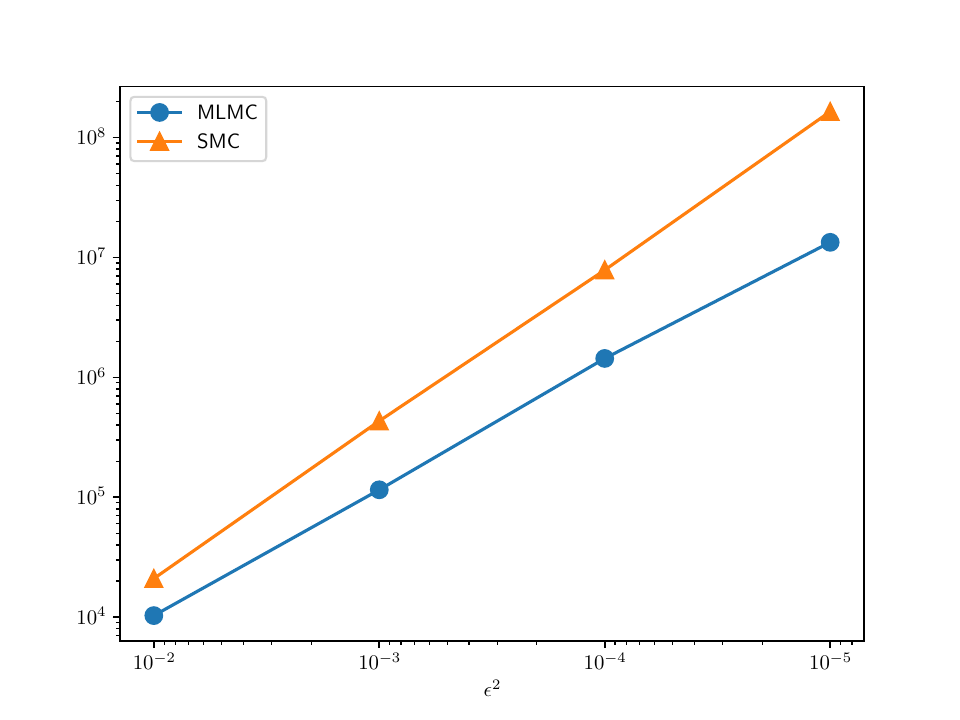}
         \caption{The initial value problem}
     \end{subfigure}
     \hfill
     \begin{subfigure}[b]{0.48\textwidth}
         \centering
         \includegraphics[width=\textwidth]{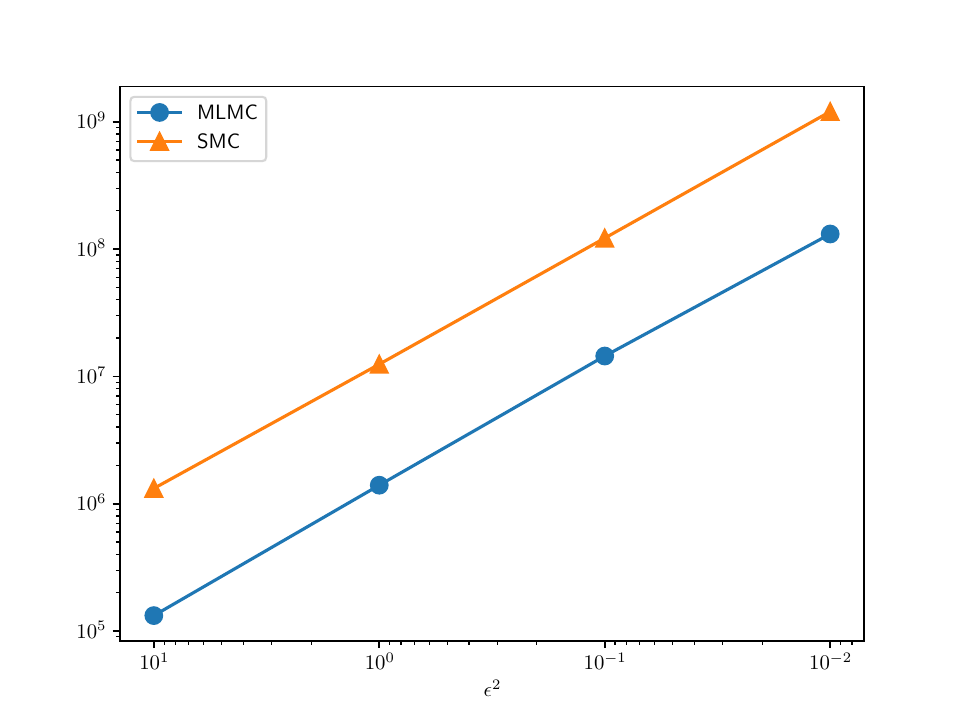}
         \caption{The Griewank function example}
     \end{subfigure}
        \caption{Log-log plot of the average computational cost ($y$-axis) against the target MISE level $\epsilon^2$ ($x$-axis) for the two examples in Subsection \ref{sec:example}, with both axes on a logarithmic scale.}
        \label{fig:result_slope}
\end{figure}

\subsubsection{The Griewank Function Example}
The following two-dimensional example is based on the Griewank  function considered in \cite{rosenbaum2017multilevel}, where the focus is on estimating the mean function. 
The simulation output is $\cY(\theta_1, \theta_2) = \mu(\theta_1, \theta_2) + \left(1+\left|\theta_{1}\right|\right) Z_{1}+\exp \left(-\theta_{2}^{2}\right) Z_{2} $ for $(\theta_1, \theta_2) \in [-5, 5]^2$, where $ \mu(\theta_1, \theta_2)=1+\left(\theta_{1}^{2}+\theta_{2}^{2}\right) / 4000-\cos \left(\theta_{1}\right) \cos \left(\theta_{2} / \sqrt{2}\right) $, with  $ Z_{1} $ and $ Z_{2} $ being independent standard normal random variables. 
The true variance function of the output is given by
\[\V{\theta_1, \theta_2} = \left(1 + |\theta_1|\right)^2 + \exp\left(-2\theta_2^2\right),  \ \theta \in  [-5, 5]^2,
\]
as illustrated in Figure \ref{fig:variance_function_example}(b).
 For implementing MLMC and SMC metamodeling approaches in this example, we use $\alpha=2$, $\gamma=2$, and vary the target MISE level $\epsilon^2$  in $\{10^{1}, 10^{0}, 10^{-1}, 10^{-2}\}$. 

Figure~\ref{fig:result_slope}(b) compares the average computational costs of MLMC and SMC metamodeling across 100 macro-replications for different MISE levels $\epsilon^2$. For MLMC metamodeling, we observe a slope close to $-1$, consistent with Theorem \ref{thm:computation}, where $\phi=\gamma / 2\alpha = 0.5$. In contrast, the slope for SMC metamodeling is also approximately $-1$, deviating from the expected value of $-1.5$ suggested by Theorem \ref{thm:computation_SMC}. The discrepancy can be attributed to two main reasons. First, Theorem \ref{thm:computation_SMC} provides an upper bound on the computational cost, which may not always be tight. Specifically, for an MISE level $\epsilon^2 < 1$, Theorem \ref{thm:computation_SMC} states that the upper bound of the computational cost for SMC metamodeling is $\cO(\epsilon^{-3})$, implying that the slope should not be steeper than -1.5. The observed slope for SMC metamodeling is -1, which aligns with this bound. 
The looseness of the observed bound may be attributed to the strong performance of SMC metamodeling in this example, which combines kernel smoothing with a suitable experimental design to effectively estimate the underlying variance function.
	Second, since the total number of design levels,  $L+1$, is discrete,  and the value of $L$ can remain unchanged for different MISE levels  $\epsilon^2$ specified, particularly when  $\epsilon^2$ is relatively large. This discreteness can cause the slope for SMC metamodeling to be less steep than the theoretical prediction. Similar observations and their underlying reasons were discussed by Rosenbaum and Staum (2017) in \cite{rosenbaum2017multilevel}.

We conclude this subsection with a summary of the performance of MLMC metamodeling for achieving a target MISE level. Table \ref{tab:reduction_cost} shows the ratio of the average computational cost of SMC metamodeling to that of MLMC metamodeling across all macro-replications. Compared with SMC metamodeling, MLMC metamodeling demonstrates substantial computational savings. For example, in the  initial value problem, MLMC metamodeling reduces the computational cost by half for the largest value of $\epsilon^2$ and by 92\% for the smallest value of $\epsilon^2$ considered. In the Griewank function example, MLMC metamodeling achieves nearly 90\% savings in computational cost across different values of $\epsilon^2$. Overall, MLMC metamodeling provides significant computational savings while delivering the prescribed estimation accuracy compared to SMC metamodeling.

\begin{table}
    \centering
    \caption{Reduction in computational cost achieved by MLMC metamodeling compared to SMC metamodeling for the examples in Subsection~\ref{sec:example}.}
    \begin{tabular}{|c|c|c|}
    \hline
        $\log_{10}\epsilon$ & Initial value & Griewank  \\ \hline
        0.5 & ~ & 9.97 \\
        0 & ~ & 8.92 \\ 
        -0.5 & ~ & 8.44  \\
        -1 & 2.03 & 9.15 \\
        -1.5 & 3.75 & ~ \\ 
        -2 &  5.46 & ~  \\ 
        -2.5 & 12.30 & ~  \\ \hline
    \end{tabular}
    
	\label{tab:reduction_cost}
\end{table}

\subsection{Application of MLMC Metamodeling to Global Sensitivity Analysis}\label{sec:sobol case}

This subsection examines the performance of the MLMC metamodeling procedure in utilizing a fixed computational budget for global sensitivity analysis. Subsection \ref{subsec:sobol} provides a brief review of Sobol' indices in global sensitivity analysis and existing estimation methods.  Subsection \ref{subsec:sobol_exp} presents numerical results demonstrating the application of MLMC metamodeling to Sobol' index estimation.

\subsubsection{Review of Sobol' Indices in Global Sensitivity Analysis}\label{subsec:sobol}
Global sensitivity analysis seeks to identify the most influential inputs---those for which a small variation results in a significant change in the model output. Sobol' indices are popular global sensitivity measures established based on the functional ANOVA decomposition \cite{saltelli2008global,im1993sensitivity}. They  quantify the impact of each input variable on the output of interest, rendering input-space dimensionality reduction by screening out input variables with low impacts \cite{lamboni2013derivative, saltelli2010variance}. 

Consider a simulation model with a $d$-dimensional  input vector $\boldsymbol{\theta} = (\theta_1, \theta_2, \dots, \theta_d) \in \Theta \subseteq \mathbb{R}^{d}$, and  $\cY(\boldsymbol{\theta})$ denotes the corresponding  output at $\boldsymbol{\theta}$.
Understanding the impact of inputs on the model's output is crucial for users. One approach to address this issue is to treat the inputs as random variables, which in turn makes the model output a random variable as well.
For an arbitrary non-empty index set $\mathbf{u}  \subseteq [d]$, define $\boldsymbol{\theta}_{\mathbf{u}}$ as the subset of entries in $\boldsymbol{\theta}$ with indices in  $\mathbf{u}$; for example, if $\mathbf{u}=\{1,2\}$, then $\boldsymbol{\theta}_{\mathbf{u}} = (\theta_1, \theta_2)$. 
The Sobol' index of $\boldsymbol{\theta}_{\mathbf{u}}$, representing the share of total variance of the output  that is due to the uncertainty in the set of input variables $\boldsymbol{\theta}_{\mathbf{u}}$,  is defined as
	\begin{equation}\label{eq:sobol_def}
		S_{\mathbf{u}} := \frac{\Var{\E{\cY\mid \boldsymbol{\theta}_{\mathbf{u}}}}}{\Var{\cY}} \ .
	\end{equation}
In particular, when $\boldsymbol{\theta}_{\mathbf{u}}$ consists of a single input, $S_{\mathbf{u}}$ is known as the first-order Sobol' index of $\boldsymbol{\theta}_{\mathbf{u}}$ \cite{mycek2019multilevel, sobol2001global}. By \eqref{eq:sobol_def}, all Sobol' indices take values between 0 and 1. The higher the value, the more significant the contribution of the partial input vector $\boldsymbol{\theta}_{\mathbf{u}}$ to the output $\cY$.

For Sobol' index estimation, commonly adopted approaches involve separately  estimating the denominator and  numerator in \eqref{eq:sobol_def} using simple Monte Carlo methods, including the well known pick-freeze scheme \cite{im1993sensitivity, sobol2001global}.
While estimating the denominator in \eqref{eq:sobol_def} is  straightforward, estimating the numerator $\Var{\E{\cY\mid \boldsymbol{\theta}_{\mathbf{u}}}}$ can be challenging. 
One method to address this is to rewrite its nested form into a covariance form as shown in Lemma 2.2 of \cite{janon2014asymptotic}:
\begin{equation}\label{eq:nest_mc_transform}
	\Var{\E{\cY\mid \boldsymbol{\theta}_{\mathbf{u}}}} = \Cov{\cY((\boldsymbol{\theta}_{\mathbf{u}},  \boldsymbol{\theta}_{\mathbf{-u}}))}{\cY((\boldsymbol{\theta}_{\mathbf{u}},  \boldsymbol{\theta}_{\mathbf{-u}}^{\prime}))},
\end{equation}
where $\boldsymbol{\theta}_{\mathbf{-u}}^{\prime}$ and $\boldsymbol{\theta}_{\mathbf{-u}}$ are independent  and identfically distributed. The computational efficiency of estimating $\Var{\E{\cY\mid \boldsymbol{\theta}_{\mathbf{u}}}}$ is improved by leveraging \eqref{eq:nest_mc_transform}, which eliminates the need for nested simulation. Recent advancements have further enhanced this efficiency, with studies incorporating techniques such as MLMC \cite{mycek2019multilevel} and multifidelity MC \cite{qian2018multifidelity} to boost the performance of MC-based estimation.

 When running the simulation model is computationally expensive, incorporating metamodeling techniques can significantly enhance efficiency for Sobol' index estimation \cite{castellan2020non, hart2017efficient, marrel2012global}. 
Previous research has explored applying metamodeling to estimate the numerator $\Var{\E{\cY\mid \boldsymbol{\theta}_{\mathbf{u}}}}$ in \eqref{eq:sobol_def},  while estimating the denominator $\Var{\cY}$ via SMC. For instance, Castellan et al.\ (2020) developed a metamodel $\widehat{Y}_m(\boldsymbol{\theta}_{\mathbf{u}})$ to approximate  $\E{\cY\mid \boldsymbol{\theta}_{\mathbf{u}}}$ \cite{castellan2020non}. They then estimated  $\Var{\E{\cY\mid \boldsymbol{\theta}_{\mathbf{u}}}}$ using
\begin{align*}
	\frac{1}{N}\sum_{i=1}^{N}\widehat{Y}_{m}^2(\boldsymbol{\theta}_{\mathbf{u}, i}) - \left(\frac{1}{N}\sum_{i=1}^{N}\widehat{Y}_{m}(\boldsymbol{\theta}_{\mathbf{u}, i})\right)^2,
\end{align*}
where $N$ denotes the MC sample size of $\boldsymbol{\theta}_{\mathbf{u}}$, and  $\boldsymbol{\theta}_{\mathbf{u}, i}$ represents the $i$th random observation of $\boldsymbol{\theta}_{\mathbf{u}}$ for $i \in [N]^{+}$. 

Consistent with the aforementioned approach, to leverage the proposed MLMC metamodeling approach for variance function estimation, we express $\Var{\E{\cY\mid \boldsymbol{\theta}_{\mathbf{u}}}}$ as $\Var{\cY} - \E{\Var{\cY \mid \boldsymbol{\theta}_{\mathbf{u}} }}$ and construct a  metamodel $\Vhat{\boldsymbol{\theta}_{\mathbf{u}}}$ as specified in \eqref{eq:mlmc_metamodeling_estimator} to approximate $\Var{\cY \mid \boldsymbol{\theta}_{\mathbf{u}} }$. The Sobol' index $S_{\mathbf{u}}$ in \eqref{eq:sobol_def} can be estimated by
 \begin{equation}\label{eq:sobol_estimator}
 	\widehat{S}_{\mathbf{u}} = \frac{M^{-1}\sum_{i=1}^{M}(\cY_i-\bar{\cY})^2 - N^{-1}\sum_{i=1}^{N} \Vhat{\boldsymbol{\theta}_{\mathbf{u}, i}}}{M^{-1}\sum_{i=1}^{M}(\cY_i-\bar{\cY})^2} \ ,
 \end{equation}
where $M$ denotes the sample size of the simulation outputs for estimating $\Var{\cY}$. Each $\cY_i$ in \eqref{eq:sobol_estimator} denotes an independent simulation output, and $\bar{\cY}$ is their sample average.

\subsubsection{Application to Global Sensitivity Analysis}\label{subsec:sobol_exp}
This subsection demonstrates the MLMC metamodeling procedure for allocating a fixed budget, as outlined in Algorithm \ref{alg:mlmc_budget}, to estimate the first-order Sobol' indices. Consider the Ishigami function, a widely used example for evaluating global sensitivity analysis approaches \cite{ishigami1990importance}: 
\begin{align*}
	\cY=\sin \left(X_{1}\right)+7 \sin \left(X_{2}\right)^{2}+0.1 X_{3}^{4} \sin \left(X_{1}\right) \ ,
\end{align*}
where $X_i$'s are independent and uniformly distributed in $[-\pi, \pi]$. We are interested in estimating the first-order Sobol' indices associated with $X_i$,  given by
\begin{align*}
	S_i = \frac{\Var{\cY} - \E{\Var{\cY \mid X_i }}}{\Var{\cY}}, \quad i =1,2, 3 \  .
\end{align*}
The true values are known in this case \cite{castellan2020non}, and are given by $ S_1 = 0.3139, S_2 = 0.4424$, and $S_3=0$.

To construct the Sobol' index estimator given in \eqref{eq:sobol_estimator}, we first estimate the following conditional variance functions:
\begin{align*}
	&\V{x_1} = \Var{\cY \mid X_1 = x_1} = \frac{49}{8}+\frac{4 \pi^{8}}{5625} \sin \left(x_1\right)^{2}, & x_1 \in [-\pi, \pi] \ , \\
	&\V{x_2} = \Var{\cY \mid X_2 = x_2} = \frac{1}{2} + \frac{\pi^4}{50} + \frac{\pi^8}{1800},  & x_2 \in [-\pi, \pi] \ , \\
\text{and } & \V{x_3} = \Var{\cY \mid X_3 = x_3} = \frac{49}{8} + \frac{1}{2}(1 + 0.1 x_3^4)^2, & x_3 \in [-\pi, \pi] \ ,
\end{align*}
and obtain their estimators $\Vhat{x_1}, \Vhat{x_2}$, and $\Vhat{x_3}$ as defined at the end of Subsection \ref{subsec:sobol} via MLMC metamodeling.

In our numerical implementation, a fixed budget of $10,000$ is allocated for constructing  $\Vhat{x_i}$ for $i = 1, 2, 3$ using MLMC metamodeling  with Algorithm \ref{alg:mlmc_budget}, where the parameters $\alpha$, $\gamma$, $s$, and $A$ are all set to $2$. 
The SMC metamodeling estimator is constructed using the same design as that for constructing the finest level's estimator  $\Vhatsub{L}{\cdot}$ as described in \eqref{eq:metamodel}. 
The parameters for the function approximation method (i.e., kernel smoothing using the Gaussian kernel) are determined through a leave-one-out cross-validation procedure. Finally, to construct the Sobol' index estimator $\widehat{S}_i$, we use a sample size of $N=10,000$ for randomly sampling  $X_i$ ($i=1,2,3$) and an additional sample size of $M=10,000$ for generating outputs to estimate $\Var{\cY}$, as described in \eqref{eq:sobol_estimator}. We mention, without showing details, that Assumptions \ref{asm:domain} through \ref{asm:LP_Y} and Assumption \ref{asm:decay_ratio} can be verified to hold in this study.

For performance evaluation, we adopt Nadaraya-Watson and wavelets estimators considered in \cite{castellan2020non} as benchmarking approaches. The experimental settings, including the budgets  for constructing metamodels and for estimating $\Var{\cY}$, are consistent with those described above. We compare the mean squared error (MSE) of the Sobol' index estimator  $\widehat{S}_i$ for $i=1,2,3$. The total number of independent macro-replications is set to $R=1,000$. For each macro-replication $j$, the estimator for $S_i$ is denoted  $\widehat{S}_{i}^{(j)}$, and the MSE of $\widehat{S}_i$ for a given method is calculated as $R^{-1}\sum_{j=1}^{R}\left(\widehat{S}_{i}^{(j)} - S_i \right)^2 $.

We first examine the estimated conditional variance functions using MLMC and SMC metamodeling in comparison with the true functions, as shown in Figure \ref{fig:compare ishigami conditional variance}. Both MLMC and SMC metamodeling methods provide adequate estimates of the variance functions, with MLMC metamodeling outperforming SMC metamodeling.

The resulting MSEs for the Sobol' index estimators are shown in Table \ref{tab:sobol}. We observe that MLMC metamodeling excels in estimating $S_1$ and $S_2$. Although MLMC metamodeling performs slightly worse than the Nadaraya-Watson and wavelet estimators for $S_3$, it is worth noting that the numerator in \eqref{eq:sobol_def}, associated with small Sobol' indices (such as $S_3=0$ in this example), is inherently challenging to estimate,  as discussed in \cite{owen2013better}. Despite this, MLMC metamodeling performs adequately in ranking the significance of all inputs and demonstrates competitive performance overall.

\begin{figure}[ht]
     \centering
\begin{subfigure}[b]{0.47\textwidth}
    \centering
    \includegraphics[width=\textwidth]{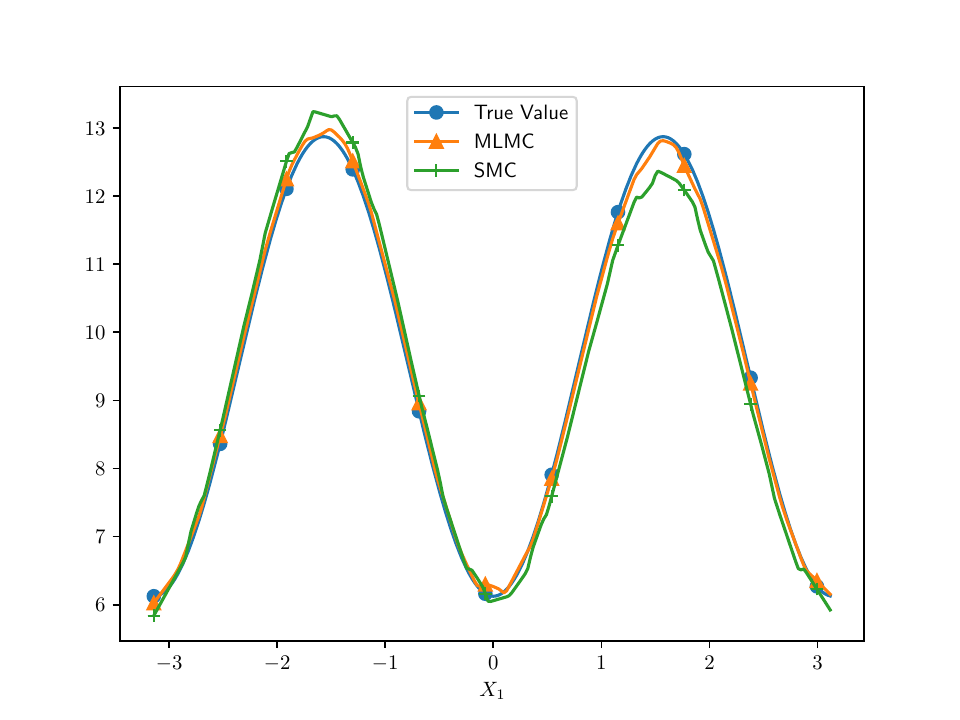}
    \caption{$\Var{\cY \mid X_1=x_1}$}
    \label{fig:true ishigami conditional variance x1}
\end{subfigure}
\hfill
\begin{subfigure}[b]{0.47\textwidth}
    \centering
    \includegraphics[width=\textwidth]{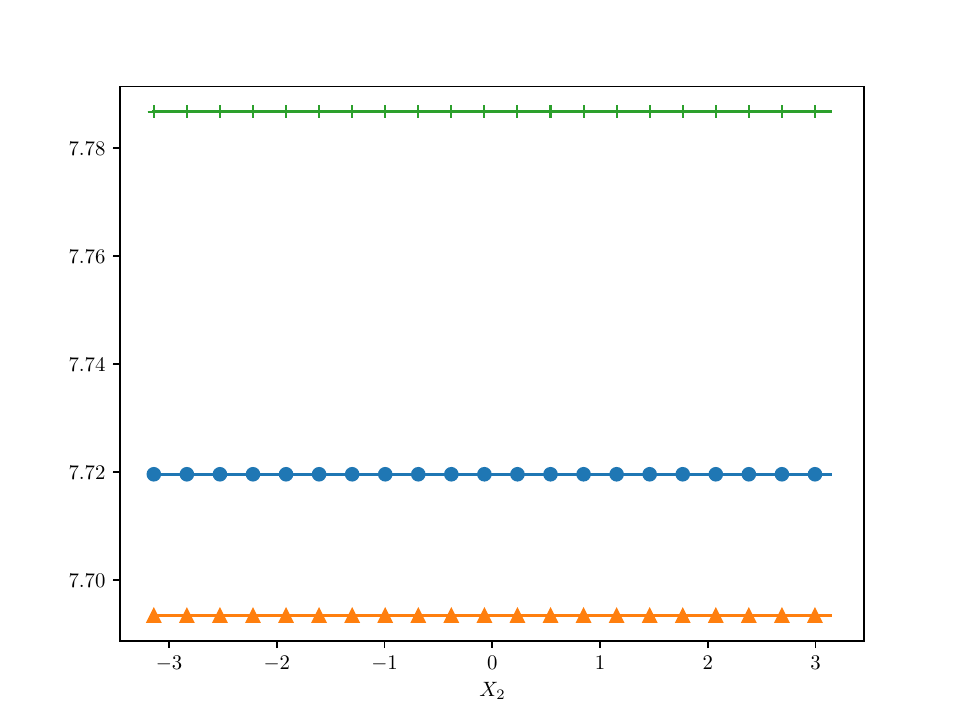}
    \caption{$\Var{\cY \mid X_2=x_2}$}
    \label{fig:true ishigami conditional variance x2}
\end{subfigure}
\hfill
\begin{subfigure}[b]{0.47\textwidth}
    \centering
    \includegraphics[width=\textwidth]{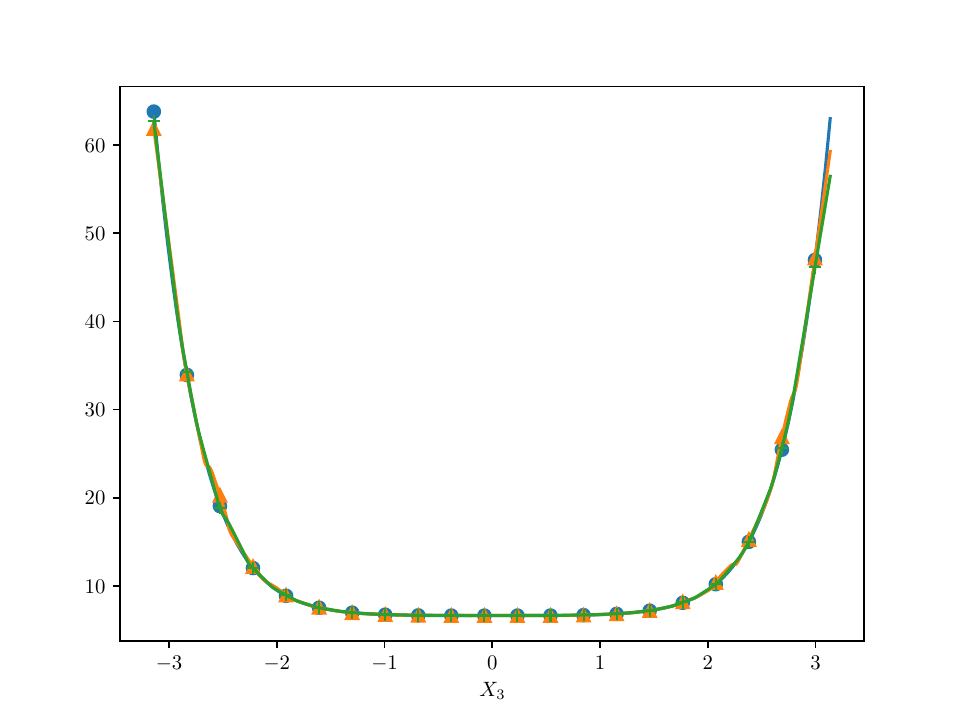}
    \caption{$\Var{\cY \mid X_3=x_3}$}
    \label{fig:true ishigami conditional variance x3}
\end{subfigure}
\caption{True conditional variance functions of the Ishigami function and their estimates obtained using MLMC and SMC metamodeling methods on an arbitrary macro-replication.}
\label{fig:compare ishigami conditional variance}
\end{figure}

\begin{table}
    \centering
    \caption{Summary of the resulting MSEs for the Sobol' index estimators obtained using MLMC metamodeling, SMC metamodeling, Nadaraya-Watson, and wavelet methods. Results for the Nadaraya-Watson and wavelet estimators are directly cited from \cite{castellan2020non}.
    }
    \begin{tabular}{|c|c|c|c|}
    \hline
        Method & $\widehat{S}_1$ & $\widehat{S}_2$ & $\widehat{S}_3$ \\ \hline
        MLMC metamodeling & $8.22\times 10^{-6}$ & $2.08\times 10^{-5}$ & $6.98\times 10^{-5}$\\
        SMC metamodeling  & $1.85\times 10^{-5}$ & $8.03\times 10^{-5}$ & $5.29\times 10^{-4}$ \\ 
        Nadaraya-Watson &  $1.30\times 10^{-5}$ & $1.10\times 10^{-4}$ & $1.10\times 10^{-7}$ \\
        Wavelets & $4.00\times 10^{-5}$ & $6.60\times 10^{-5}$ & $2.20\times 10^{-5}$ \\ \hline
    \end{tabular}
	\label{tab:sobol}
\end{table}

\section{Conclusion}\label{sec:conclusion}
In this work, we proposed a multilevel Monte Carlo (MLMC) metamodeling approach for variance function estimation. Under mild assumptions, we demonstrated that the proposed method achieves the same order of computational cost as the MLMC metamodeling approach for mean function estimation \cite{rosenbaum2017multilevel} while meeting a prescribed MISE level. We also conducted asymptotic analyses of the proposed MLMC metamodeling estimator. Additionally, we presented two MLMC procedures for practical implementation: one for achieving a target MISE level and another for utilizing a fixed computational budget. Numerical examples illustrated the efficiency and efficacy of the proposed variance function estimation approach and validated the theoretical results. An application in global sensitivity analysis showed that MLMC metamodeling performs competitively in supporting Sobol' index estimation.

Several directions for future research can be explored. Firstly, developing a dual MLMC metamodeling framework for simultaneous estimation of mean and variance functions could be valuable, potentially benefiting mean-variance analysis and robust simulation optimization. Secondly, while this work builds on classical MLMC (also known as geometric MLMC \cite{giles2015multilevel}), integrating other MLMC variants, such as randomized MLMC, may further enhance performance.
One limitation of the proposed MLMC metamodeling procedures is the curse of dimensionality in high-dimensional input spaces. Improving computational efficiency may involve advanced sampling and design strategies, such as sparse grids. The multi-index Monte Carlo method \cite{haji2016multi}, which integrates sparse grids to extend MLMC for efficient high-dimensional integration, also offers a promising approach for variance function estimation.

\appendix

\section{Proofs in Section \ref{sec:theory}}
\label{app:proofs-Section3}

\subsection{Proof of Lemma \ref{lem:finite_moment}}\label{app:lem_finite_moment}
\begin{proof}
 By Assumption \ref{asm:finite_moment}, there exists some $\theta^{\prime} \in \Theta$ such that $\Msup{4}{\cY(\theta^{\prime})}< \infty$. For any $\theta \in \Theta$, we have
	\begin{align}
		& \Msup{4}{\cY(\theta)} =  \mathbb{E}\left[\left(\cY(\theta)  - \mathbb{E}\left[\cY(\theta)\right]\right)^4\right] \nonumber \\
		 = & \ \mathbb{E}\left[\left(\cY(\theta) - \cY(\theta^\prime ) + \cY(\theta^\prime ) - \E{\cY(\theta^\prime )} + \E{\cY(\theta^\prime )} - \E{\cY(\theta)}\right)^4\right] \nonumber \\
		\leq & \ 27 \E{(\cY(\theta) - \cY(\theta^\prime ))^4} + 27 \E{(\cY(\theta^\prime ) - \E{\cY(\theta^\prime )})^4} + 27(\E{\cY(\theta^\prime )} - \E{\cY(\theta)})^4 \nonumber \\
		\leq & \ 54 \E{(\cY(\theta) - \cY(\theta^\prime ))^4} + 27 \E{(\cY(\theta^\prime ) - \E{\cY(\theta^\prime )})^4} \nonumber \\
		\leq & \ 54 \E{k_y^4} \|\theta - \theta^\prime \|^4 + 27 \Msup{4}{\cY(\theta^\prime ) } \ , \label{eq:A3-2-update-pf}
	\end{align}
	where  the first inequality follows from the power mean inequality  and the last inequality follows from Assumption \ref{asm:LP_Y}. Given that $\Msup{4}{\cY(\theta^\prime) } < \infty$ and Assumption \ref{asm:domain} hold, it follows from \eqref{eq:A3-2-update-pf} that  $\Msup{4}{\cY(\theta)}  < \infty$ for $\forall \theta \in \Theta$.
\end{proof}

\subsection{Proof of Lemma \ref{lem:LP_V}}\label{app:lem_LP_V}

\begin{proof}
		
For any $\theta_1, \theta_2 \in \Theta$,
we  observe that
\begin{align*}
			\left| \V{\theta_1} - \V{\theta_2} \right| & = \left| \E{\cY^2(\theta_1)} -\Esup{2}{\cY(\theta_1)} - \E{\cY^2(\theta_2)}  + \Esup{2}{\cY(\theta_2)} \right| \\
			& \leq \underbrace{\left| \E{\cY^2(\theta_1) - \cY^2(\theta_2)} \right|}_{(i)} + \underbrace{ \left| \Esup{2}{\cY(\theta_1)} - \Esup{2}{\cY(\theta_2)} \right| }_{(ii)} \ .
		\end{align*}
        For the term $(i)$, we have
        \begin{align*}
        	 \left| \E{\cY^2(\theta_1) - \cY^2(\theta_2)} \right| \nonumber = & \left| \E{ \left( \cY(\theta_1) - \cY(\theta_2) \right) \left( \cY(\theta_1) + \cY(\theta_2) \right) } \right|  \\
        	\leq & \Esup{1/2}{\left| \cY(\theta_1) + \cY(\theta_2) \right|^2} \Esup{1/2}{\left| \cY(\theta_1) - \cY(\theta_2) \right|^2}   \\
        	\leq &  \underbrace{ \Esup{1/2}{\left| \cY(\theta_1) + \cY(\theta_2) \right|^2} }_{(iii)} \cdot \Esup{1/2}{\kappa_y^2} \cdot \left\|\theta_1 - \theta_2 \right\| \ , 
        \end{align*}
 		where the first inequality on the right-hand side follows from the Cauchy-Schwarz inequality, and the second inequality follows from Assumption \ref{asm:LP_Y}.
 		
 		Notice that Lemma \ref{lem:finite_moment} implies the existence of a constant $C_1 > 0$ such that the term $(iii) \leq C_1$. Hence, it follows that the term $(i)$ can be bounded as follows:
 		\begin{equation}\label{eq:expectation_term1_bound}
 			\left| \E{\cY^2(\theta_1) - \cY^2(\theta_2)} \right| \leq C_1 \cdot  \Esup{1/2}{\kappa_y^2} \cdot \left\|\theta_1 - \theta_2 \right\| \ .
 		\end{equation} 
Similarly, by Assumption \ref{asm:LP_Y} and Lemma \ref{lem:finite_moment}, we can demonstrate that the term $(ii)$ can be bounded as follows:
 		\begin{equation}\label{eq:expectation_term2_bound}
 			\left| \Esup{2}{\cY(\theta_1)} - \Esup{2}{\cY(\theta_2)} \right| \leq C_2 \cdot \E{\kappa_y} \cdot \left\|\theta_1 - \theta_2 \right\| \ ,
 		\end{equation}
 		where $C_2 > 0$ is a  constant. By combining \eqref{eq:expectation_term1_bound} and \eqref{eq:expectation_term2_bound}, we obtain the following bound:
 		\[\left| \V{\theta_1} - \V{\theta_2} \right| \leq  \left(C_1\Esup{1/2}{\kappa_y^2} + C_2\E{\kappa_y}\right) \cdot  \left\|\theta_1 - \theta_2 \right\| \ .\]
 The proof is complete by setting $\kappa_v = C_1\Esup{1/2}{\kappa_y^2} + C_2\E{\kappa_y}$.

\end{proof}

\subsection{Proof of Lemma \ref{lem:var_bound}}\label{app:lem_var_bound}

\begin{proof}
The proof is in a similar spirit to the proof of the upper bound given in Equation (2.29) of \cite{mycek2019multilevel}.
Recall that $\V{\theta}$ denotes the variance of $\cY(\theta, \omega)$ and $\cV(\theta,M_{\ell})$ denotes the corresponding sample variance based on $M_{\ell}$ replications run at design point $\theta \in \Theta$. Given two random variables $W$ and $W^{\prime}$, recall that $ \Msup{4}{W} \coloneqq \E{(W-\E{W})^{4}} $ and define $ \Msup{4}{W, W^{\prime}} \coloneqq \E{(W-\E{W})^{2}(W^{\prime}-\E{W^{\prime}})^{2}} $. Let $\Cov{\cdot}{\cdot}$ and $\Covs{\cdot}{\cdot}$ denote the covariance and the sample covariance calculated with a Monte Carlo sample size of $M_{\ell}$, respectively. Given the realizations $\{W_m\}_{m \in [M_{\ell}]^{+}}$ and $\{W^{\prime}_m\}_{m \in [M_{\ell}]^{+}}$  for $W$ and $W^{\prime}$, the sample covariances of $W$ and $W^{\prime}$ are computed as follows: 

\[  \Covs{W}{W^{\prime}}  = \frac{M_{\ell}}{M_{\ell}-1}\left(\frac{1}{M_{\ell}}\sum_{m=1}^{M_{\ell}}W_m W^{\prime}_m - \frac{1}{M_{\ell}}\sum_{m=1}^{M_{\ell}}W_m \cdot\frac{1}{M_{\ell}}\sum_{m=1}^{M_{\ell}}W^{\prime}_m\right) \ . \] 
 For $M_{\ell} \geq 2$, we have
\begin{align}\label{eq:variance_of_covariance}
	\Var{\Covs{W}{W^{\prime}}} =&\frac{\Msup{4}{W, W^{\prime}}}{M_{\ell}}-\frac{(M_{\ell}-2) \Covsup{2}{W}{W^{\prime}}-\Var{W} \Var{W^{\prime}}}{M_{\ell}(M_{\ell}-1)} \nonumber \\ \leq & \frac{\Msup{4}{W, W^{\prime}}}{M_{\ell}} + \frac{\Var{W} \Var{W^{\prime}}}{M_{\ell}(M_{\ell}-1)} \ .
\end{align}
By the Cauchy-Schwarz inequality, we have 
\begin{equation}\label{eq:m4_inequaltity}
	\Msup{4}{W, W^{\prime}} \leq \sqrt{\Msup{4}{W} \Msup{4}{W^{\prime}}} \ .
\end{equation} 
Using Jensen’s inequality, we have 
\begin{equation}\label{eq:variance_square_bound}
	\left(\Var{W} \Var{W^{\prime}}\right)^{2}=\Esup{2}{\left( W - \E{W} \right)^{2}} \Esup{2}{\left( W^{\prime} - \E{W^{\prime}} \right)^{2}} \leq \Msup{4}{W} \Msup{4}{W^{\prime}} \ . 
\end{equation}
It follows from \eqref{eq:variance_of_covariance} to \eqref{eq:variance_square_bound} that $ \Var{\Covs{W}{W^{\prime}}} \leq (M_{\ell}-1)^{-1} \sqrt{\Msup{4}{W} \Msup{4}{W^{\prime}}}$.
Given any $\theta_1, \theta_2 \in \Theta$, define $\Delta_{\cY} := \cY(\theta_1, \omega) - \cY(\theta_2, \omega)$ and  $\Sigma_{\cY} := \cY(\theta_1, \omega) + \cY(\theta_2, \omega)$. It is easy to show that $\V{\theta_1} - \V{\theta_2} = \Cov{\Delta_{\cY}}{\Sigma_{\cY}}$ and $\cV(\theta_1,M_{\ell}) -\cV(\theta_2,M_{\ell}) = \Covs{\Delta_{\cY}}{\Sigma_{\cY}}$. Hence,
\begin{align*}
	\Var{\cV(\theta_1,M_{\ell}) -\cV(\theta_2,M_{\ell}) } &=  \E{\left( \cV(\theta_1,M_{\ell}) - \cV(\theta_2,M_{\ell})  - \left( \V{\theta_1} - \V{\theta_2}  \right) \right)^2 } \\
	&=  \E{ \left( \Covs{\Delta_{\cY}}{\Sigma_{\cY}} - \Cov{\Delta_{\cY}}{\Sigma_{\cY}} \right)^2 } \\
	&=  \Var{ \Covs{\Delta_{\cY}}{\Sigma_{\cY}} } \leq  \frac{1}{M_{\ell}-1} \sqrt{\Msup{4}{\Delta_{\cY}} \Msup{4}{\Sigma_{\cY}}} \ .
\end{align*}
It follows  that
\begin{align*}
	\Msup{4}{\Delta_{\cY}} &= \E{ \left( \Delta_{\cY} - \E{ \Delta_{\cY} } \right)^4} \\
	&\leq \E{ \left( 2 \Delta_{\cY}^2 + 2\Esup{2}{\Delta_{\cY}} \right)^2 } \\
	&\leq 4 \E{ \left( \kappa_{y}^2 \|\theta_1 - \theta_2\|^2 + \Esup{2}{\kappa_{y}} \|\theta_1 - \theta_2\|^2 \right)^2 } \\
	&= 4 \left(\E{ \kappa_{y}^4 } + 2\Esup{2}{\kappa_{y}^2} + \Esup{4}{\kappa_{y}} \right) \|\theta_1 - \theta_2\|^4 \ ,
\end{align*}
where the second inequality on the right-hand side follows from Assumption \ref{asm:LP_Y}. On the other hand,
\begin{align*}
	\Msup{4}{\Sigma_{\cY}} &= \E{ \left( \Sigma_{\cY} - \E{\Sigma_{\cY}} \right)^4 } \\
	&\leq \E{ 8 \left( \cY(\theta_1, \omega) - \E{\cY(\theta_1, \omega)} \right)^4 + 8 \left( \cY(\theta_2, \omega)  - \E{ \cY(\theta_2, \omega) } \right)^4  } \\
	&= 8 \left( \Msup{4}{\cY(\theta_1)} + \Msup{4}{\cY(\theta_2)} \right) 
	\leq 16 c_{\cY} \ ,
\end{align*}
which is finite according to Lemma \ref{lem:finite_moment}.
Hence,
\begin{align*}
	\Var{ \cV(\theta_1,M_{\ell}) -\cV(\theta_2,M_{\ell}) }
	\leq & \frac{1}{M_{\ell}-1} \sqrt{\Msup{4}{\Delta_{\cY}} \Msup{4}{\Sigma_{\cY}}} \\
	\leq & \frac{8}{M_{\ell}-1}\|\theta_1 - \theta_2\|^2 \sqrt{\left(\E{ \kappa_{y}^4 } + 2\Esup{2}{ \kappa_{y}^2 } + \Esup{4}{ \kappa_{y} } \right) c_{\cY}} \ .  
	\end{align*}
	
\end{proof}

\subsection{Proof of Theorem \ref{thm:computation}}\label{app:thm_proof}

\begin{proof}
	The proof is in the same vein as that of Theorem 2 in \cite{rosenbaum2017multilevel}.
    To achieve a target MISE level $\epsilon^2$, we consider the computational cost required to ensure that both the bias and variance components in \eqref{eq:mise} are less than $\epsilon^2 / 2$. 
    
    For the MLMC metamodeling estimator, as the bias component
    $\left\|\Bias{\widehat{\mathbb{V}}}\right\|_{2}^{2} \leq b^2 s^{-2 \alpha L} < \epsilon^2 / 2$ by Condition 1, the index of the finest design level $L$ satisfies
     \begin{equation}\label{eq:level_num}
     	L = \ceil[\Big]{\left(\log_s\left(\sqrt{2}b \epsilon^{-1}\right) \right) / \alpha } \ .
     \end{equation}
     Regarding the variance component of the estimator, consider the following problem to minimize the total computational cost $C$:
    \begin{align*}
    \mbox{ min } \ &C = \sum_{\ell=0}^{L}M_{\ell}N_{\ell} \\
    \text{s.t. } &\sum_{\ell=0}^{L} \frac{\left\| v_{\ell} \right\|_{1}}{M_{\ell}-1}  < \frac{\epsilon^2}{2} \\
    & M_{\ell} \geq 1, \ell \in [L] \ .
	\end{align*}  
	Given Conditions 2 and 3, the optimal solution to this problem has a closed form given by
	\begin{equation}\label{eq:mell_origin}
		M_{\ell} = \ceil[\Big]{ 2 \epsilon^{-2} \sqrt{ \left\| v_{\ell} \right\|_{1} / N_{\ell}} \sum_{\ell^{\prime}=0}^{L} \sqrt{ \left\| v_{\ell^{\prime}} \right\|_{1}N_{\ell^{\prime}}} +1}\ , \quad \forall \ell \in [L] \ .
	\end{equation} 
   For the sake of simplicity, we modify \eqref{eq:mell_origin} to $M_{\ell} = \ceil[\Big]{ 2 \epsilon^{-2} \sqrt{ \left\| v_{\ell} \right\|_{1} / N_{\ell}} \sum_{\ell^{\prime}=0}^{L} \sqrt{ \left\| v_{\ell^{\prime}} \right\|_{1}N_{\ell^{\prime}}}}$ for $\forall \ell \in [L]$.  Notice that this adjustment does not alter the computational cost's order of magnitude. We consider three cases based on the relative magnitudes of $\gamma$ and $2\alpha$. 
   
   \begin{enumerate}
   	\item $\gamma = 2\alpha$: It follows from the Conditions 3 and 4  that $M_{\ell} \leq 2\epsilon^{-2}(L+1)\sigma^2 s^{-\gamma \ell}+1$. Hence,
    \begin{align*}
	C = \sum_{\ell=0}^{L}M_{\ell}N_{\ell} \leq \sum_{\ell=0}^{L} \left(2\epsilon^{-2}(L+1)\sigma^2 s^{-\gamma \ell} +1 \right) \cdot c s^{\gamma \ell}  
	= \sum_{\ell=0}^{L} 2\epsilon^{-2}  (L+1) \sigma^2  c + \sum_{\ell=0}^{L} c \cdot s^{\gamma \ell} \ .
\end{align*}
     In light of \eqref{eq:level_num}, we have 
    \[ \sum_{\ell=0}^{L} s^{\gamma \ell} \leq \frac{s^{\gamma L}}{1 - s^{- \gamma}} \leq \frac{s^{\gamma (\alpha^{-1} \log_{s}(\sqrt{2}b\epsilon^{-1})+1)}}{1 - s^{-\gamma}} =\frac{2s^{\gamma}b^2}{1 - s^{-\gamma}} \cdot \epsilon^{-2} \ .\] 
    It follows that $C \leq 2\epsilon^{-2} \sigma^2 c (L+1)^2 + 2\epsilon^{-2}c(L+1)s^{\gamma} b^2 / (1 - s^{-\gamma})$. For $\epsilon < e^{-1} < 1$,  we have $1 < \log \epsilon^{-1}$ and $\epsilon^{-\gamma/\alpha} = \epsilon^{-2} \leq \epsilon^{-2}(\log \epsilon)^2$,  and it follows that
    \begin{align*}
        \epsilon^{-2}(L+1)^2 \leq \epsilon^{-2} \left( \alpha^{-1} \log_{s} \sqrt{2}b \epsilon^{-1} + 2 \right)^2 = \epsilon^{-2} \left( \frac{\alpha^{-1}}{\log s} \left( \log \sqrt{2}b \epsilon^{-1}  + 2 \right) \right)^{2} = \mathcal{O}\left( \epsilon^{-2}(\log \epsilon^{-1})^2 \right) \ ,
    \end{align*}
   and hence $C =\mathcal{O}\left( \epsilon^{-2}(\log \epsilon^{-1})^2 \right)$.
    
    \item  $\gamma < 2\alpha$: We have $M_{\ell} \leq 2\epsilon^{-2}\sigma^{2} s^{-(2\alpha + \gamma)\ell / 2} \sum_{\ell=0}^{L}s^{(\gamma - 2\alpha)\ell / 2}  + 1$. Notice that $\sum_{\ell=0}^{L}s^{(\gamma - 2\alpha)\ell / 2} = \left(1 - s^{(\gamma - 2\alpha)(L+1)/2} \right) / \left(1 - s^{(\gamma - 2\alpha)/2}\right) \leq \left(1 - s^{(\gamma - 2\alpha)/2} \right)^{-1}$, thus 
   $ M_{\ell} \leq 2\epsilon^{-2}\sigma^2 \left(1 - s^{(\gamma - 2\alpha)/2} \right)^{-1} s^{-(2\alpha + \gamma)\ell / 2} + 1$. 
    It follows that
    \begin{align}
    	C & \leq \sum_{\ell=0}^{L} \left(2\epsilon^{-2}\sigma^2 \left(1 - s^{(\gamma - 2\alpha)/2} \right)^{-1} s^{-(2\alpha + \gamma)\ell / 2} + 1 \right) \cdot c s^{\gamma \ell} \nonumber\\
    	& =2\epsilon^{-2}\sigma^2 c\left(1 - s^{(\gamma - 2\alpha)/2} \right)^{-1} \sum_{\ell=0}^{L}s^{-(2\alpha - \gamma)\ell / 2} + \sum_{\ell=0}^{L}c \cdot s^{\gamma \ell}   \ .\label{eq:gammalesstwoalpha} 
    \end{align}
    Since both terms on the right-hand side of \eqref{eq:gammalesstwoalpha} are $\mathcal{O}(\epsilon^{-2})$, we have $C = \mathcal{O}(\epsilon^{-2})$.

    \item $\gamma > 2\alpha$: Notice that
    \begin{align*}
        \sum_{\ell=0}^{L} s^{\frac{(\gamma -2\alpha)\ell}{2}} = \frac{1 - s^{\frac{\gamma - 2\alpha}{2}(L+1)}}{1 - s^{\frac{\gamma - 2\alpha}{2}}} = \frac{s^{\frac{2\alpha - \gamma}{2}} - s^{\frac{\gamma - 2\alpha}{2}L}}{s^{\frac{2\alpha - \gamma}{2}} - 1} \leq \frac{s^{\frac{\gamma - 2\alpha}{2}L}}{1 - s^{\frac{2\alpha - \gamma}{2}}} \ .
    \end{align*}
    It follows that
    \begin{align*}
        M_{\ell} &\leq 2\epsilon^{-2}\sigma^{2} s^{-(2\alpha + \gamma)\ell / 2} \sum_{\ell=0}^{L}s^{(\gamma - 2\alpha)\ell / 2}  + 1 \leq 2 \epsilon^{-2} \sigma^2 s^{\frac{\gamma - 2\alpha}{2}L} \left(1 - s^{\frac{2\alpha - \gamma}{2}} \right)^{-1} s^{- \frac{(2\alpha + \gamma)\ell}{2}} + 1 \ .
    \end{align*}
    Hence,
    \[C = \sum_{\ell=0}^{L}N_{\ell}M_{\ell} \leq  2 \epsilon^{-2} \sigma^2 s^{\frac{\gamma - 2\alpha}{2}L} \left(1 - s^{\frac{2\alpha - \gamma}{2}} \right)^{-2} + c \sum_{\ell=0}^{L}s^{\gamma \ell} \ .\]
    Since $ s^{(\gamma - 2\alpha)L} \leq s^{(\gamma - 2\alpha) (\alpha^{-1}\log_s\left(\sqrt{2}b \epsilon^{-1}\right) + 1)} = (\sqrt{2}b \epsilon^{-1})^{ (\gamma - 2\alpha)/\alpha} \cdot s^{\gamma - 2\alpha} $ and $\sum_{\ell=0}^{L}s^{\gamma \ell} = \mathcal{O}(\epsilon^{-\gamma/\alpha})$, $C$ is $\mathcal{O}(\epsilon^{-2-(\gamma-2\alpha)/\alpha}) = \mathcal{O}(\epsilon^{-\gamma / \alpha})$. 
   \end{enumerate}

   \end{proof} 
    
   \subsection{Proof of Theorem \ref{thm:computation_SMC}}\label{app:proof_smc}
   
    \begin{proof}
    For the SMC metamodeling estimator, consider it as an MLMC metamodeling estimator derived using one design level $\ell=0$. Based on  Condition 2, we have
    \[\left\|\Var{\Vhatsub{0}{\cdot, M_0}}\right\|_{1} \leq \bar{v} / (M_0-1) < \frac{\epsilon^2}{2} \text{,  and hence } M_0 =\mathcal{O}(\epsilon^{-2}) \ . \]
    To make the bias component less than $\epsilon^2/2$, the number of design points should be the same as that in the finest design level, i.e.,
    $N_{0} = c s^{\gamma L} = c s^{\gamma \log_s\left(\sqrt{2}b \epsilon^{-1}\right)/\alpha} = \mathcal{O}(\epsilon^{-\gamma/\alpha})$. Hence, the total cost  is  $C = M_0 N_0 = \mathcal{O}(\epsilon^{-(2 + \gamma / \alpha)})$. 
\end{proof}

\section{Proofs in Section \ref{sec:clt}}
We first provide some technical results in Subsection \ref{app:lemma}, which will be useful for our later proofs. Following that, we present the proofs for Section \ref{sec:clt} in the remaining subsections.

 We begin by recalling the definition of convergence in probability, a concept that will be used in our analysis (page 56, \cite{durrett2019probability}).

\begin{definition}
	Let $X_1, X_2, \dots, X_n, \ldots$  and $X$ be real-valued random variables. The sequence $\{X_n, n\geq 1\}$ is said to converge to $X$ in probability if, for all $\epsilon > 0$, $\mathbb{P}(|X_n-X| > \epsilon) \rightarrow 0$ as $n \rightarrow \infty$, in which we write as $X_n \converge{p} X$ as $n \rightarrow \infty$.
\end{definition}

\subsection{Auxiliary Lemmas}\label{app:lemma}
Let $\lesssim$ and $\gtrsim$  denote inequalities up to a constant multiple, and  write $a \asymp b$ to indicate that both $a \lesssim b$ and $a \gtrsim b$ hold.
\begin{lemma}\label{lem:Vl_compare}
    For $\forall \ell \in [L_{\epsilon}]$, define $V_{\ell, \epsilon}^{Z}(\theta) \coloneqq M_{\ell,\epsilon} \Var{\deltaZhatsub{\ell}{\ell}}$. Then, $V_{\ell,\epsilon}^{Z}(\theta) \asymp 1$ and $V_{\ell,\epsilon} \asymp 1$ as $\epsilon \rightarrow 0$.
\end{lemma}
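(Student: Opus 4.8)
The plan is to prove both relations by showing that, for each fixed level $\ell$, the two quantities converge to strictly positive and finite limits as $\epsilon \to 0$; since $\asymp 1$ (in $\epsilon$) means precisely being trapped between two positive constants as $\epsilon \to 0$, this is what is required. First I would reduce $V_{\ell,\epsilon}^Z(\theta)$ to the variance of a single replication's contribution. Using the decomposition $\deltaZhatsub{\ell}{\ell} = M_{\ell,\epsilon}^{-1}\sum_{m=1}^{M_{\ell,\epsilon}}\deltaZhatsubshort{(m)}{\ell}{\ell}$ into independent, identically distributed summands (recorded just above Theorem \ref{thm:linderberg_clt}), we have $\Var{\deltaZhatsub{\ell}{\ell}} = M_{\ell,\epsilon}^{-1}\Var{\deltaZhatsubshort{(1)}{\ell}{\ell}}$, and hence
\begin{equation*}
V_{\ell,\epsilon}^Z(\theta) = M_{\ell,\epsilon}\Var{\deltaZhatsub{\ell}{\ell}} = \Var{\deltaZhatsubshort{(1)}{\ell}{\ell}},
\end{equation*}
where $\deltaZhatsubshort{(1)}{\ell}{\ell} = \sum_{i=1}^{N_\ell} w_i^\ell(\theta)\cZ_1^2(\theta_i^\ell, \varpi_\ell) - \sum_{i=1}^{N_{\ell-1}} w_i^{\ell-1}(\theta)\cZ_1^2(\theta_i^{\ell-1}, \varpi_\ell)$ depends only on the first replication. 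Because the design points, the weights $w_i^\ell(\theta)$, and the sizes $N_\ell$ are all fixed and do not vary with $\epsilon$, this variance is in fact constant in $\epsilon$; it then only remains to show that it is finite and strictly positive.

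For finiteness I would apply the power-mean inequality to separate the level-$\ell$ and level-$(\ell-1)$ sums and bound each by a term of the form $\E{(\sum_i w_i^\ell(\theta)\cZ_1^2(\theta_i^\ell))^2} \le (\sum_i w_i^\ell(\theta))^2\,c_\cY$, using Cauchy-Schwarz on the cross terms, $\E{\cZ_1^2(\theta_i)\cZ_1^2(\theta_j)} \le \sqrt{\Msup{4}{\cY(\theta_i)}\Msup{4}{\cY(\theta_j)}} \le c_\cY$, with $c_\cY < \infty$ from Lemma \ref{lem:finite_moment}, together with the bound $\sum_i w_i^\ell(\theta) \le 1 + 2p_\ell$ from Assumption \ref{asm:decay_ratio}. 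This yields $V_{\ell,\epsilon}^Z(\theta) \lesssim 1$. For the lower bound $V_{\ell,\epsilon}^Z(\theta) \gtrsim 1$, I would observe that $\deltaZhatsubshort{(1)}{\ell}{\ell}$ is a non-trivial linear combination of the squared centralized outputs $\{\cZ_1^2(\theta_i)\}$ --- under the nested design $\cT_{\ell-1} \subset \cT_\ell$ the newly added points of level $\ell$ enter with net weight $w_i^\ell(\theta) > 0$ --- so its variance is strictly positive provided the outputs are non-degenerate at those points (i.e. $\Var{\cZ_1^2(\theta_i)} > 0$, which fails only for a degenerate symmetric two-point law).

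To handle $V_{\ell,\epsilon} = \|v_{\ell,\epsilon}\|_1$ I would connect the sample-variance-based refinement to the centralized one. Writing $\overline{Z}(\theta_i, \varpi_\ell) := M_{\ell,\epsilon}^{-1}\sum_m \cZ_m(\theta_i, \varpi_\ell)$, the elementary identity $\cV(\theta_i, M_{\ell,\epsilon}) = \frac{M_{\ell,\epsilon}}{M_{\ell,\epsilon}-1}\big(\overline{Z^2}(\theta_i) - \overline{Z}^2(\theta_i)\big)$ gives
\begin{equation*}
\DeltaVhatsub{\ell}{\theta, M_{\ell,\epsilon}} = \frac{M_{\ell,\epsilon}}{M_{\ell,\epsilon}-1}\big(\deltaZhatsub{\ell}{\ell} - R_{\ell,\epsilon}(\theta)\big),
\end{equation*}
where $R_{\ell,\epsilon}(\theta) := \sum_i w_i^\ell(\theta)\overline{Z}^2(\theta_i^\ell) - \sum_i w_i^{\ell-1}(\theta)\overline{Z}^2(\theta_i^{\ell-1})$ is the sample-mean correction. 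Since $\E{\overline{Z}^2(\theta_i)} = \mathcal{O}(M_{\ell,\epsilon}^{-1})$ and $\Var{\overline{Z}^2(\theta_i)} = \mathcal{O}(M_{\ell,\epsilon}^{-2})$ (using the finite fourth moment of Lemma \ref{lem:finite_moment}), it follows that $\Var{R_{\ell,\epsilon}(\theta)} = \mathcal{O}(M_{\ell,\epsilon}^{-2})$ and, by Cauchy-Schwarz, $\Cov{\deltaZhatsub{\ell}{\ell}}{R_{\ell,\epsilon}(\theta)} = \mathcal{O}(M_{\ell,\epsilon}^{-3/2})$. Combining these with $\Var{\deltaZhatsub{\ell}{\ell}} = M_{\ell,\epsilon}^{-1}V_{\ell,\epsilon}^Z(\theta)$ gives $v_{\ell,\epsilon}(\theta) = (M_{\ell,\epsilon}-1)\Var{\DeltaVhatsub{\ell}{\theta, M_{\ell,\epsilon}}} = \frac{M_{\ell,\epsilon}}{M_{\ell,\epsilon}-1}V_{\ell,\epsilon}^Z(\theta) + \mathcal{O}(M_{\ell,\epsilon}^{-1/2}) \to V_{\ell,\epsilon}^Z(\theta)$ pointwise as $\epsilon \to 0$. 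Integrating over $\theta$ and using the uniform-in-$M$ bound $v_{\ell,\epsilon}(\theta) \le \bar C s^{-2\alpha\ell}$ (which the proof of Proposition \ref{prop:bound_vl} establishes uniformly over $M_{\ell,\epsilon} \ge 2$) as an $\epsilon$-free dominating function, the dominated convergence theorem on the compact set $\Theta$ yields $V_{\ell,\epsilon} \to \|V_{\ell,\epsilon}^Z(\cdot)\|_1$, an $\epsilon$-independent constant. By the first part this limit lies in $(0, \infty)$, so $V_{\ell,\epsilon} \asymp 1$.

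The main obstacle is the strict positivity (the $\gtrsim 1$ lower bounds), which genuinely requires a non-degeneracy hypothesis on the simulation outputs rather than following from Assumptions \ref{asm:domain}--\ref{asm:LP_Y} alone; I would make this explicit as the mild condition that the squared centralized output has positive variance at the design points carrying positive net weight, consistent with the standing requirement $\Var{\zhat} > 0$ used later. A secondary technical point is justifying the pointwise limit $v_{\ell,\epsilon}(\theta) \to V_{\ell,\epsilon}^Z(\theta)$ together with the dominated convergence step: both rely on controlling the correction $R_{\ell,\epsilon}$ and its covariance with $\deltaZhatsub{\ell}{\ell}$ via the fourth-moment bound of Lemma \ref{lem:finite_moment} and the weight bound of Assumption \ref{asm:decay_ratio}, which supply the required uniform moment control.
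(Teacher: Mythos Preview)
Your argument is correct and takes a genuinely different route from the paper. The paper proceeds by brute expansion: it writes $\Var{\DeltaVhatsub{\ell}{\theta, M_{\ell,\epsilon}, \varpi_{\ell}}}$ and $\Var{\deltaZhatsub{\ell}{\ell}}$ as bilinear forms in the variances and covariances of the building blocks $\cV(\theta_i^{\ell}, M_{\ell,\epsilon}, \varpi_{\ell})$ and $\overline{Z^2}(\theta_i^{\ell}, \varpi_{\ell})$, then computes each such (co)variance term explicitly as a finite linear combination of moments with coefficients of order $M_{\ell,\epsilon}^{-1}$, concluding that both variances are of exact order $M_{\ell,\epsilon}^{-1}$. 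Your approach is more structural: for the $Z$-quantity you exploit the i.i.d.\ decomposition to obtain $V_{\ell,\epsilon}^Z(\theta)=\Var{\deltaZhatsubshort{(1)}{\ell}{\ell}}$, which is literally \emph{constant} in $\epsilon$, bypassing any rate computation; for $V_{\ell,\epsilon}$ you couple $\cV$ to $\overline{Z^2}$ through the sample-mean correction $R_{\ell,\epsilon}$ and show it contributes only $\mathcal{O}(M_{\ell,\epsilon}^{-1/2})$ to $v_{\ell,\epsilon}(\theta)$, then integrate via dominated convergence. Your route is shorter and makes clear \emph{why} the limits exist, while the paper's explicit expansions have the virtue of displaying the exact leading constants, which one could in principle read off. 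You are also more candid than the paper about the lower bound: the paper's proof only shows each summand is $\mathcal{O}(M_{\ell,\epsilon}^{-1})$ and does not rule out cancellation of the leading coefficients, so the strict positivity you flag as requiring a non-degeneracy condition is, in fact, equally implicit there.
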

\begin{proof}
 Proving Lemma \ref{lem:Vl_compare} is equivalent to showing $\Var{ \DeltaVhatsub{\ell}{\theta, M_{\ell,\epsilon}, \varpi_{\ell}} } \asymp M_{\ell,\epsilon}^{-1}$  and $\Var{\deltaZhatsub{\ell}{\ell}} \asymp M_{\ell,\epsilon}^{-1}$.
We first expand $\Var{ \DeltaVhatsub{\ell}{\theta, M_{\ell,\epsilon}, \varpi_{\ell}} }$ as follows:
 \begin{align*}
 	&\Var{ \DeltaVhatsub{\ell}{\theta, M_{\ell,\epsilon}, \varpi_{\ell}} } = \operatorname{Var}\bigg(\sum_{i=1}^{N_{\ell}}w_{i}^{\ell}(\theta) \cV(\theta_i^{\ell},M_{\ell,\epsilon}, \varpi_{\ell}) - \sum_{i=1}^{N_{\ell-1}}w_{i}^{\ell-1} (\theta)\cV(\theta_i^{\ell-1},M_{\ell,\epsilon}, \varpi_{\ell})\bigg) \\
 	=& \sum_{i=1}^{N_{\ell}} (w_{i}^{\ell}(\theta))^2\Var{ \cV(\theta_i^{\ell},M_{\ell,\epsilon}, \varpi_{\ell})} + \sum_{i=1}^{N_{\ell-1}} (w_{i}^{\ell-1}(\theta))^2\Var{ \cV(\theta_i^{\ell-1},M_{\ell,\epsilon}, \varpi_{\ell})} \\
    &+ \sum_{i=1}^{N_{\ell}}\sum_{j=1, j \neq i}^{N_{\ell}}w_{i}^{\ell}(\theta)w_{j}^{\ell}(\theta)\Cov{ \cV(\theta_i^{\ell},M_{\ell,\epsilon}, \varpi_{\ell})}{ \cV(\theta_j^{\ell},M_{\ell,\epsilon}, \varpi_{\ell})}\\
    &+  \sum_{i=1}^{N_{\ell-1}}\sum_{j=1, j \neq i}^{N_{\ell-1}}w_{i}^{\ell-1}(\theta)w_{j}^{\ell-1}(\theta)\Cov{ \cV(\theta_i^{\ell-1},M_{\ell,\epsilon}, \varpi_{\ell})}{\cV(\theta_j^{\ell-1},M_{\ell,\epsilon}, \varpi_{\ell})}\\
    &- 2 \sum_{i=1}^{N_{\ell}}\sum_{j=1}^{N_{\ell-1}}w_{i}^{\ell}(\theta)w_{j}^{\ell-1}(\theta)\Cov{ \cV(\theta_i^{\ell},M_{\ell,\epsilon}, \varpi_{\ell})}{ \cV(\theta_j^{\ell-1},M_{\ell,\epsilon}, \varpi_{\ell})}  \  .
 \end{align*}
 Similarly, we expand $\Var{\deltaZhatsub{\ell}{\ell}}$ as follows:
  \begin{align*}
 	&\Var{\deltaZhatsub{\ell}{\ell}} = \operatorname{Var}\bigg(\sum_{i=1}^{N_{\ell}}w_{i}^{\ell}(\theta) \overline{Z^2}(\theta_i^{\ell}, \varpi_{\ell}) - \sum_{i=1}^{N_{\ell-1}}w_{i}^{\ell-1}(\theta) \overline{Z^2}(\theta_i^{\ell-1}, \varpi_{\ell}) \bigg) \\
 	=& \sum_{i=1}^{N_{\ell}} (w_{i}^{\ell}(\theta))^2\Var{\overline{Z^2}(\theta_i^{\ell}, \varpi_{\ell})} + \sum_{i=1}^{N_{\ell-1}} (w_{i}^{\ell-1}(\theta))^2\Var{\overline{Z^2}(\theta_i^{\ell-1}, \varpi_{\ell})} \\
    &+ \sum_{i=1}^{N_{\ell}}\sum_{j=1, j \neq i}^{N_{\ell}}w_{i}^{\ell}(\theta)w_{j}^{\ell}(\theta)\Cov{\overline{Z^2}(\theta_i^{\ell}, \varpi_{\ell})}{\overline{Z^2}(\theta_j^{\ell}, \varpi_{\ell})} \\
    &+ \sum_{i=1}^{N_{\ell-1}}\sum_{j=1,  j \neq i}^{N_{\ell-1}}w_{i}^{\ell-1}(\theta)w_{j}^{\ell-1}(\theta)\Cov{\overline{Z^2}(\theta_i^{\ell-1}, \varpi_{\ell})}{\overline{Z^2}(\theta_j^{\ell-1}, \varpi_{\ell})} \\
    &- 2 \sum_{i=1}^{N_{\ell}}\sum_{j=1}^{N_{\ell-1}}w_{i}^{\ell}(\theta)w_{j}^{\ell-1}(\theta)\Cov{\overline{Z^2}(\theta_i^{\ell}, \varpi_{\ell})}{\overline{Z^2}(\theta_j^{\ell-1}, \varpi_{\ell})}   \ .
 \end{align*}
 We next analyze the covariance terms $\Cov{ \cV(\theta_i^{\ell},M_{\ell,\epsilon}, \varpi_{\ell})}{ \cV(\theta_j^{\ell-1},M_{\ell,\epsilon}, \varpi_{\ell})}$. Notice that, for a given design point $\theta_i^{\ell}$, the simulation outputs $\cY_m(\theta_i^{\ell}, \varpi_{\ell})$ and $\cY_n(\theta_i^{\ell}, \varpi_{\ell})$ are independent when $m \neq n$. Furthermore, $\cY_m(\theta_i^{\ell}, \varpi_{\ell})$ is independent of $\cY_n(\theta_j^{\ell-1}, \varpi_{\ell})$ with any design point $\theta_j^{\ell-1}$ on level $\ell-1$. Based on these properties, we can derive the following expansion:
\begin{align*}
	&\Cov{ \cV(\theta_i^{\ell},M_{\ell,\epsilon}, \varpi_{\ell})}{ \cV(\theta_j^{\ell-1},M_{\ell,\epsilon}, \varpi_{\ell})}\\
	 =& \frac{1}{M_{\ell,\epsilon}}  \Cov{\cY_1^2(\theta_i^{\ell}, \varpi_{\ell})}{\cY_2^2(\theta_j^{\ell-1}, \varpi_{\ell})} - \frac{2}{M_{\ell,\epsilon}} \Cov{\cY_1^2(\theta_i^{\ell}, \varpi_{\ell})}{\cY_1(\theta_j^{\ell-1}, \varpi_{\ell})\cY_2(\theta_j^{\ell-1}, \varpi_{\ell})} \\
    &- \frac{2}{M_{\ell,\epsilon}} \Cov{\cY_1^2(\theta_j^{\ell-1}, \varpi_{\ell})}{\cY_1(\theta_i^{\ell}, \varpi_{\ell})\cY_2(\theta_i^{\ell}, \varpi_{\ell})} \\
    & + \frac{2}{M_{\ell,\epsilon}(M_{\ell,\epsilon}-1)}\Cov{\cY_1(\theta_i^{\ell}, \varpi_{\ell})\cY_2(\theta_i^{\ell}, \varpi_{\ell})}{\cY_1(\theta_j^{\ell-1}, \varpi_{\ell})\cY_2(\theta_j^{\ell-1}, \varpi_{\ell})} \\
    & + \frac{4(M_{\ell,\epsilon}-2)}{M_{\ell,\epsilon}(M_{\ell,\epsilon}-1)} \Cov{\cY_1(\theta_i^{\ell}, \varpi_{\ell})\cY_2(\theta_i^{\ell}, \varpi_{\ell})}{\cY_1(\theta_j^{\ell-1}, \varpi_{\ell})\cY_3(\theta_j^{\ell-1}, \varpi_{\ell})} \ .
\end{align*}
Similarly, we expand $\Cov{\overline{Z^2}(\theta_i^{\ell}, \varpi_{\ell})}{\overline{Z^2}(\theta_j^{\ell-1}, \varpi_{\ell})}$ as follows:
\begin{align*}
	&\Cov{\overline{Z^2}(\theta_i^{\ell}, \varpi_{\ell})}{\overline{Z^2}(\theta_j^{\ell-1}, \varpi_{\ell})} \\
	=& \frac{1}{M_{\ell,\epsilon}}\E{\cY_1^2(\theta_i^{\ell}, \varpi_{\ell}) \cY_1^2(\theta_j^{\ell-1}, \varpi_{\ell})} - \frac{1}{M_{\ell,\epsilon}}\E{\cY_1^2(\theta_i^{\ell}, \varpi_{\ell})}\E{\cY_1^2(\theta_j^{\ell-1}, \varpi_{\ell})} \\
 & - \frac{2}{M_{\ell,\epsilon}}\E{\cY_1^2(\theta_i^{\ell}, \varpi_{\ell}) \cY_1(\theta_j^{\ell-1}, \varpi_{\ell})}\E{\cY_1(\theta_j^{\ell-1}, \varpi_{\ell})}  + \frac{2}{M_{\ell,\epsilon}}\E{\cY_1^2(\theta_i^{\ell}, \varpi_{\ell})}\Esup{2}{\cY_1(\theta_j^{\ell-1}, \varpi_{\ell})} \\
 & + \frac{4}{M_{\ell,\epsilon}}\E{\cY_1(\theta_i^{\ell}, \varpi_{\ell}) \cY_1(\theta_j^{\ell-1}, \varpi_{\ell})}\E{\cY_1(\theta_i^{\ell}, \varpi_{\ell})}\E{\cY_1(\theta_j^{\ell-1}, \varpi_{\ell})} \\
 & - \frac{2}{M_{\ell,\epsilon}}\E{\cY_1(\theta_i^{\ell}, \varpi_{\ell}) \cY_1^2(\theta_j^{\ell-1}, \varpi_{\ell})} \E{\cY_1(\theta_i^{\ell}, \varpi_{\ell})} +\frac{2}{M_{\ell,\epsilon}}\Esup{2}{\cY_1(\theta_i^{\ell}, \varpi_{\ell})}\E{\cY_1^2(\theta_j^{\ell-1}, \varpi_{\ell})} \\
 &- \frac{4}{M_{\ell,\epsilon}}\E{\cY_1(\theta_i^{\ell}, \varpi_{\ell})}^2 \Esup{2}{\cY_1(\theta_j^{\ell-1}, \varpi_{\ell})}  \ .
\end{align*}
Therefore, both $\operatorname{Cov}\big( \cV(\theta_i^{\ell},M_{\ell,\epsilon}, \varpi_{\ell}), \cV(\theta_j^{\ell-1},M_{\ell,\epsilon}, \varpi_{\ell}) \big)$ and $\operatorname{Cov}\big(\overline{Z^2}(\theta_i^{\ell}, \varpi_{\ell}), \overline{Z^2}(\theta_j^{\ell-1}, \varpi_{\ell}) \big)$ diminish at a rate of order $ M_{\ell, \epsilon}^{-1}$. Similarly, we can show that $\operatorname{Cov}\big( \cV(\theta_i^{\ell},M_{\ell,\epsilon}, \varpi_{\ell}), \allowbreak  \cV(\theta_j^{\ell},M_{\ell,\epsilon}, \varpi_{\ell}) \big)$, $\operatorname{Cov}\big( \cV(\theta_i^{\ell-1},M_{\ell,\epsilon}, \varpi_{\ell}), \cV(\theta_j^{\ell-1},M_{\ell,\epsilon}, \varpi_{\ell}) \big)$, $\operatorname{Cov}\big(\overline{Z^2}(\theta_i^{\ell}, \varpi_{\ell}), \overline{Z^2}(\theta_j^{\ell}, \varpi_{\ell}) \big)$, and $\operatorname{Cov}\big(\overline{Z^2}(\theta_i^{\ell-1}, \varpi_{\ell}), \allowbreak  \overline{Z^2}(\theta_j^{\ell-1}, \varpi_{\ell}) \big)$ diminish at a rate of order $ M_{\ell, \epsilon}^{-1}$.

For the variance terms, at any design point $\theta_i^{\ell}$ on level $\ell$, we have
\[\Var{\cV(\theta_i^{\ell},M_{\ell,\epsilon}, \varpi_{\ell})} = \frac{\E{\left(\cY_1(\theta_i^{\ell}, \varpi_{\ell}) - \E{\cY_1(\theta_i^{\ell}, \varpi_{\ell})} \right)^4}}{M_{\ell,\epsilon}} -\frac{\Varsup{2}{\cY_1(\theta_i^{\ell}, \varpi_{\ell})}(M_{\ell,\epsilon}-3)}{M_{\ell,\epsilon}(M_{\ell,\epsilon}-1)} \ , \]
and
\[\Var{\overline{Z^2}(\theta_i^{\ell}, \varpi_{\ell})} = \frac{\E{\left(\cY_1(\theta_i^{\ell}, \varpi_{\ell}) - \E{\cY_1(\theta_i^{\ell}, \varpi_{\ell})} \right)^4}}{M_{\ell,\epsilon}} -\frac{\Varsup{2}{\cY_1(\theta_i^{\ell}, \varpi_{\ell})}}{M_{\ell,\epsilon}} \ .\]
Both terms diminish at a rate  of order $M_{\ell, \epsilon}^{-1}$. Hence, $\Var{ \DeltaVhatsub{\ell}{\theta, M_{\ell,\epsilon}, \varpi_{\ell}} } \asymp M_{\ell,\epsilon}^{-1}$ and $\Var{\deltaZhatsub{\ell}{\ell}} \asymp M_{\ell,\epsilon}^{-1}$. 
\end{proof}

\begin{lemma}\label{lem:variance}
 Assume that $\lim_{\epsilon \rightarrow 0} S_{L_{\epsilon}} < \infty$. There exist  $0< c_{lb} \leq c_{ub}$ such that $c_{lb} \leq \lim_{\epsilon \rightarrow 0} \epsilon^{-2}\Var{\zhat} \leq c_{ub}.$
\end{lemma}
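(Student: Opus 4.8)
The plan is to pin down the asymptotic size of $\Var{\zhat}$ by first decomposing it across levels and then using the definition of $M_{\ell,\epsilon}$. Since the level-$\ell$ refinement estimators $\deltaZhatsub{\ell}{\ell}$ are built from independent random number streams $\varpi_{\ell}$, we have $\Var{\zhat} = \sum_{\ell=0}^{L_{\epsilon}} \Var{\deltaZhatsub{\ell}{\ell}}$. By Lemma~\ref{lem:Vl_compare}, $\Var{\deltaZhatsub{\ell}{\ell}} = V_{\ell,\epsilon}^{Z}(\theta)/M_{\ell,\epsilon}$ with $V_{\ell,\epsilon}^{Z}(\theta) \asymp 1$, so the whole problem reduces to estimating $\sum_{\ell=0}^{L_{\epsilon}} M_{\ell,\epsilon}^{-1}$ (up to bounded multiplicative constants uniform in $\ell$ and $\epsilon$).

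Next I would substitute the definition $M_{\ell,\epsilon} = \lceil 2\epsilon^{-2}\sqrt{V_{\ell,\epsilon}/N_{\ell}}\,S_{L_{\epsilon}}\rceil$. Dropping the ceiling (which only changes things by a bounded factor, since $M_{\ell,\epsilon}\to\infty$), $M_{\ell,\epsilon}^{-1} \asymp \tfrac{1}{2}\epsilon^{2} S_{L_{\epsilon}}^{-1} \sqrt{N_{\ell}/V_{\ell,\epsilon}}$, and therefore
\[
\sum_{\ell=0}^{L_{\epsilon}} M_{\ell,\epsilon}^{-1} \asymp \epsilon^{2} S_{L_{\epsilon}}^{-1} \sum_{\ell=0}^{L_{\epsilon}} \sqrt{N_{\ell}/V_{\ell,\epsilon}}.
\]
By Lemma~\ref{lem:Vl_compare} again, $V_{\ell,\epsilon} \asymp 1$, so $\sqrt{N_{\ell}/V_{\ell,\epsilon}} \asymp \sqrt{N_{\ell}} \asymp \sqrt{V_{\ell,\epsilon} N_{\ell}}$, whence $\sum_{\ell=0}^{L_{\epsilon}} \sqrt{N_{\ell}/V_{\ell,\epsilon}} \asymp \sum_{\ell=0}^{L_{\epsilon}} \sqrt{V_{\ell,\epsilon} N_{\ell}} = S_{L_{\epsilon}}$. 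Plugging this back gives $\sum_{\ell=0}^{L_{\epsilon}} M_{\ell,\epsilon}^{-1} \asymp \epsilon^{2} S_{L_{\epsilon}}^{-1} \cdot S_{L_{\epsilon}} = \epsilon^{2}$, and hence $\Var{\zhat} \asymp \epsilon^{2}$. Combined with the hypothesis $\lim_{\epsilon\to 0} S_{L_{\epsilon}} < \infty$ — which guarantees the constants stay genuinely bounded away from $0$ and $\infty$ rather than degenerating as $L_{\epsilon}\to\infty$ — this yields the existence of $0 < c_{lb} \le c_{ub}$ with $c_{lb} \le \liminf \epsilon^{-2}\Var{\zhat} \le \limsup \epsilon^{-2}\Var{\zhat} \le c_{ub}$, which is the claim.

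The main obstacle I anticipate is making the $\asymp$ bookkeeping fully rigorous and uniform: the constants implicit in Lemma~\ref{lem:Vl_compare}'s $V_{\ell,\epsilon}^{Z}(\theta)\asymp 1$ and $V_{\ell,\epsilon}\asymp 1$ must be shown to hold uniformly over all levels $\ell \in [L_{\epsilon}]$ and all small $\epsilon$, not merely for each fixed $\ell$ as $\epsilon\to 0$; otherwise the summation over the $L_{\epsilon}+1 \to \infty$ levels could distort the order. This is where the assumption $\lim_{\epsilon\to 0} S_{L_{\epsilon}} < \infty$ does real work: since $S_{L_{\epsilon}} = \sum_{\ell}\sqrt{V_{\ell,\epsilon}N_{\ell}}$ with $N_{\ell}$ increasing, a finite limit forces the tail levels to contribute negligibly, so only finitely many levels matter asymptotically and the uniformity reduces to a finite check plus the per-level convergence already supplied by Lemma~\ref{lem:Vl_compare}. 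I would also need to handle the ceiling function carefully — specifically, verifying $M_{\ell,\epsilon}\to\infty$ for each relevant $\ell$ so that $\lceil x\rceil \asymp x$ — but this follows from the stated property that $M_{\ell,\epsilon}\to\infty$ as $\epsilon\to 0$.
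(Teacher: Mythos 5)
There is a genuine gap, and it sits exactly where you flagged your "main obstacle": your use of Lemma~\ref{lem:Vl_compare} as if $V_{\ell,\epsilon}^{Z}\asymp 1$ and $V_{\ell,\epsilon}\asymp 1$ held uniformly over $\ell\in[L_{\epsilon}]$. That lemma is a per-level statement as $\epsilon\to 0$; across levels both quantities decay geometrically (Proposition~\ref{prop:bound_vl}/Condition~3 give $V_{\ell,\epsilon}=\cO(s^{-2\alpha\ell})$), so the implicit constants cannot be uniform in $\ell$ while $L_{\epsilon}\to\infty$. In fact a uniform lower bound $V_{\ell,\epsilon}\ge c>0$ would force $S_{L_{\epsilon}}=\sum_{\ell}\sqrt{V_{\ell,\epsilon}N_{\ell}}\ge \sqrt{c}\sum_{\ell}\sqrt{N_{\ell}}\to\infty$, contradicting the very hypothesis $\lim_{\epsilon\to0}S_{L_{\epsilon}}<\infty$; so both your reduction of $\Var{\zhat}$ to $\sum_{\ell}M_{\ell,\epsilon}^{-1}$ and the step $\sqrt{N_{\ell}/V_{\ell,\epsilon}}\asymp\sqrt{V_{\ell,\epsilon}N_{\ell}}$ are false as stated (your two non-uniform $\asymp$'s happen to cancel, but neither is justified). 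What is uniformly controlled, and what the paper's proof actually uses, is the \emph{ratio} $V_{\ell,\epsilon}^{Z}/V_{\ell,\epsilon}$: keeping each summand intact, $\epsilon^{-2}V_{\ell,\epsilon}^{Z}/M_{\ell,\epsilon}\le c_{ub}\sqrt{V_{\ell,\epsilon}N_{\ell}}/S_{L_{\epsilon}}$, and summing gives the upper bound $c_{ub}$ directly. Your argument needs to be reorganized around this ratio; the claim that the uniformity "reduces to a finite check plus per-level convergence" is not a proof and cannot repair the displayed chain.

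The lower bound has a second unresolved issue: dismissing the ceiling because "$M_{\ell,\epsilon}\to\infty$" is a fixed-$\ell$ statement, whereas you need the un-ceiled quantity $2\epsilon^{-2}\sqrt{V_{\ell,\epsilon}/N_{\ell}}\,S_{L_{\epsilon}}$ bounded away from $0$ uniformly over all $\ell\le L_{\epsilon}$; Condition~3 only upper-bounds $V_{\ell,\epsilon}$, so at the top levels this quantity may be below $1$ and $M_{\ell,\epsilon}=1$, breaking $\lceil x\rceil\asymp x$. The paper's proof sidesteps both problems in the lower-bound direction by discarding the top levels: it keeps only $\ell\le\lceil L_{\epsilon}/k\rceil$ for a suitable $k>1$ with $\gamma/k<2\alpha$, bounds $M_{\ell,\epsilon}\le 2\epsilon^{-2}\sqrt{V_{\ell,\epsilon}/N_{\ell}}\,S_{L_{\epsilon}}+1$, pulls out $c_{lb}:=\inf_{\ell}V_{\ell,\epsilon}^{Z}/(2V_{\ell,\epsilon})$, and then shows the two error terms vanish, namely $S_{\lceil L_{\epsilon}/k\rceil}/S_{L_{\epsilon}}\to 1$ and $\epsilon^{2}\sum_{\ell\le\lceil L_{\epsilon}/k\rceil}N_{\ell}/S_{L_{\epsilon}}^{2}\to 0$. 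Your "only finitely many levels matter" remark gestures toward this truncation but is not carried out, so as written the proposal does not establish either bound.
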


\begin{proof}
By Lemma \ref{lem:Vl_compare}, there exists a positive constant $c_{ub} < \infty$ such that $V_{\ell,\epsilon}^{Z} / (2V_{\ell,\epsilon}) \leq c_{ub}$. For any $\epsilon > 0$,  we have
	\[\frac{\Var{\zhat}}{\epsilon^2} = \sum_{\ell=0}^{L_{\epsilon}}\frac{V_{\ell,\epsilon}^{Z}}{\epsilon^2 M_{\ell,\epsilon}} \leq  \sum_{\ell=0}^{L_{\epsilon}}\frac{2c_{ub}V_{\ell,\epsilon}}{\epsilon^2 M_{\ell,\epsilon}} \leq c_{ub}\sum_{\ell=0}^{L_{\epsilon}}\frac{\sqrt{V_{\ell,\epsilon}N_{\ell}}}{S_{L_{\epsilon}}}=c_{ub} \ ,\]
where the last inequality follows by recalling the definition $M_{\ell, \epsilon} \coloneqq \ceil[\Big]{ 2 \epsilon^{-2} \sqrt{ V_{\ell,\epsilon} / N_{\ell}} S_{L_{\epsilon}} }$.
Define $c_{lb} \coloneqq \inf_{\ell} V_{\ell,\epsilon}^{Z} / (2V_{\ell,\epsilon})$. Since $\lim_{\epsilon \rightarrow 0}S_{L_{\epsilon}} < \infty$, there exists $k > 1$ with $\gamma / k < 2\alpha$ such that
	\begin{align*}
		\frac{\Var{\zhat}}{\epsilon^2} &=  \sum_{\ell=0}^{L_{\epsilon}}\frac{V_{\ell,\epsilon}^{Z}}{\epsilon^2 M_{\ell,\epsilon}}  \geq \sum_{\ell=0}^{\lceil L_{\epsilon}/k \rceil}\frac{V_{\ell,\epsilon}^{Z}}{\epsilon^2 M_{\ell,\epsilon}} \geq  \sum_{\ell=0}^{\lceil L_{\epsilon}/k \rceil}\frac{V_{\ell,\epsilon}^{Z}}{2\epsilon^2 + 2\sqrt{V_{\ell,\epsilon}/N_{\ell}}S_{L_{\epsilon}}} \\
		&\geq c_{lb}  \sum_{\ell=0}^{\lceil L_{\epsilon}/k \rceil}\frac{V_{\ell,\epsilon}}{\epsilon^2 + \sqrt{V_{\ell,\epsilon}/N_{\ell}}S_{L_{\epsilon}}} \\
		& \geq c_{lb} \left( \frac{S_{\lceil L_{\epsilon}/k \rceil}}{S_{L_{\epsilon}}}  - \epsilon^2 \sum_{\ell=0}^{\lceil L_{\epsilon}/k \rceil} \frac{N_{\ell}}{S_{L_{\epsilon}}^2} \right) \ .
	\end{align*}
	By the mean value theorem, there exists a constant $C > 0$ such that $\lim_{\epsilon \rightarrow 0}\epsilon^2 \sum_{\ell=0}^{\lceil L_{\epsilon}/k \rceil} N_{\ell}/S_{L_{\epsilon}}^2 \leq \lim_{\epsilon \rightarrow 0} C \epsilon^2 2^{\gamma L_{\epsilon} / k} / S_{L_{\epsilon}}^2 = 0$. And $\lim_{\epsilon \rightarrow 0}S_{\lceil L_{\epsilon}/k \rceil}/S_{L_{\epsilon}} =1 $ leads to $\lim_{\epsilon \rightarrow 0} \Var{\zhat}/\epsilon^2 \geq c_{lb}$.
\end{proof}

\subsection{Proof of Proposition \ref{prop:clt_single_level}}\label{proof:clt_single_level}
\begin{proof}
Consider a fixed $\ell \in [L_{\epsilon}]$. By decomposing the  single level estimator $\Vhatsub{\ell}{\theta, M_{\ell,\epsilon}, \varpi_{\ell}}$, we have
\begin{align*}
	&\Vhatsub{\ell}{\theta, M_{\ell,\epsilon}, \varpi_{\ell}} - \E{\Vhatsub{\ell}{\theta, M_{\ell,\epsilon}, \varpi_{\ell}}} \\ 
	= & \underbrace{\sum_{i=1}^{N_{\ell}}w_{i}^{\ell}(\theta) \frac{1}{M_{\ell,\epsilon}}\cV(\theta_i^{\ell},M_{\ell,\epsilon}, \varpi_{\ell})}_{\text{(B.1)}} 
	- \underbrace{\sum_{i=1}^{N_{\ell}}w_{i}^{\ell}(\theta) (\overline{Z}(\theta_i^{\ell}, \varpi_{\ell}))^2}_{\text{(B.2)}} 
	+ \underbrace{\sum_{i=1}^{N_{\ell}}w_{i}^{\ell}(\theta) \overline{Z^2}(\theta_i^{\ell}, \varpi_{\ell}) 
	-  \E{\Vhatsub{\ell}{\theta, M_{\ell,\epsilon}, \varpi_{\ell}}} }_{\text{(B.3)}} \  .
\end{align*}
Regarding the term (B.1), for any $i \in [N_{\ell}]^{+}$, since $\cV(\theta_i^{\ell},M_{\ell,\epsilon}, \varpi_{\ell}) \converge{p} \V{\theta_i^{\ell}}$ as $\epsilon \rightarrow 0$, we have
\[\sum_{i=1}^{N_{\ell}}w_{i}^{\ell}(\theta) \cdot \cV(\theta_i^{\ell},M_{\ell,\epsilon}, \varpi_{\ell}) \cdot\frac{1}{M_{\ell,\epsilon}} \converge{p}  \sum_{i=1}^{N_{\ell}}w_{i}^{\ell}(\theta) \cdot \V{\theta_i^{\ell}} \cdot 0 =0 \text{ as }  \epsilon \rightarrow 0 \ .\]
Regarding the term (B.2), since $\overline{Z}(\theta_i^{\ell}, \varpi_{\ell}) \converge{p} 0$ as $\epsilon \rightarrow 0$, we have $\sum_{i=1}^{N_{\ell}}w_{i}^{\ell}(\theta) (\overline{Z}(\theta_i^{\ell}, \varpi_{\ell}))^2 \converge{p} 0$ as $\epsilon \rightarrow 0$ by the continuous mapping theorem. 
For the term (B.3), we have
 \begin{align*}
 	\E{\Vhatsub{\ell}{\theta, M_{\ell,\epsilon}, \varpi_{\ell}}} = \sum_{i=1}^{N_{\ell}}w_{i}^{\ell}(\theta) \E{ \cV(\theta_i^{\ell},M_{\ell,\epsilon}, \varpi_{\ell}) } =  \sum_{i=1}^{N_{\ell}}w_{i}^{\ell}(\theta)  \Var{\cZ_1(\theta_i^{\ell}, \varpi_{\ell})} \ ,
 \end{align*}
 and
 \[\sum_{i=1}^{N_{\ell}}w_{i}^{\ell}(\theta)\E{\overline{Z^2}(\theta_i^{\ell}, \varpi_{\ell})} = \sum_{i=1}^{N_{\ell}}w_{i}^{\ell}(\theta)\E{\frac{1}{M_{\ell,\epsilon}}\sum_{m=1}^{M_{\ell,\epsilon}} (\cZ_{m}(\theta_i^{\ell}, \varpi_{\ell}))^2} =\sum_{i=1}^{N_{\ell}}w_{i}^{\ell}(\theta)\Var{\cZ_1(\theta_i^{\ell}, \varpi_{\ell})} \ .\]
 The CLT implies that as $\epsilon \rightarrow 0$,
 \[\sum_{i=1}^{N_{\ell}}w_{i}^{\ell}(\theta) \left(\overline{Z^2}(\theta_i^{\ell}, \varpi_{\ell})\right) -  \E{ \Vhatsub{\ell}{\theta, M_{\ell,\epsilon}, \varpi_{\ell}}}  \Longrightarrow \mathcal{N}\left(0, \operatorname{Var}\bigg(\sum_{i=1}^{N_{\ell}}w_{i}^{\ell}(\theta) \overline{Z^2}(\theta_i^{\ell}, \varpi_{\ell}) \bigg) \right) \ . \]
By Slutsky's theorem, we have  as $\epsilon \rightarrow 0$,
 \[ \Vhatsub{\ell}{\theta, M_{\ell,\epsilon}, \varpi_{\ell}} - \E{ \Vhatsub{\ell}{\theta, M_{\ell,\epsilon}, \varpi_{\ell}}} \Longrightarrow \mathcal{N}\left(0, \operatorname{Var}\bigg(\sum_{i=1}^{N_{\ell}}w_{i}^{\ell}(\theta) \overline{Z^2}(\theta_i^{\ell}, \varpi_{\ell})\bigg) \right)   \ . \]
 Finally, it can be shown that $\Var{\sum_{i=1}^{N_{\ell}}w_{i}^{\ell}(\theta) \overline{Z^2}(\theta_i^{\ell}, \varpi_{\ell})}  = \Var{\sum_{i=1}^{N_{\ell}}w_{i}^{\ell}(\theta) \cY_1^2(\theta_i^{\ell}, \varpi_{\ell})}  / M_{\ell,\epsilon}$ by the definition of $\overline{Z^2}(\theta_i^{\ell}, \varpi_{\ell})$ for $i \in [N_{\ell}]^{+}$. The proof is complete. 
\end{proof}

\subsection{Proof of Proposition \ref{prop:clt_refine}}\label{proof:clt_refine}
\begin{proof}
Consider a fixed $\ell \in [L_{\epsilon}]$.	We have the following decomposition:
\begin{align*}
	&\DeltaVhatsub{\ell}{\theta, M_{\ell,\epsilon}, \varpi_{\ell}}  - \E{\DeltaVhatsub{\ell}{\theta, M_{\ell,\epsilon}, \varpi_{\ell}} } \nonumber \\ = & \underbrace{ \sum_{i=1}^{N_{\ell}}w_{i}^{\ell}(\theta) \frac{1}{M_{\ell,\epsilon}} \cV(\theta_i^{\ell},M_{\ell,\epsilon}, \varpi_{\ell}) - \sum_{i=1}^{N_{\ell-1}}w_{i}^{\ell-1}(\theta) \frac{1}{M_{\ell,\epsilon}} \cV(\theta_i^{\ell-1},M_{\ell,\epsilon}, \varpi_{\ell})}_{\text{(B.4)}} \\
	& - \underbrace{ \left(\sum_{i=1}^{N_{\ell}}w_{i}^{\ell}(\theta) (\overline{Z}(\theta_i^{\ell}, \varpi_{\ell}))^2 - \sum_{i=1}^{N_{\ell-1}}w_{i}^{\ell-1}(\theta) (\overline{Z}(\theta_i^{\ell-1}, \varpi_{\ell}))^2 \right) }_{\text{(B.5)}} \\
	& + \underbrace{ \sum_{i=1}^{N_{\ell}}w_{i}^{\ell}(\theta) \overline{Z^2}(\theta_i^{\ell}, \varpi_{\ell}) - \sum_{i=1}^{N_{\ell-1}}w_{i}^{\ell-1}(\theta) \overline{Z^2}(\theta_i^{\ell-1}, \varpi_{\ell}) -  \E{\DeltaVhatsub{\ell}{\theta, M_{\ell, \epsilon}, \varpi_{\ell}} } }_{\text{(B.6)}}  \ .
\end{align*}
Based on the proof of Proposition \ref{prop:clt_single_level} in Appendix \ref{proof:clt_single_level}, we can easily show that both  (B.4) and (B.5) converge to 0 in probability as $\epsilon \rightarrow 0$.
For the term (B.6),   we have
\begin{align*}
	\E{\DeltaVhatsub{\ell}{\theta, M_{\ell,\epsilon}, \varpi_{\ell}} } &= \sum_{i=1}^{N_{\ell}}w_{i}^{\ell}(\theta) \E{ \cV(\theta_i^{\ell},M_{\ell,\epsilon}, \varpi_{\ell}) } - \sum_{i=1}^{N_{\ell-1}}w_{i}^{\ell-1}(\theta) \E{ \cV(\theta_i^{\ell-1},M_{\ell,\epsilon}, \varpi_{\ell}) } \\
	&= \sum_{i=1}^{N_{\ell}}w_{i}^{\ell}(\theta) \Var{\cZ_1(\theta_i^{\ell}, \varpi_{\ell})  } - \sum_{i=1}^{N_{\ell-1}}w_{i}^{\ell-1}(\theta) \Var{ \cZ_1(\theta_i^{\ell-1}, \varpi_{\ell}) } \ .
\end{align*}
It follows that
\begin{align*}
	&\E{ \sum_{i=1}^{N_{\ell}}w_{i}^{\ell}(\theta) \overline{Z^2}(\theta_i^{\ell}, \varpi_{\ell}) - \sum_{i=1}^{N_{\ell-1}}w_{i}^{\ell-1}(\theta) \overline{Z^2}(\theta_i^{\ell-1}, \varpi_{\ell})} \\ = &\sum_{i=1}^{N_{\ell}}w_{i}^{\ell}(\theta) \Var{\cZ_1(\theta_i^{\ell}, \varpi_{\ell})}  - \sum_{i=1}^{N_{\ell-1}}w_{i}^{\ell-1}(\theta) \Var{ \cZ_1(\theta_i^{\ell-1}, \varpi_{\ell}) } = \E{\DeltaVhatsub{\ell}{\theta, M_{\ell,\epsilon}, \varpi_{\ell}} } \ .
\end{align*}
 Hence, the CLT implies that  as $\epsilon \rightarrow 0$,
 \begin{align*}
 	&\sum_{i=1}^{N_{\ell}}w_{i}^{\ell}(\theta) \overline{Z^2}(\theta_i^{\ell}, \varpi_{\ell}) - \sum_{i=1}^{N_{\ell-1}}w_{i}^{\ell-1}(\theta) \overline{Z^2}(\theta_i^{\ell-1}, \varpi_{\ell}) -  \E{\DeltaVhatsub{\ell}{\theta, M_{\ell,\epsilon}, \varpi_{\ell}} } \\  
 	&\qquad \Longrightarrow \mathcal{N}\bigg(0,  \underbrace{ \Var{\sum_{i=1}^{N_{\ell}}w_{i}^{\ell}(\theta) \overline{Z^2}(\theta_i^{\ell}, \varpi_{\ell}) - \sum_{i=1}^{N_{\ell-1}}w_{i}^{\ell-1}(\theta) \overline{Z^2}(\theta_i^{\ell-1}, \varpi_{\ell})} }_{\text{(B.7)}} \bigg)  \ . 
 \end{align*}
 By the definitions of $\overline{Z^2}(\theta_i^{\ell}, \varpi_{\ell})$ and $\overline{Z^2}(\theta_i^{\ell-1}, \varpi_{\ell})$, the asymptotic variance given in (B.7) can be rewritten as
 \[\Var{\sum_{i=1}^{N_{\ell}}w_{i}^{\ell}(\theta) \cY_1^2(\theta_i^{\ell}, \varpi_{\ell}) - \sum_{i=1}^{N_{\ell-1}}w_{i}^{\ell-1}(\theta) \cY_1^2(\theta_i^{\ell}, \varpi_{\ell}) } / M_{\ell, \epsilon} \ .\]
Hence, as $\epsilon \rightarrow 0$,
 \begin{align*}
 	&\sqrt{M_{\ell, \epsilon}} \left( \DeltaVhatsub{\ell}{\theta, M_{\ell,\epsilon}, \varpi_{\ell}}  - \E{\DeltaVhatsub{\ell}{\theta, M_{\ell,\epsilon}, \varpi_{\ell}} } \right) \\
 	&\qquad \Longrightarrow \mathcal{N}\left(0, \Var{\sum_{i=1}^{N_{\ell}}w_{i}^{\ell}(\theta) \cY_1^2(\theta_i^{\ell}, \varpi_{\ell}) - \sum_{i=1}^{N_{\ell-1}}w_{i}^{\ell-1}(\theta) \cY_1^2(\theta_i^{\ell-1}, \varpi_{\ell}) } \right) \ . 
 \end{align*}

\end{proof}

\subsection{Proof of Proposition \ref{prop:clt_z_v}}\label{proof:clt_z_v}
\begin{proof}
	To show the asymptotic normality of MLMC metamodeling estimator $\Vhat{\theta}$, we first note that 
 
 \begin{align*}
 	\frac{\Vhat{\theta} - \E{\Vhat{\theta}}}{\sqrt{\Var{\zhat}}}  =& \frac{ \sum_{\ell=0}^{L_{\epsilon}}\left(\DeltaVhatsub{\ell}{\theta, M_{\ell,\epsilon}, \varpi_{\ell}}  -\E{\DeltaVhatsub{\ell}{\theta, M_{\ell,\epsilon}, \varpi_{\ell}} } \right)}{\sqrt{\Var{\zhat}}} \nonumber\\
		=& \underbrace{ \Var{\zhat}^{-\frac{1}{2}}\sum_{\ell=0}^{L_{\epsilon}} \frac{1}{M_{\ell,\epsilon}} \DeltaVhatsub{\ell}{\theta, M_{\ell,\epsilon}, \varpi_{\ell}} }_{\text{(B.8)}} \\ 
	& - \underbrace{ \Var{\zhat}^{-\frac{1}{2}}  \sum_{\ell=0}^{L_{\epsilon}}\left(\sum_{i=1}^{N_{\ell}}w_{i}^{\ell}(\theta) (\overline{Z}(\theta_i^{\ell}, \varpi_{\ell}))^2 - \sum_{i=1}^{N_{\ell-1}}w_{i}^{\ell-1}(\theta) (\overline{Z}(\theta_i^{\ell-1}, \varpi_{\ell-1}))^2 \right) }_{\text{(B.9)}}\\
	& + \underbrace{ \Var{\zhat}^{-\frac{1}{2}} \left(\zhat - \E{\zhat} \right) }_{\text{(B.10)}} \ .
  \end{align*}
  We first show that the term (B.8) $ \converge{p} 0$ as $\epsilon \rightarrow 0$. By the continuous mapping theorem, it is sufficient to show that
  	\[ \frac{\left( \sum_{\ell=0}^{L_{\epsilon}}\frac{1}{M_{\ell,\epsilon}}  \DeltaVhatsub{\ell}{\theta, M_{\ell,\epsilon}, \varpi_{\ell}} \right)^2}{\sum_{\ell=0}^{L_{\epsilon}} \frac{1}{M_{\ell,\epsilon}} \Var{  \deltaZhatsubshort{(1)}{\ell}{\ell}  }} \converge{p} 0 \text{ as } \epsilon \rightarrow 0 \ .\]
  By Titu's lemma, we have
  	\begin{align*}
  		\frac{\left( \sum_{\ell=0}^{L_{\epsilon}}\frac{1}{M_{\ell,\epsilon}}  \DeltaVhatsub{\ell}{\theta, M_{\ell,\epsilon}, \varpi_{\ell}} \right)^2}{\sum_{\ell=0}^{L_{\epsilon}} \frac{1}{M_{\ell,\epsilon}} \Var{  \deltaZhatsubshort{(1)}{\ell}{\ell} }} &\leq \sum_{\ell=0}^{L_{\epsilon}} \frac{\left( \frac{1}{M_{\ell,\epsilon}}  \DeltaVhatsub{\ell}{\theta, M_{\ell,\epsilon}, \varpi_{\ell}} \right)^2}{\frac{1}{M_{\ell,\epsilon}} \Var{  \deltaZhatsubshort{(1)}{\ell}{\ell} }} \converge{p} 0 \text{ as } \epsilon \rightarrow 0 \ .
  	\end{align*}
To see how the last step follows, we note that for any $\delta > 0$, as $\epsilon \rightarrow 0$, it follows that
   \begin{align*}
       &\mathbb{P}\left\{\left| \sum_{\ell=0}^{L_{\epsilon}} \frac{1}{M_{\ell,\epsilon}}  \frac{\left(   \DeltaVhatsub{\ell}{\theta, M_{\ell,\epsilon}, \varpi_{\ell}} \right)^2}{ \Var{  \deltaZhatsubshort{(1)}{\ell}{\ell} }} \right| > \delta \right\} \leq \frac{1}{\delta} \cdot \E{ \sum_{\ell=0}^{L_{\epsilon}} \frac{1}{M_{\ell,\epsilon}}  \frac{\left(  \DeltaVhatsub{\ell}{\theta, M_{\ell,\epsilon}, \varpi_{\ell}}  \right)^2}{ \Var{  \deltaZhatsubshort{(1)}{\ell}{\ell} }}}  \\
      = &  \frac{1}{\delta}  \sum_{\ell=0}^{L_{\epsilon}}\frac{1}{M_{\ell,\epsilon}} \cdot \frac{ \left( \E{ \DeltaVhatsub{\ell}{\theta, M_{\ell,\epsilon}, \varpi_{\ell}}} \right)^2 + \Var{ \DeltaVhatsub{\ell}{\theta, M_{\ell,\epsilon}, \varpi_{\ell}} }}{ \Var{ \deltaZhatsubshort{(1)}{\ell}{\ell} } }  \\
      = & \frac{1}{\delta}   \underbrace{ \sum_{\ell=0}^{L_{\epsilon}}\frac{1}{M_{\ell,\epsilon}} \cdot \frac{ \left( \E{ \DeltaVhatsub{\ell}{\theta, M_{\ell,\epsilon}, \varpi_{\ell}}} \right)^2}{ \Var{ \deltaZhatsubshort{(1)}{\ell}{\ell} } } }_{\mbox{(B.11)}} +  \frac{1}{\delta}   \underbrace{ \sum_{\ell=0}^{L_{\epsilon}}\frac{1}{M_{\ell,\epsilon} } \cdot \frac{(M_{\ell,\epsilon}-1)^{-1} v_{\ell, \epsilon} }{ \Var{ \deltaZhatsubshort{(1)}{\ell}{\ell} } } }_{\mbox{(B.12)}} \  .
   \end{align*} 
   For the term (B.11), we first note that $\left( \E{ \DeltaVhatsub{\ell}{\theta, M_{\ell,\epsilon}, \varpi_{\ell}}} \right)^2 / \Var{ \deltaZhatsubshort{(1)}{\ell}{\ell} }$ is a constant. Furthermore, since $\Vhat{\theta}$ is an \textit{$\epsilon^2$-estimator}, we have 
   $L_{\epsilon} \leq \alpha^{-1} \log_s(\sqrt{2}b\epsilon^{-1}) + 1 \mbox{ and } M_{\ell, \epsilon} \geq \epsilon^{-2} \cdot C_1$, 
   with $C_1 > 0$ being some constant. Hence, as $\epsilon \rightarrow 0$,
   \begin{align*}
   	\sum_{\ell=0}^{L_{\epsilon}} M_{\ell,\epsilon}^{-1} \leq \sum_{\ell=0}^{L_{\epsilon}} C_1^{-1}\epsilon^2 \leq 2b^2 C_1^{-1} \cdot \sum_{\ell=0}^{L_{\epsilon}}  s^{-2\alpha (L_{\epsilon}-1)} = 2b^2 C_1^{-1} \cdot (L_{\epsilon}+1) \cdot s^{-2\alpha (L_{\epsilon}-1)}  \rightarrow 0 \ .
   \end{align*}
   It follows that the term (B.11) converges to $0$ as $\epsilon \rightarrow 0$. Regarding the term (B.12), we note that $v_{\ell, \epsilon} = \cO(s^{-2\alpha \ell})$. By applying the same analytical approach used for (B.11), we can show that  (B.12) converges to $0$  as $\epsilon \rightarrow 0$.
   
 Next, we show that the term (B.9) $ \converge{p} 0$ as $\epsilon \rightarrow 0$. We first show that $\Var{\zhat}^{-1/2}\sum_{\ell=0}^{L_{\epsilon}}  \sum_{i=1}^{N_{\ell}}w_{i}^{\ell}(\theta) (\overline{Z}(\theta_i^{\ell}, \varpi_{\ell}))^2 \converge{p} 0$ as $\epsilon \rightarrow 0$.
  	 Notice that
  	 \begin{align*}
  	 	&\frac{\sum_{\ell=0}^{L_{\epsilon}}  \sum_{i=1}^{N_{\ell}}w_{i}^{\ell}(\theta) (\overline{Z}(\theta_i^{\ell}, \varpi_{\ell}))^2   }{\sqrt{\Var{\zhat}}} = \sqrt{\frac{ \left( \sum_{\ell=0}^{L_{\epsilon}}  \sum_{i=1}^{N_{\ell}}w_{i}^{\ell}(\theta) (\overline{Z}(\theta_i^{\ell}, \varpi_{\ell}))^2 \right)^2  }{\sum_{\ell=0}^{L_{\epsilon}} \frac{1}{M_{\ell, \epsilon}} \Var{  \deltaZhatsubshort{(1)}{\ell}{\ell} }}} \\
  	 	&\leq \sqrt{\sum_{\ell=0}^{L_{\epsilon}} \frac{M_{\ell,\epsilon} \left( \sum_{i=1}^{N_{\ell}}w_{i}^{\ell}(\theta) (\overline{Z}(\theta_i^{\ell}, \varpi_{\ell}))^2 \right)^2  }{\Var{  \deltaZhatsubshort{(1)}{\ell}{\ell} }}} \leq \sum_{\ell=0}^{L_{\epsilon}} \sqrt{\frac{M_{\ell,\epsilon} \left( \sum_{i=1}^{N_{\ell}}w_{i}^{\ell}(\theta) (\overline{Z}(\theta_i^{\ell}, \varpi_{\ell}))^2 \right)^2  }{\Var{  \deltaZhatsubshort{(1)}{\ell}{\ell} }}} \\
  	 	&=\sum_{\ell=0}^{L_{\epsilon}} \frac{\sqrt{M_{\ell,\epsilon}} \sum_{i=1}^{N_{\ell}}w_{i}^{\ell}(\theta) (\overline{Z}(\theta_i^{\ell}, \varpi_{\ell}))^2 }{\sqrt{\Var{  \deltaZhatsubshort{(1)}{\ell}{\ell} }}} \ .
  	 \end{align*}
Define $g_{\ell} :=\sum_{i=1}^{N_{\ell}}w_{i}^{\ell}(\theta) \Var{\cZ_1(\theta_i^{\ell}, \varpi_{\ell})} / \sqrt{\Var{  \deltaZhatsubshort{(1)}{\ell}{\ell} }}$, which is a constant for any fixed $\ell$. It follows that
    \begin{align*}
       & \mathbb{P}\left\{\left| \sum_{\ell=0}^{L_{\epsilon}} \frac{\sqrt{M_{\ell,\epsilon}} \sum_{i=1}^{N_{\ell}}w_{i}^{\ell}(\theta) (\overline{Z}(\theta_i^{\ell}, \varpi_{\ell}))^2 }{\sqrt{\Var{  \deltaZhatsubshort{(1)}{\ell}{\ell} }}}   \right| > \delta \right\} 
       \leq  \frac{1}{\delta} \cdot \E{ \sum_{\ell=0}^{L_{\epsilon}} \frac{\sqrt{M_{\ell,\epsilon}} \sum_{i=1}^{N_{\ell}}w_{i}^{\ell}(\theta) (\overline{Z}(\theta_i^{\ell}, \varpi_{\ell}))^2 }{\sqrt{\Var{  \deltaZhatsubshort{(1)}{\ell}{\ell} }}}} \\
       =& \frac{1}{\delta} \sum_{\ell=0}^{L_{\epsilon}}\frac{1}{\sqrt{M_{\ell,\epsilon}}} \cdot \frac{\sum_{i=1}^{N_{\ell}}w_{i}^{\ell}(\theta) M_{\ell,\epsilon}\E{  (\overline{Z}(\theta_i^{\ell}, \varpi_{\ell}))^2}}{\sqrt{\Var{  \deltaZhatsubshort{(1)}{\ell}{\ell} }}} 
       =\frac{1}{\delta} \sum_{\ell=0}^{L_{\epsilon}}\frac{1}{\sqrt{M_{\ell,\epsilon}}} \cdot \frac{\sum_{i=1}^{N_{\ell}}w_{i}^{\ell}(\theta) \Var{\cZ_1(\theta_i^{\ell}, \varpi_{\ell})}}{\sqrt{\Var{  \deltaZhatsubshort{(1)}{\ell}{\ell} }}} \\
      = & \frac{1}{\delta} \cdot \sum_{\ell=0}^{L_{\epsilon}}\frac{g_{\ell}}{\sqrt{M_{\ell,\epsilon}}} \longrightarrow 0 \text{ as } \epsilon \rightarrow 0 \ .
   \end{align*}
  	This implies that
  	 \[ \sum_{\ell=0}^{L_{\epsilon}} \frac{\sqrt{M_{\ell,\epsilon}} \sum_{i=1}^{N_{\ell}}w_{i}^{\ell}(\theta) (\overline{Z}(\theta_i^{\ell}, \varpi_{\ell}))^2 }{\sqrt{\Var{  \deltaZhatsubshort{(1)}{\ell}{\ell} }}} \converge{p} 0 \text{ as } \epsilon \rightarrow 0 \ .\]
  	 Similarly, we have 
  	  \[\frac{\sum_{\ell=0}^{L_{\epsilon}}  \sum_{i=1}^{N_{\ell-1}}w_{i}^{\ell-1}(\theta) (\overline{Z}(\theta_i^{\ell-1}, \varpi_{\ell}))^2   }{\sqrt{\Var{\zhat}}} \converge{p} 0 \text{ as } \epsilon \rightarrow 0 \ .\]
  	 Hence, the term (B.9) $\converge{p} 0$ as $\epsilon \rightarrow 0$. The proof is completed by applying   Slutsky's theorem.
\end{proof}

\subsection{Proof of Proposition \ref{thm:clt}}\label{proof:clt_corollary}
\begin{proof} 
	For any $\nu>0$, it follows from \eqref{eq:triangle} and \eqref{eq:norm_var_estimator} that
    \begin{align*}
        &\lim_{\epsilon \rightarrow 0} \sum_{n=1}^{M_{\epsilon}} \E{|Z_{\epsilon, n}|^2 \bfone{|Z_{\epsilon, n}|> \nu}}  \\
        &=  \lim_{\epsilon \rightarrow 0} \sum_{\ell=0}^{L_{\epsilon}} \sum_{m=1}^{M_{\ell,\epsilon}} \E{\frac{\left|\deltaZhatsubshort{(1)}{\ell}{\ell} - \E{\deltaZhatsubshort{(1)}{\ell}{\ell}} \right|^2}{ M_{\ell,\epsilon}^2 \Var{\zhat}} \bfone{\frac{\left|\deltaZhatsubshort{(1)}{\ell}{\ell} - \E{\deltaZhatsubshort{(1)}{\ell}{\ell}} \right|^2}{ M_{\ell,\epsilon}^2 \Var{\zhat}} > \nu^2}} \\
        &= \lim_{\epsilon \rightarrow 0}\sum_{\ell=0}^{L_{\epsilon}} \frac{V_{\ell,\epsilon}(\theta)}{\Var{\zhat}M_{\ell,\epsilon}} \E{\frac{\left|\deltaZhatsubshort{(1)}{\ell}{\ell} - \E{\deltaZhatsubshort{(1)}{\ell}{\ell}}
       \right|^2}{V_{\ell,\epsilon}(\theta)} \times \right. \\ &
        \left.  \qquad \bfone{ \frac{\left|\deltaZhatsubshort{(1)}{\ell}{\ell} - \E{\deltaZhatsubshort{(1)}{\ell}{\ell}} \right|^2}{V_{\ell,\epsilon}(\theta)} > \frac{\Var{\zhat}M_{\ell,\epsilon}^2}{V_{\ell,\epsilon}(\theta)} \nu } } \ . 
    \end{align*}
\end{proof}

\subsection{Proof of Proposition \ref{thm:linde_finite}}\label{proof:linde_finite}
\begin{proof}

By Lemma \ref{lem:variance} in Subsection \ref{app:lemma}, for any $\nu > 0$, we have
{\small
 \begin{align*}
	\lim_{\epsilon \rightarrow 0}\sum_{\ell=0}^{L_{\epsilon}} \E{\frac{\left|\deltaZhatsubshort{(1)}{\ell}{\ell} - \E{\deltaZhatsubshort{(1)}{\ell}{\ell}} \right|^2}{\Var{\zhat}M_{\ell,\epsilon}} \cdot \bfone{ \frac{\left|\deltaZhatsubshort{(1)}{\ell}{\ell} - \E{\deltaZhatsubshort{(1)}{\ell}{\ell}} \right|^2}{V_{\ell,\epsilon}} > \frac{\Var{\zhat}M_{\ell,\epsilon}^2}{V_{\ell,\epsilon}} \nu } } \\
	 \geq  \frac{1}{c_{ub}} \lim_{\epsilon \rightarrow 0}\sum_{\ell=0}^{L_{\epsilon}} \E{\frac{\left|\deltaZhatsubshort{(1)}{\ell}{\ell} - \E{\deltaZhatsubshort{(1)}{\ell}{\ell}} \right|^2}{\epsilon^2 M_{\ell,\epsilon}} \cdot \bfone{ \frac{\left|\deltaZhatsubshort{(1)}{\ell}{\ell} - \E{\deltaZhatsubshort{(1)}{\ell}{\ell}} \right|^2}{V_{\ell,\epsilon}} > \frac{c_{ub} \epsilon^2 M_{\ell,\epsilon}^2}{V_{\ell,\epsilon}} \nu } }  
\end{align*}
}
and
{\small
 \begin{align*}
	\lim_{\epsilon \rightarrow 0}\sum_{\ell=0}^{L_{\epsilon}} \E{\frac{\left|\deltaZhatsubshort{(1)}{\ell}{\ell} - \E{\deltaZhatsubshort{(1)}{\ell}{\ell}} \right|^2}{\Var{\zhat}M_{\ell,\epsilon}} \cdot \bfone{ \frac{\left|\deltaZhatsubshort{(1)}{\ell}{\ell} - \E{\deltaZhatsubshort{(1)}{\ell}{\ell}} \right|^2}{V_{\ell,\epsilon}} > \frac{\Var{\zhat}M_{\ell,\epsilon}^2}{V_{\ell,\epsilon}} \nu } } \\
	 \leq \frac{1}{c_{lb}}\lim_{\epsilon \rightarrow 0}\sum_{\ell=0}^{L_{\epsilon}} \E{\frac{\left|\deltaZhatsubshort{(1)}{\ell}{\ell} - \E{\deltaZhatsubshort{(1)}{\ell}{\ell}} \right|^2}{\epsilon^2 M_{\ell,\epsilon}} \cdot \bfone{ \frac{\left|\deltaZhatsubshort{(1)}{\ell}{\ell} - \E{\deltaZhatsubshort{(1)}{\ell}{\ell}} \right|^2}{V_{\ell,\epsilon}} > \frac{c_{lb} \epsilon^2 M_{\ell,\epsilon}^2}{V_{\ell,\epsilon}} \nu } } \ .
\end{align*}
}
Hence, the Lindeberg's condition in \eqref{eq:lindeberg} is equivalent to 
{\small
\begin{equation}\label{eq:lindeberg_intermediate}
	\lim_{\epsilon \rightarrow 0}\sum_{\ell=0}^{L_{\epsilon}} \E{\frac{\left|\deltaZhatsubshort{(1)}{\ell}{\ell} - \E{\deltaZhatsubshort{(1)}{\ell}{\ell}} \right|^2}{\epsilon^2 M_{\ell,\epsilon}} \cdot \bfone{ \frac{\left|\deltaZhatsubshort{(1)}{\ell}{\ell} - \E{\deltaZhatsubshort{(1)}{\ell}{\ell}} \right|^2}{V_{\ell,\epsilon}} > \frac{\epsilon^2 M_{\ell,\epsilon}^2}{V_{\ell,\epsilon}} \nu } } = 0 \ . \tag{B.13}
\end{equation}
}
In light of the fact that $M_{\ell,\epsilon} = \ceil[\Big]{ 2 \epsilon^{-2} \sqrt{  V_{\ell,\epsilon} / N_{\ell}} \sum_{\ell^{\prime}=0}^{L_{\epsilon}} \sqrt{ V_{\ell^{\prime},\epsilon} N_{\ell^{\prime}}} }$, the term \eqref{eq:lindeberg_intermediate} can be further written as 
\begin{align*}
	\lim_{\epsilon \rightarrow 0}\sum_{\ell=0}^{L_{\epsilon}} \bfone{V_{\ell,\epsilon} > 0}\sqrt{\frac{N_{\ell}}{V_{\ell,\epsilon}}} \E{\left|\deltaZhatsubshort{(1)}{\ell}{\ell} - \E{\deltaZhatsubshort{(1)}{\ell}{\ell}} \right|^2 \times \right. \nonumber \\ \left. \  \bfone{ \left|\deltaZhatsubshort{(1)}{\ell}{\ell} - \E{\deltaZhatsubshort{(1)}{\ell}{\ell}} \right|^2 > \epsilon^2 M_{\ell,\epsilon}^2 \nu } }=0\ .	
\end{align*}
For any $\nu > 0$ and $\ell \in [L_{\epsilon}]$, we have 
{\small
\[ \E{\left|\deltaZhatsubshort{(1)}{\ell}{\ell} - \E{\deltaZhatsubshort{(1)}{\ell}{\ell}} \right|^2 \cdot   \bfone{ \left|\deltaZhatsubshort{(1)}{\ell}{\ell} - \E{\deltaZhatsubshort{(1)}{\ell}{\ell}} \right|^2 > \epsilon^2 M_{\ell,\epsilon}^2 \nu } } \leq V_{\ell,\epsilon}^Z < \infty \ , \]
}
where the last step follows from Lemma \ref{lem:Vl_compare}. By the dominated convergence theorem, we have
\begin{align*}
	\lim_{\epsilon \rightarrow 0}\sum_{\ell=0}^{L_{\epsilon}} \bfone{V_{\ell,\epsilon} > 0}\sqrt{\frac{N_{\ell}}{V_{\ell,\epsilon}}} \E{\left|\deltaZhatsubshort{(1)}{\ell}{\ell} - \E{\deltaZhatsubshort{(1)}{\ell}{\ell}} \right|^2 \times \right. \nonumber \\ \left. \  \bfone{ \left|\deltaZhatsubshort{(1)}{\ell}{\ell} - \E{\deltaZhatsubshort{(1)}{\ell}{\ell}} \right|^2 > \epsilon^2 M_{\ell,\epsilon}^2 \nu } } \\
	=\sum_{\ell=0}^{L_{\epsilon}} \bfone{V_{\ell,\epsilon} > 0}\sqrt{\frac{N_{\ell}}{V_{\ell,\epsilon}}} \lim_{\epsilon \rightarrow 0}\E{\left|\deltaZhatsubshort{(1)}{\ell}{\ell} - \E{\deltaZhatsubshort{(1)}{\ell}{\ell}} \right|^2 \times \right. \nonumber \\ \left. \  \bfone{ \left|\deltaZhatsubshort{(1)}{\ell}{\ell} - \E{\deltaZhatsubshort{(1)}{\ell}{\ell}} \right|^2 > \epsilon^2 M_{\ell,\epsilon}^2 \nu } } \ .
\end{align*}
 Since for any $\ell \in [L_{\epsilon}]$, $\lim_{\epsilon \rightarrow 0} \epsilon^2 M_{\ell,\epsilon}^2 \geq \lim_{\epsilon \rightarrow 0} \epsilon^{-2}\frac{V_{\ell,\epsilon}}{N_{\ell}}S^2_{L_{\epsilon}} = \infty$. Therefore, we have
 \begin{align*}
 \begin{split}
 	\bfone{V_{\ell,\epsilon} > 0}\sqrt{\frac{N_{\ell}}{V_{\ell,\epsilon}}} \lim_{\epsilon \rightarrow 0}\E{\left|\deltaZhatsubshort{(1)}{\ell}{\ell} - \E{\deltaZhatsubshort{(1)}{\ell}{\ell}} \right|^2 \times \right. \nonumber \\ \left. \  \bfone{ \left|\deltaZhatsubshort{(1)}{\ell}{\ell} - \E{\deltaZhatsubshort{(1)}{\ell}{\ell}} \right|^2 > \epsilon^2 M_{\ell,\epsilon}^2 \nu } } =0 \ . 
 \end{split}
 \end{align*} 
 \end{proof}

\subsection{Proof of Proposition \ref{thm:linde_infinite}}\label{proof:linde_infinite}
\begin{proof}
The proof is inspired by the proof of Theorem 2.6 in \cite{hoel2019central}. We first establish the asymptotic normality of the normalized estimator given in \eqref{eq:norm_var_estimator} under Assumption \ref{asm:basic_convergence}. We have
	\begin{align*}
		&\sum_{\ell=0}^{L_{\epsilon}} \frac{V_{\ell,\epsilon}}{\Var{\zhat}M_{\ell,\epsilon}} \E{ V_{\ell,\epsilon}^{-1} \left|\deltaZhatsubshort{(1)}{\ell}{\ell} - \E{\deltaZhatsubshort{(1)}{\ell}{\ell}} \right|^2  \right. \\ &\left. \quad \times \bfone{ V_{\ell,\epsilon}^{-1} \left|\deltaZhatsubshort{(1)}{\ell}{\ell} - \E{\deltaZhatsubshort{(1)}{\ell}{\ell}} \right|^2  >  V_{\ell,\epsilon}^{-1} \Var{\zhat}M_{\ell,\epsilon}^2 \nu }} \\
		\leq & \sum_{\ell=0}^{L_{\epsilon}} \frac{\sqrt{V_{\ell,\epsilon}N_{\ell}}}{\epsilon^{-2}\Var{\zhat}S_{L_{\epsilon}}} \E{ V_{\ell,\epsilon}^{-1}  \left|\deltaZhatsubshort{(1)}{\ell}{\ell} - \E{ \deltaZhatsubshort{(1)}{\ell}{\ell}} \right|^2  \right. \\ &\left. \quad \times \bfone{V_{\ell,\epsilon}^{-1} \left|\deltaZhatsubshort{(1)}{\ell}{\ell} - \E{\deltaZhatsubshort{(1)}{\ell}{\ell}} \right|^2 > V_{\ell,\epsilon}^{-1} \Var{\zhat}M_{\ell,\epsilon}^2 \nu } } \\
		=& \frac{\epsilon^2}{\Var{\zhat}} \sum_{\ell=0}^{L_{\epsilon}} \frac{\sqrt{V_{\ell,\epsilon}N_{\ell}}}{S_{L_{\epsilon}}} \E{V_{\ell,\epsilon}^{-1} \left|\deltaZhatsubshort{(1)}{\ell}{\ell} - \E{\deltaZhatsubshort{(1)}{\ell}{\ell}} \right|^2 \right. \\ &\left. \quad \times \bfone{ V_{\ell,\epsilon}^{-1} \left|\deltaZhatsubshort{(1)}{\ell}{\ell} - \E{\deltaZhatsubshort{(1)}{\ell}{\ell}} \right|^2  > V_{\ell,\epsilon}^{-1} \Var{\zhat}M_{\ell,\epsilon}^2 \nu } } \ ,
	\end{align*}
where the last inequality follows by recalling the definition $M_{\ell, \epsilon} \coloneqq \ceil[\Big]{ 2 \epsilon^{-2} \sqrt{ V_{\ell,\epsilon} / N_{\ell}} S_{L_{\epsilon}} }$.
Let $\tilde{L}_{\epsilon}$ be a monotonically decreasing function of $\epsilon$ satisfying $\lim_{\epsilon \rightarrow 0}\tilde{L}_{\epsilon} = \infty \text{ and } \lim_{\epsilon \rightarrow 0} S_{\tilde{L}_{\epsilon}}/S_{L_{\epsilon}} = 0$.
	It is easy to verify that $\tilde{L}_{\epsilon} = \min \left\{\ell \in \mathbb{N} \mid S_{\ell+1} \geq \sqrt{S_{L_{\epsilon}}}\right\}$ satisfies these conditions. Then, we have
	
		\begin{align*}
		&\sum_{\ell=0}^{L_{\epsilon}} \frac{\sqrt{V_{\ell,\epsilon}N_{\ell}}}{S_{L_{\epsilon}}} \E{ V_{\ell,\epsilon}^{-1} \left|\deltaZhatsubshort{(1)}{\ell}{\ell} - \E{\deltaZhatsubshort{(1)}{\ell}{\ell}} \right|^2 \right. \\ &\left. \quad \times \bfone{ V_{\ell,\epsilon}^{-1} \left|\deltaZhatsubshort{(1)}{\ell}{\ell} - \E{\deltaZhatsubshort{(1)}{\ell}{\ell}} \right|^2 > V_{\ell,\epsilon}^{-1} \Var{\zhat}M_{\ell,\epsilon}^2 \nu } } \\
		\leq& \sum_{\ell=0}^{\tilde{L}_{\epsilon}}  \frac{\sqrt{V_{\ell,\epsilon}N_{\ell}}}{S_{L_{\epsilon}}} + \sum_{\ell=\tilde{L}_{\epsilon}+1}^{L_{\epsilon}} \frac{\sqrt{V_{\ell,\epsilon}N_{\ell}}}{S_{L_{\epsilon}}} \E{ V_{\ell,\epsilon}^{-1} \left|\deltaZhatsubshort{(1)}{\ell}{\ell} - \E{\deltaZhatsubshort{(1)}{\ell}{\ell}} \right|^2 \right. \\ &\left. \quad \times \bfone{ V_{\ell,\epsilon}^{-1} \left|\deltaZhatsubshort{(1)}{\ell}{\ell} - \E{\deltaZhatsubshort{(1)}{\ell}{\ell}} \right|^2 > V_{\ell,\epsilon}^{-1} \Var{\zhat}M_{\ell,\epsilon}^2 \nu } } \\
		\leq & \frac{S_{\tilde{L}_{\epsilon}}}{S_{L_{\epsilon}}} + \frac{S_{L_{\epsilon}}-S_{\tilde{L}_{\epsilon}}}{S_{L_{\epsilon}}} \sup_{\ell > \tilde{L}_{\epsilon}} \E{V_{\ell,\epsilon}^{-1} \left|\deltaZhatsubshort{(1)}{\ell}{\ell} - \E{\deltaZhatsubshort{(1)}{\ell}{\ell}} \right|^2 \right. \\ &\left. \quad \times \bfone{ V_{\ell,\epsilon}^{-1} \left|\deltaZhatsubshort{(1)}{\ell}{\ell} - \E{\deltaZhatsubshort{(1)}{\ell}{\ell}} \right|^2 > V_{\ell,\epsilon}^{-1} \Var{\zhat}M_{\ell,\epsilon}^2 \nu } }  \ .
	\end{align*}
Assumption \ref{asm:basic_convergence} implies that
	\begin{align*}
		\lim_{\epsilon \rightarrow 0} \Var{\zhat} M_{\ell,\epsilon}^2 &\geq \lim_{\epsilon \rightarrow 0} \Var{\zhat} \epsilon^{-4} \frac{V_{\ell,\epsilon}}{N_{\ell}}S_{L_{\epsilon}}^2  = \lim_{\epsilon \rightarrow 0} \left(\epsilon^{-2}\Var{\zhat} \sqrt{\frac{V_{\ell,\epsilon}}{N_{\ell}}} S_{L_{\epsilon}} \right) \cdot \left(\epsilon^{-2} \sqrt{\frac{V_{\ell,\epsilon}}{N_{\ell}}}  S_{L_{\epsilon}} \right) = \infty .
	\end{align*}
It follows that
	\begin{align*}
		&\lim_{\epsilon \rightarrow 0}\sum_{\ell=0}^{L_{\epsilon}} \frac{\sqrt{V_{\ell,\epsilon}N_{\ell}}}{S_{L_{\epsilon}}} \E{V_{\ell,\epsilon}^{-1} \left|\deltaZhatsubshort{(1)}{\ell}{\ell} - \E{\deltaZhatsubshort{(1)}{\ell}{\ell}} \right|^2 \right. \\ &\left. \quad \times \bfone{ V_{\ell,\epsilon}^{-1} \left|\deltaZhatsubshort{(1)}{\ell}{\ell} - \E{\deltaZhatsubshort{(1)}{\ell}{\ell}} \right|^2 > V_{\ell,\epsilon}^{-1} \Var{\zhat}M_{\ell,\epsilon}^2 \nu } } \\
		\leq  &\lim_{\epsilon \rightarrow 0}\frac{S_{\tilde{L}_{\epsilon}}}{S_{L_{\epsilon}}}  + \limsup_{\ell \rightarrow \infty} \E{ V_{\ell,\epsilon}^{-1} \left|\deltaZhatsubshort{(1)}{\ell}{\ell} - \E{\deltaZhatsubshort{(1)}{\ell}{\ell}} \right|^2 \right. \\ &\left. \quad \times \bfone{V_{\ell,\epsilon}^{-1} \left|\deltaZhatsubshort{(1)}{\ell}{\ell} - \E{\deltaZhatsubshort{(1)}{\ell}{\ell}} \right|^2 > V_{\ell,\epsilon}^{-1} \Var{\zhat}M_{\ell,\epsilon}^2 \nu }} =0 \ .
	\end{align*}
Hence, the Lindeberg's condition \eqref{eq:lindeberg} is satisfied. To show that under Assumption \ref{asm:Vl_Sl_order}, the normalized estimator given in \eqref{eq:norm_var_estimator} is asymptotically normal, we only need to show that Assumption \ref{asm:Vl_Sl_order} is a sufficient condition for Assumption \ref{asm:basic_convergence}. Define $p \coloneqq \inf_{\ell} V_{\ell,\epsilon}^{Z} / V_{\ell,\epsilon}$. Then  $p$ is positive and finite by Lemma \ref{lem:Vl_compare}. It follows that
\begin{align*}
	 \frac{\Var{\zhat}}{\epsilon^2} = \sum_{\ell=0}^{L_{\epsilon}} \frac{V_{\ell,\epsilon}^{Z}}{\epsilon^2 M_{\ell,\epsilon}} \geq \sum_{\ell=0}^{L_{\epsilon}} \frac{V_{\ell,\epsilon}^{Z}}{\sqrt{\frac{V_{\ell,\epsilon}}{N_{\ell}}} S_{L_{\epsilon}} + \epsilon^2} 
  = \sum_{\ell=0}^{L_{\epsilon}} \frac{V_{\ell,\epsilon}^{Z} / V_{\ell,\epsilon}}{S_{L_{\epsilon}}/\sqrt{V_{\ell,\epsilon} N_{\ell}} + \epsilon^2 / V_{\ell,\epsilon}} \geq p \sum_{\ell=0}^{L_{\epsilon}} \frac{\sqrt{V_{\ell,\epsilon} N_{\ell}}/S_{L_{\epsilon}}}{1 + \sqrt{N_{\ell} / V_{\ell,\epsilon}}\epsilon^2/S_{L_{\epsilon}}} \ .
\end{align*}
By Assumption \ref{asm:Vl_Sl_order}, $\lim_{\epsilon \rightarrow 0} \sqrt{N_{\ell} / V_{\ell,\epsilon}}\epsilon^2/S_{L_{\epsilon}} \leq \lim_{\epsilon \rightarrow 0} \epsilon^{2 - \gamma/(2\alpha)} /(\sqrt{V_{L_{\epsilon}}}S_{L_{\epsilon}}) < \infty$. Hence, there exists $q < \infty$ such that $q = \max_{\epsilon} \sqrt{N_{L_{\epsilon}} / V_{L_{\epsilon},\epsilon}}\epsilon^2/S_{L_{\epsilon}} $, and the following inequality holds:
\begin{align*}
	\frac{\Var{\zhat}}{\epsilon^2} \geq \frac{p}{1 + q} \sum_{\ell=0}^{L_{\epsilon}}\frac{\sqrt{V_{\ell,\epsilon} N_{\ell}}}{S_{L_{\epsilon}}} = \frac{p}{1+q}.
\end{align*}
Since $p > 0$ and $q > 0$, Assumption \ref{asm:Vl_Sl_order} is a sufficient condition for Assumption \ref{asm:basic_convergence}. The proof is complete.
\end{proof}

\section{Estimating the Integrated Variance via Bootstrapping}\label{app:bootstrap}

\begin{algorithm}
\caption{Estimating the variance of $\DeltaVhatsub{\ell}{\theta, M_{\ell}, \varpi_{\ell}}$ for $\forall \ell \in [L]$ via bootstrapping}\label{alg:bootstrap}
\begin{algorithmic}[1]

\State \textbf{Input:}  The  set of simulation outputs  $\{\cY(\theta, \omega_i), \theta \in \mathcal{T}_{\ell}, i \in [M_{\ell}]^{+}\}$ with the $\omega_i$'s   drawn using the random number stream $\varpi_{\ell}$, the prediction-point set $\mathcal{P}$, and the bootstrap sample size $B$
\State \textbf{Output:} The variance estimator $V_{B}(\theta, M_{\ell}, \varpi_{\ell})$

\State $b \gets 1$; \Comment{Initialize the bootstrap index}
\While{$b \leq B$}
\State	\algmultiline{Randomly and independently draw $M_{\ell}$ observations with replacement from $\{\cY(\theta, \omega_i), i \in [M_{\ell}]^{+}\}$ at each $ \theta \in \mathcal{T}_{\ell}$  and obtain the $b$th  bootstrap sample of outputs $\mathbb{Y}_{b} \coloneqq \{\cY^{\ast}_{i,b}(\theta), \theta \in \mathcal{T}_{\ell}, i \in [M_{\ell}]^{+}\}$; }
\State	\algmultiline{Build the metamodel-based estimators $\Vhatsub{\ell, b}{\theta, M_{\ell}, \varpi_{\ell}}$ and $\Vhatsub{\ell-1, b}{\theta, M_{\ell}, \varpi_{\ell}}$ according to \eqref{eq:metamodel} based on $\mathbb{Y}_{b}$;}
\State	$b \gets b + 1$;
\EndWhile
    
\State	$\Vbarsub{\ell, b}{\theta, M_{\ell}, \varpi_{\ell}} \gets B^{-1}\sum_{b=1}^{B}\widehat{\mathbb{V}}_{\ell, b}(\theta, M_{\ell}, \varpi_{\ell}) $; \Comment{Calculate the bootstrap sample mean of level $\ell$}
\State	$\Vbarsub{\ell-1, b}{\theta, M_{\ell}, \varpi_{\ell}} \gets B^{-1}\sum_{b=1}^{B}\widehat{\mathbb{V}}_{\ell-1, b}(\theta, M_{\ell}, \varpi_{\ell})$; \Comment{Calculate the bootstrap sample mean of level $\ell-1$}
\State	$V_{B}(\theta, M_{\ell}, \varpi_{\ell}) \gets (B-1)^{-1} \sum_{b=1}^{B}\left( \Vhatsub{\ell, b}{\theta, M_{\ell}, \varpi_{\ell}} - \Vhatsub{\ell-1, b}{\theta, M_{\ell}, \varpi_{\ell}} -  \Delta\bar{\mathbb{V}}_{\ell, b}(\theta, M_{\ell}, \varpi_{\ell}) \right)^2$ for $\forall \theta \in \mathcal{P}$, where $\Delta\bar{\mathbb{V}}_{\ell, b}(\theta, M_{\ell}, \varpi_{\ell}) = \Vbarsub{\ell, b}{\theta, M_{\ell}, \varpi_{\ell}} - \Vbarsub{\ell-1, b}{\theta, M_{\ell}, \varpi_{\ell}}$.
\end{algorithmic}
\end{algorithm}

\bibliographystyle{siamplain}

\bibliography{Reference.bib}

\end{document}